%% file: main.tex
\documentclass[10pt,journal]{IEEEtran}
\usepackage[a-2b]{pdfx}
\usepackage{balance}
\usepackage{graphicx}
\usepackage{orcidlink}
\usepackage{subcaption}
\usepackage{array}
\newcolumntype{M}[1]{>{\arraybackslash}m{#1}}
\usepackage{comment}
\usepackage{amsmath}
\usepackage{amsthm}
\usepackage{url}
\usepackage{epsfig}
\usepackage{savesym}
\usepackage{verbatim}
\usepackage{multirow}
\usepackage{epstopdf}
\usepackage{caption}
\usepackage{subcaption}
\usepackage{lipsum}
\usepackage{setspace}
\usepackage{etoolbox}
\usepackage{tikz}
\usepackage{diagbox}
\usepackage{amsfonts}
\usetikzlibrary{automata, arrows, calc}
\usepackage{tkz-graph}

\usepackage{enumitem}
\usepackage{mathrsfs}
\usepackage[show]{chato-notes}
\usepackage{pifont}
\usepackage{makecell}
\usepackage{hyperref}
\usepackage{titlesec}
\usepackage{newtxmath}
\usepackage[T1]{fontenc} 

\usepackage[linesnumbered,algoruled,boxed,lined,noend]{algorithm2e}
\usepackage{breqn}
 
\SetVlineSkip{0pt}

\SetCommentSty{mycommfont}

\usepackage{bm}

\newtheorem{theorem}{Theorem}[section]

\newtheorem{definition}{Definition}[section]
\newtheorem{lemma}[theorem]{Lemma}

\newtheorem{example}{Example}[section]

\newtheorem{property}{Property}

\usepackage{enumitem}
\begin{document}

\input{definition}


\title{Efficient Densest Flow Queries in Transaction \\ Flow Networks (Complete Version)}

\author{
Jiaxin Jiang\orcidlink{0000-0001-8748-3225}, Yunxiang Zhao\orcidlink{0009-0000-3435-0832}, Lyu Xu\orcidlink{0000-0002-8999-0623}, Byron Choi\orcidlink{0000-0002-8381-336X}, \textit{Fellow}, \textit{HKIE} \\ Bingsheng He\orcidlink{0000-0001-8618-4581}, \textit{Fellow}, \textit{IEEE}, Shixuan Sun\orcidlink{0000-0003-4060-9438}, Jia Chen\orcidlink{0009-0003-1174-0063}
\IEEEcompsocitemizethanks{\IEEEcompsocthanksitem Jiaxing Jiang and Bingsheng He are with National University of Singapore, Singapore. E-mail: \{jiangjx,hebs\}@comp.nus.edu.sg
\IEEEcompsocthanksitem Yunxiang Zhao and Byron Choi are with Hong Kong Baptist University\textbf{}, Hong Kong. E-mail: \{csyxzhao, bchoi\}@comp.hkbu.edu.hk
\IEEEcompsocthanksitem Lyu Xu is with Nanyang Technological University, Singapore. Email: lyu.xu@ntu.edu.sg. \protect
\IEEEcompsocthanksitem Shixuan Sun is with Shanghai Jiao Tong University, Shanghai, China. E-mail: sunshixuan@sjtu.edu.cn
\IEEEcompsocthanksitem Jia Chen is with Grab Holdings, Singapore. Email: jia.chen@grab.com \protect \\ 
Corresponding author: Yunxiang Zhao. 
}
}

\maketitle

\input{0-abstract}

\input{1-introduction-v1}

\input{2-preliminary}

\input{2.5-overview.tex}

\input{3-reduction}

\input{4-baseline}

\input{6-greedy.tex}

\input{7-experiments}

\input{8-relatedworks}

\input{9-conclusion}

\bibliographystyle{abbrv}
\bibliography{ref}

\input{bio}

\input{appendix}

\end{document}

%% file: definition.tex

\newcommand{\byronnotes}[1]{\textcolor{blue}{\noindent Note: #1}}
\newcommand{\choi}[1]{\textcolor{red}{#1}}
\newcommand{\jiaxin}[1]{\textcolor{black}{#1}}
\newcommand{\jx}[1]{\textcolor{red}{[[JX: \textbf{#1}]]}}
\newcommand{\revise}[1]{#1}
\newcommand{\tr}[1]{}
\newcommand{\lyu}[1]{\textcolor[rgb]{0.6, 0.4, 0.8}{#1}}
\newcommand{\yunxiang}[1]{#1}
\newcommand{\yxzhao}[1]{\textcolor[rgb]{0.7,0.3,0.9}{#1}}
\newcommand{\TODO}[1]{\textcolor{BurntOrange}{\textbf{TODO:} #1}}
\newcommand{\Remind}[1]{\textcolor{RubineRed}{\textbf{Reminder:}#1}}
\newcommand{\byronsuggestion}[1]{\textcolor{Green}{\textbf{Reminder:}#1}}
\newcommand{\red}[1]{{#1}}
\newcommand{\eat}[1]{}
\newcommand{\kw}[1]{{\ensuremath {\textsf{#1}}}\xspace}
\newenvironment{tbi}{\begin{itemize}
		\setlength{\topsep}{0.6ex}\setlength{\itemsep}{0ex}} 
	{\end{itemize}} 
\newcommand{\ei}{\end{itemize}}

\newcommand{\PRADS}{\textsf{\small PADS}}
\newcommand{\KPADS}{\textsf{\small KPADS}}
\newcommand{\BPADS}{\textsf{\small BPADS}}
\newcommand{\ADS}{\textsf{\small ADS}}
\newcommand{\PKD}{\textsf{\small PKD}}
\newcommand{\FRAMEWORK}{\kw{FRAMEWORK\_NAME}}
\newcommand{\ALGO}{\kw{ALGO\_NAME}}
\newcommand{\PEval}{\kw{PEval}}
\newcommand{\IncEval}{\kw{IncEval}}
\newcommand{\Assemble}{\kw{Assemble}}
\newcommand{\ARef}{\textsf{\small ARefine}}
\newcommand{\ACmpl}{\textsf{\small AComplete}}
\newcommand{\DataGraph}{$G$}
\newcommand{\Ghier}{\mathbb{G}}
\newcommand{\qhier}{\mathbb{Q}}
\newcommand{\dist}{\mathsf{dist}}
\newcommand{\Next}{\mathsf{next}}
\newcommand{\distance}{\mathsf{d}}
\newcommand{\reach}{\textsc{reach}}
\newcommand{\Specialization}{\mathsf{Spec}}
\newcommand{\desc}{\mathsf{desc}}
\newcommand{\support}{\mathsf{sup}}
\newcommand{\distort}{\mathsf{DT}}
\newcommand{\degree}{\mathsf{deg}}
\renewcommand{\equiv}{\mathsf{equiv}}
\newcommand{\equivv}[1]{[#1]_\mathsf{equiv}}
\newcommand{\irchi}[2]{\raisebox{\depth}{$#1\chi$}}
\newcommand{\Summarization}{\mathpalette\irchi\relax}
\newcommand{\Configuration}{C}
\newcommand{\radius}{d_{max}}
\newcommand{\Eval}{\mathsf{eval}}
\newcommand{\peval}{\mathsf{peval}}
\newcommand{\F}{{\cal F}}
\newcommand{\Score}{\mathsf{scr}}
\newcommand{\ie}{\emph{i.e.,}\xspace}
\newcommand{\eg}{\emph{e.g.,}\xspace}
\newcommand{\wrt}{\emph{w.r.t.}\xspace}
\newcommand{\aka}{\emph{a.k.a.}\xspace}
\newcommand{\resp}{\emph{resp.}\xspace}
\newcommand{\stlength}{$l$}
\newcommand{\equi}{\mathsf{equi}}
\newcommand{\sn}{\mathsf{sn}}
\newcommand{\METIS}{\mathsf{\small METIS}}
\newcommand{\Seq}{\mathsf{Seq}}

\newcommand{\maxsf}{\mathsf{max}}
\newcommand{\PIE}{\mathsf{PIE}}
\newcommand{\PINE}{\mathsf{PINE}}

\newcommand{\threshold}{\tau}

\newcommand{\cost}{\mathsf{cost}}
\newcommand{\costq}{\mathsf{cost}_\mathsf{q}}
\newcommand{\CR}{\textsc{VCP}}
\newcommand{\DT}{\mathsf{distort}}
\newcommand{\FP}{\mathsf{fp}}
\newcommand{\maxSAT}{\mathsf{maxSAT}}
\newcommand{\True}{\mathsf{T}}
\newcommand{\False}{\mathsf{F}}
\newcommand{\OptGen}{\mathsf{OptGen}}
\newcommand{\freq}{\mathsf{freq}}

\newcommand{\match}{\mathsf{match}}

\newcommand{\vpd}{V_{pd}}
\newcommand{\vtp}{V_{tbp}}
\newcommand{\BiGindex}{\mathsf{BiG\textnormal{-}index}}
\newcommand{\filter}{\mathsf{filter}}
\newcommand{\ans}{\mathsf{ans\_graph\_gen}}
\newcommand{\Azero}{\mathbb{A}}
\newcommand{\boost}{\mathsf{boost}}
\newcommand{\bkws}{\mathsf{bkws}}
\newcommand{\fkws}{\mathsf{fkws}}
\newcommand{\rkws}{\mathsf{rkws}}
\newcommand{\dkws}{\mathsf{dkws}}
\newcommand{\knk}{\mathsf{knk}}
\newcommand{\ksp}{\mathsf{ksp}}
\newcommand{\config}{\mathsf{config}}
\newcommand{\content}{\mathsf{isKey}}
\newcommand{\pcnt}{\mathsf{pcnt}}
\newcommand{\private}{\textsf{isPrivate}}
\newcommand{\Path}{{\mathcal{P}}}
\renewcommand{\P}{{\mathcal P}}

\newcommand{\C}{\widehat{C}}
\newcommand{\V}{\widehat{V}}
\newcommand{\E}{\widehat{E}}

\newcommand{\ppkws}{\textsf{\small PPKWS}\xspace}

\newcommand{\DKWS}{\textsf{\small DKWS}\xspace}
\newcommand{\kDKWS}{\textsf{\small $k$DKWS}\xspace}
\newcommand{\SKWS}{\textsf{\small BFKWS}\xspace}
\newcommand{\KWS}{\textsf{\small KWS}\xspace}
\newcommand{\VU}{\mathbb{V}}
\newcommand{\VI}{\mathcal{V}}
\newcommand{\VM}{\bar{\mathcal{V}}}
\newcommand{\vsf}{\mathsf{v}}
\newcommand{\usf}{\mathsf{u}}
\newcommand{\answerset}{\mathcal{A}}
\newcommand{\candanswerset}{\bar{\mathcal{A}}}
\newcommand{\prune}{S}
\newcommand{\Ud}{\hat{\dist}}
\newcommand{\Ld}{\check{\dist}}
\newcommand{\invert}{\mathscr{I}}
\newcommand{\MB}{u.b}
\newcommand{\MF}{\mathsf{f}}
\newcommand{\Queue}{\mathcal{P}}
\newcommand{\Visit}{\mathsf{Vis}}
\newcommand{\invertV}{V_{\invert}}
\newcommand{\invertE}{E_{\invert}}
\newcommand{\tnormal}[1]{\textnormal{#1}}

\newcommand{\DKWSBF}{\textsf{\small BF}\xspace}
\newcommand{\DKWSNP}{\textsf{\small BF+PADS+NP}\xspace}
\newcommand{\DKWSPADS}{\textsf{\small BF+PADS}\xspace}
\newcommand{\DKWSPINE}{\textsf{\small BF+ALL}\xspace}
\newcommand{\Notify}{\mathsf{Notify}}
\newcommand{\Parameters}{\mathscr{X}}
\newcommand{\grape}{\kw{GRAPE}}
\newcommand{\Buffer}{\mathbb{B}}
\newcommand{\SI}{\kw{SI}}

\newcommand{\SBGindex}{\mathsf{SBGIndex}}

\newcommand{\SGIndex}{\mathsf{SGIndex}} 

\newcommand{\Portal}{\mathbb{P}}
\newcommand{\rclique}{\mathsf{r\textnormal{-}clique}}
\newcommand{\Blinks}{\mathsf{Blinks}}
\newcommand{\Rclique}{\mathsf{Rclique}}

\newcommand{\pprclique}{\mathsf{PP\textnormal{-}r\textnormal{-}clique}}
\newcommand{\ppknk}{\mathsf{PP\textnormal{-}knk}}
\newcommand{\ppBlinks}{\mathsf{PP\textnormal{-}Blinks}}

\newcommand{\baselinerclique}{\mathsf{Baseline\textnormal{-}r\textnormal{-}clique}}
\newcommand{\baselineknk}{\mathsf{Baseline\textnormal{-}knk}}
\newcommand{\baselineBlinks}{\mathsf{Baseline\textnormal{-}Blinks}}

\newcommand{\stitle}[1]{\vspace{0.4ex}\noindent{\bf #1}}
\newcommand{\etitle}[1]{\vspace{0.8ex}\noindent{\underline{\em #1}}}
\newcommand{\eetitle}[1]{\vspace{0.6ex}\noindent{{\em #1}}}

%
\newcommand{\techreport}[2]{#2}
\newcommand{\SGFrame}{\mathsf{SGFrame}} 

\newcommand{\stab}{\rule{0pt}{8pt}\\[-2.0ex]}
\newcommand{\tab}{\hspace{4ex}}

\newcommand{\Q}{{\cal Q}}



\newcommand{\eop}{\hspace*{\fill}\mbox{\qed}}

\setlength{\floatsep}{0.1\baselineskip plus 0.1\baselineskip minus 0.1\baselineskip}
\setlength{\textfloatsep}{0.1\baselineskip plus 0.1\baselineskip minus 0.1\baselineskip}
\setlength{\intextsep}{0.1\baselineskip plus 0.1\baselineskip minus 0.1\baselineskip}
\setlength{\dbltextfloatsep}{0.1\baselineskip plus 0.05\baselineskip minus 0.05\baselineskip}
\setlength{\dblfloatsep}{0.1\baselineskip plus 0.1\baselineskip minus 0.1\baselineskip}


\titlespacing*{\subsection}{0pt}{0.3\baselineskip}{0.1\baselineskip}


\newcommand{\Src}{S}
\newcommand{\Dst}{T}
\newcommand{\T}{\mathcal{T}}
\newcommand{\SD}{At least $k$ $\mathsf{S}$-$\mathsf{T}$ maximum-flow}
\newcommand{\SDMF}{STDF}
\newcommand{\WCC}{$\mathsf{WCC}$}
\newcommand{\SDMFG}{$\mathsf{kSTMF}^g$}
\newcommand{\TEMSDMF}{\ensuremath{\mathcal{T}\text{-}\mathsf{STDF}}}
\newcommand{\STcore}{$\mathsf{ST}$-FCore}
\newcommand{\Core}{\mathsf{Core}}
\newcommand{\MFlow}{\textnormal{\textsc{MFlow}}}
\newcommand{\Def}{\textnormal{\textsc{Def}}}
\newcommand{\MFavg}{MF}
\newcommand{\DKS}{\mathsf{DkS}}
\newcommand{\CBB}{C^T}
\newcommand{\fT}{f^T}
\newcommand{\tsf}{\mathsf{t}}
\newcommand{\csf}{\mathsf{c}}
\newcommand{\fsf}{\mathsf{f}}
\newcommand{\Tsf}{\mathsf{T}_{max}}
\newcommand{\algo}{\mathsf{algo}}
\newcommand{\FnDense}{\mathsf{FnDense}}
\newcommand{\FnSparse}{\mathsf{FnSparse}}
\newcommand{\Transform}{\widehat{G}}
\newcommand{\TFNet}{$\mathsf{TFN}$}
\newcommand{\RTFNet}{$\mathsf{RTFN}$}
\newcommand{\Baseline}{\mathsf{Baseline}}
\newcommand{\BTF}{$\mathsf{DIN}$-$\mathsf{TF}$}
\newcommand{\BRTF}{$\mathsf{DIN}$-$\mathsf{RTF}$}
\newcommand{\Greedy}{$\mathsf{PEEL}$}
\newcommand{\FCT}{$\mathsf{FCT}$}
\newcommand{\GDYWCC}{$\mathsf{PEEL}$-$\mathsf{DC}$}
\newcommand{\BWCC}{$\mathsf{DC}$}
\newcommand{\Tran}{\mathsf{TR}}
\newcommand{\ts}{\tau}
\newcommand{\name}{\mathsf{Conan}} 
\newcommand{\gfg}{GFG}
\newcommand{\FF}{$\mathsf{Flow}$-$\mathsf{Force}$}
\newcommand{\VPFF}{$\mathsf{Vertex}$ $\mathsf{Pair}$ $\mathsf{Flow}$-$\mathsf{Force}$}
\newcommand{\RDV}{\mathsf{RDV}}
\newcommand{\dego}{\mathsf{deg}^{\mathsf{out}}}
\newcommand{\degi}{\mathsf{deg}^{\mathsf{in}}}
\newcommand{\PF}{\mathsf{PF}}
\newcommand{\LPF}{\mathsf{LPF}}
\newcommand{\DA}{\mathsf{DA}}
\newcommand{\PR}{\mathsf{PR}}
\newcommand{\Aux}{\mathsf{Aux}}
\newcommand{\DF}{\mathsf{DF}}
\newcommand{\DFP}{\mathsf{DF'}}
\newcommand{\DFArray}{\textbf{{DF}}}
\newcommand{\D}{\revise{\mathcal{D}}}
\newcommand{\DNF}{\mathcal{DNF}}
\newcommand{\Grab}{Grab}
\newcommand{\NFT}{NFT}

\newcommand{\Push}{\mathsf{Push}}
\newcommand{\Relabel}{\mathsf{Relabel}}
\newcommand{\h}{h}
\newcommand{\excess}{\mathsf{excess}}
\newcommand{\Spade}{$\mathsf{Spade}$}

\newtheorem{manualtheoreminner}{Theorem}
\newenvironment{manualtheorem}[1]{%
  \renewcommand\themanualtheoreminner{#1}%
  \manualtheoreminner
}{\endmanualtheoreminner}

\newtheorem{manuallemmainner}{Lemma}
\newenvironment{manuallemma}[1]{%
  \renewcommand\themanuallemmainner{#1}%
  \manuallemmainner
}{\endmanualtheoreminner}

\newtheorem{manualpropertyinner}{Property}
\newenvironment{manualproperty}[1]{%
  \renewcommand\themanualpropertyinner{#1}%
  \manualpropertyinner
}{\endmanualpropertyinner}



%% file: 0-abstract.tex
\begin{abstract}
Transaction flow networks are crucial in detecting illicit activities such as wash trading, credit card fraud, cashback arbitrage fraud, and money laundering. \revise{Our collaborator, Grab, a leader in digital payments in Southeast Asia, faces increasingly sophisticated fraud patterns in its transaction flow networks. In industry settings such as Grab’s fraud detection pipeline, identifying fraudulent activities heavily relies on detecting dense flows within transaction networks. Motivated by this practical foundation,} we propose the \emph{\(S\)-\(T\) densest flow} (\SDMF{}) query. Given a transaction flow network \( G \), a source set \( \Src \), a sink set \( \Dst \), and a size threshold \( k \), the query outputs subsets \( \Src' \subseteq \Src \) and \( \Dst' \subseteq \Dst \) such that the maximum flow from \( \Src' \) to \( \Dst' \) is densest, with \(|\Src' \cup \Dst'| \geq k\). Recognizing the NP-hardness of the \SDMF{} query, we develop an efficient divide-and-conquer algorithm, $\name$. \revise{Driven by industry needs for scalable and efficient solutions}, we introduce an approximate flow-peeling algorithm to optimize the performance of $\name$, enhancing its efficiency in processing large transaction networks. \revise{Our approach has been integrated into Grab's fraud detection scenario, resulting in significant improvements in identifying fraudulent activities.} Experiments show that $\name$ outperforms baseline methods by up to three orders of magnitude in runtime and more effectively identifies the densest flows. We showcase $\name$'s applications in fraud detection on transaction flow networks from our industry partner, Grab, and on non-fungible tokens (NFTs).
\end{abstract}
\begin{IEEEkeywords}
Graph Anomaly Detection, Densest Flow Query
\end{IEEEkeywords}

%% file: 1-introduction-v1.tex
\section{Introduction}\label{sec:intro}

\IEEEPARstart{T}{ransaction} flow networks, such as Bitcoin networks ~\cite{wu2020detecting} and Ethereum networks~\cite{wood2014ethereum,luo2025rich}, have become increasingly prevalent in various applications, including e-payment systems~\cite{cao2019titant} and cryptocurrency exchanges~\cite{kondor2018principal}. These networks are vulnerable to exploitation by fraudsters for illicit activities, including wash trading~\cite{cong2022crypto}, credit card fraud~\cite{delamaire2009credit}, and money laundering~\cite{weber2019anti,hu2019characterizing,moser2013inquiry,colladon2017using}. In industry, detecting such fraudulent activities is a critical and challenging task due to the scale and complexity of transaction networks. Our industry partner, Grab, has encountered sophisticated fraud patterns that traditional methods fail to detect efficiently. \jiaxin{Beyond the Grab setting, dense and temporally constrained fund-flow patterns also serve as a core analytic primitive in anti–money-laundering (AML) systems~\cite{shadrooh2024smotef,wu5076472shadows}, NFT wash-trading detection~\cite{song2023abnormal,tovsic2025beyond}, credit-card mule-ring identification~\cite{Modepalli2025FraudRings,li2020flowscope}, and cross-border layering analysis~\cite{huang2024graph}. Unlike machine-learning classifiers~\cite{di2024amatriciana,dou2020enhancing} that operate on individual nodes, these applications require reasoning about global multi-source--multi-sink flow concentration patterns, which motivates a principled graph-analytic formulation.} To address the industry challenges, we propose a novel densest flow query that is designed to enhance fraud detection capabilities in practical applications. We first demonstrate the query with the example of money laundering detection.

\stitle{Query for Money Laundering.} Money laundering, a financial crime, unfolds in three phases: placement, layering, and integration~\cite{villanyi2021money,soudijn2012removing,kute2021deep,li2020flowscope}. It commences with \textit{source accounts} receiving illicit funds, which are subsequently transferred to \textit{sink accounts} (\aka destination accounts) to obscure their illegal origins. A critical element in the layering phase is the extensive use of money mules—individuals often in financially vulnerable situations—who conduct numerous transactions to help conceal the funds' transfer~\cite{villanyi2021money}. Despite existing methods like~\cite{lyu25deltabflow} that monitor short-term, high-volume flows between specific source and sink pairs, significant gaps remain when compared to real-world industry scenarios. \revise{In industry, dense flow analysis has become a widely used approach for identifying fraudulent patterns, focusing on the concentration of transaction flow within a group rather than individual source-sink pairs. This approach shifts the focus from isolated transactions to detecting suspicious activity among interconnected accounts, enabling the identification and differentiation of fraudsters within groups of suspect accounts.} These intermediaries, while moving large sums of money, create an intricate network of transactions that significantly obscures the trail of the original illicit funds. Fraudulent activities often involve substantial transfers within a few accounts, resulting in a pattern of dense financial flows~\cite{li2020flowscope}. We refine the concept of density as detailed in~\cite{jiang2023spade,hooi2016fraudar,shin2017densealert} to define flow density. In particular, the density of flow from a set of sources to a set of sinks is determined by dividing the flow value by the number of sources and sinks. The occurrence of dense flow among specific sources and sinks is not limited to money laundering; it is also prevalent in other fraudulent activities, such as wash trading and credit card fraud, within transaction flow networks~\cite{serneels2023detecting,gao2020tracking}. \jiaxin{Such dense multi-hop fund movements cannot be captured by node-level or edge-level ML/GNN classifiers, which do not model the temporally ordered, multi-source--multi-sink propagation of value. Detecting the densest temporally valid flow therefore constitutes a structural, flow-topological problem rather than a predictive ML task, and motivates the formulation of the \SDMF{} query.}

\stitle{Fraud Detection Pipelines in Grab.} Detecting fraudulent actors in transaction flow networks is a critical challenge for both academia and industry. \revise{Densest flow queries have become essential tools for identifying and isolating fraudulent activity patterns~\cite{li2020flowscope,tariq2023topology,gao2020tracking,granados2022geometry}.} Grab's fraud detection pipeline, illustrated in Figure~\ref{fig:intro}, consists of the following steps: (a) \emph{Network Construction}—the transaction flow network \(G\) where vertices represent users, merchants, or cards, and edges represent transactions with capacities reflecting transaction amounts. \revise{Risk control specialists identify sets of accounts exhibiting suspicious behavior, such as abnormal transaction patterns or significant outgoing or incoming flows (labeled as $\Src$ and $\Dst$, respectively);} (b) \emph{Temporal Dependency}—analysis of temporal dependencies to capture evolving fraud patterns; (c) \emph{Densest Flow Detection}—identification of dense flow patterns indicative of fraudulent behavior using densest flow queries; and (d) \emph{Applications and Operational Requirements}—various fraud detection applications such as credit card fraud, wash trading, and cashback arbitrage fraud, along with operational needs like scalability and response time. Similar pipelines are employed by other organizations~\cite{li2020flowscope,ye2021gpu,jiang2023spade}.

    \begin{figure*}[tb]
        \includegraphics[width=1\linewidth]{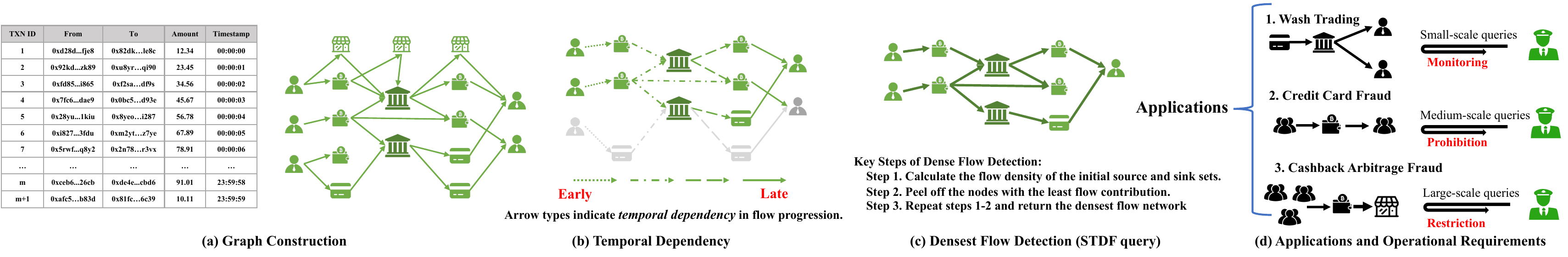}
        \vspace{-1.5em}
        \caption{Densest Flow Query on Transaction Flow Networks and Example Applications in Grab}\label{fig:intro}
    \end{figure*}

Motivated by this fraud detection pipeline, we propose the \textit{$S$-$T$ densest flow} (\SDMF{}) query, specifically designed to enhance fraud detection in transaction flow networks. This query addresses the unique challenges posed by such networks. Given disjoint sets $\Src$ and $\Dst$ representing sources and sinks, \SDMF{} seeks subsets $\Src'\subseteq \Src$ and $\Dst'\subseteq \Dst$ that maximize flow density while ensuring that the combined size of these subsets meets or exceeds a minimum threshold \( k \) (i.e., \( |\Src'| + |\Dst'| \geq k \)). This criterion is crucial, as it provides flexibility to detect both key players and peripheral members of fraud organizations, enhancing robustness against evasion tactics like creating excess accounts to dilute flows. Different applications vary in scale: for example, wash trading typically involves smaller groups, credit card fraud operations are medium-sized, and cashback arbitrage fraud schemes often involve a large number of accounts. By adjusting the minimum size threshold \( k \), \SDMF{} can detect fraudulent groups across these varying scales.

\stitle{Challenges.} Addressing \SDMF{} queries presents two major challenges. \textit{First}, these queries are NP-hard (Section~\ref{sec-pre}). A naïve approach would require enumerating all possible combinations of group sizes for $S$ and $T$ to find the optimal subsets of $\Src$ and $\Dst$, involving numerous maximum flow computations. However, maximum flow algorithms themselves have high computational complexities. On large-scale graphs like the Ethereum transaction network, which processed $11.7$ million transactions per month in 2021~\cite{wood2014ethereum}, frequently invoking such computationally intensive algorithms is impractical. \textit{Second}, transaction flow networks, especially those involving fraudulent activities, exhibit unique temporal dependencies. Fraudulent flows often occur in specific temporal patterns. Traditional maximum flow algorithms cannot capture these temporal dependencies. Therefore, computing maximum flows requires novel algorithms that account for temporal constraints, adding another layer of complexity to the problem.

This paper presents several key contributions to the field of fraud detection in transaction networks, including:
\begin{enumerate}[leftmargin=*] \item \textbf{Novel Query Formulation}: We introduce the \SDMF{} query, motivated by real-world fraud detection scenarios~\cite{li2020flowscope,tariq2023topology,ye2021gpu} and the needs of our industry partner to identify dense flows in transaction networks.
\item \textbf{Divide-and-Conquer Algorithm}: We develop a divide-and-\textbf{CON}quer \textbf{A}lgorithm for de\textbf{N}sest flow queries ($\name$), that efficiently enumerates subsets of $\Src$ and $\Dst$ to solve densest flow queries.
\item \textbf{Peeling Algorithm with Theoretical Guarantee}: We propose a peeling algorithm with a 3-approximation theoretical guarantee, significantly reducing the number of maximum flow computations to $(|\Src|+|\Dst|)^2$ instances.
\item \textbf{Pruning Techniques}: We introduce pruning methods to optimize the performance of the 3-approximation algorithm, further enhancing computational efficiency.
\item \revise{\textbf{Industrial Deployment and Impact}}: \revise{Our methods have been deployed in Grab's detection pipeline, leading to the discovery of significant fraudulent activities, including credit card fraud and cashback arbitrage fraud. Moreover, $\name{}$ has shown substantial impact in public applications; for example, it detects fraudulent behaviors such as wash trading and money laundering in NFT networks.}
\end{enumerate}

%% file: 2-preliminary.tex

\section{Background}\label{sec-pre}

\subsection{Preliminaries}\label{subsec:prelim}

A \textbf{flow network} $G=(V,E,C)$ is a directed graph, where (a)~$V$ is a set of vertices, with each vertex $v\in V$ representing an account; (b)~$E$ $\subseteq V\times V$ is a set of edges, with $(u,v)\in E$ representing a transaction\footnote{In the case where multiple edges exist between two vertices, exemplified by $\{e_1,\ldots, e_i\}$ between vertices $u$ and $v$, various strategies can be adopted to simplify the graph. One approach is the introduction of an intermediate node for each transaction, effectively transforming graphs with multiple edges into simplified versions without them. Thus, we assume the graphs considered in this study do not contain multiple edges.}; and (c)~$C$ is a non-negative capacity mapping function, such that $C(u,v)$ is the capacity on edge $(u,v)\in E$, which is the transaction volume. For simplicity, self-loops in the flow network are not considered.

\begin{definition}[Flow]\label{def:flow}
Given a flow network $G=(V,E,C)$ with a source $s \in V$ and a sink $t \in V$, a flow from $s$ to $t$ is a mapping function $f: V \times V \to \mathbb{R}^{+}$ satisfying the following:
\begin{enumerate}
	\item \stitle{Capacity Constraint:} \textnormal{$\forall (u,v) \in E$, the flow $f(u,v)$ does not exceed $C(u,v)$, \ie $f(u,v) \leq C(u,v)$.}
	\item \stitle{Flow Conservation:} \textnormal{$\forall u \in V \setminus \{s,t\}$, the flow into $u$ equals the flow out of $u$, \ie $\sum_{v\in V} f(v,u) = \sum_{v\in V} f(u,v)$.} 
\end{enumerate}	
\end{definition}

\eat{
\begin{definition}[Flow]\label{def:flow}
Given a flow network $G=(V,E,C)$ with a source $s\in V$ and a sink $t\in V$, a flow from $s$ to $t$ is a mapping function, $f: V\times V\to \mathbb{R}^{+}$, that satisfies the following two properties:
\begin{enumerate}
	\item\label{flow:cod1} \stitle{Capacity constraint:} $\forall u,v\in V$, \red{$(u,v)\in E$}, $f(u,v) \leq C(u,v)$;
	\item\label{flow:cod2} \stitle{Flow conservation:} $\forall u\in V-\{s,t\}$, $\sum\limits_{v\in V} f(v,u) = \sum\limits_{v\in V} f(u,v)$. 
\end{enumerate}	
\end{definition}
}

The {\em value of a flow} $f$ from $s$ to $t$ is denoted by $|f| = \sum_{v\in V}f(s,v)$. Given a flow network $G$, a source $s$ and a sink $t$, the maximum flow problem aims to find a flow $f$ such that $|f|$ is maximized and the value is denoted by $\MFlow(s,t)$. The maximum flow problem can be expanded to accommodate multiple sources, $\Src$, and multiple sinks, $\Dst$, with the flow value represented by $\MFlow(\Src, \Dst)$. To approach this problem, it is known that we can introduce a super source $s'$, connecting it to each source $s_i \in S$ with an edge $(s', s_i)$ of an infinite capacity $C(s', s_i) = +\infty$. Similarly, a super sink $t'$ is created and linked to each sink $t_i \in T$ via an edge $(t_i, t')$, with an infinite capacity $C(t_i, t') = +\infty$. This ensures that $\MFlow(s', t') = \MFlow(\Src, \Dst)$~\cite{ahuja1993network}.

\eat{
\stitle{Multi-source multi-sink maximum flow.} A maximum flow problem can be extended for multiple sources $\Src$ and multiple sinks $\Dst$, denoted by $\MFlow(\Src, \Dst)$. To solve this problem, we create a super source $s'$ and add an edge $(s',s_i)$ for each $s_i\in S$ with a capacity $C(s', s_i) = +\infty$. Additionally, we create a super sink $t'$ and add an edge $(t_i,t')$ for each $t_i\in T$ with a capacity $C(t_i, t')=+\infty$. It has been proven that $\MFlow(\Src,\Dst) = \MFlow(s',t')$~\cite{ahuja1993network}.
}

Existing studies~\cite{jiang2023spade,hooi2016fraudar,shin2017densealert} overlook temporal dependency, which limits the effectiveness. Here, we extend the definitions to capture temporal dependency.

\stitle{Timestamp.} A transaction, represented as an edge, is naturally associated with an occurrence timestamp, denoted by $\ts$. Here, $\ts$ indicates the chronological order of transactions, where larger values of $\ts$ indicate more recent transactions. Then, \jiaxin{\textbf{T}ransaction \textbf{F}low \textbf{N}etwork (\TFNet{})} is formalized as follows:

\begin{definition}[\TFNet{}]\label{def:tem-flow}
A \TFNet{} $G=(V,E,C,\T)$ is a directed graph where: (a)~$V$, $E$, and $C$ retain their definitions from the flow network, and (b) $\T{}: E \to \mathbb{Z}^+$ is a timestamp mapping function assigning a timestamp to each edge $e=(u,v) \in E$, indicating the moment the transaction occurred, denoted by $\T(e)$ or $\T(u,v)$.
\end{definition}

\noindent\jiaxin{\stitle{Temporal flow.} In a \TFNet{}, a temporal flow from $s$ to $t$ is a standard flow that additionally respects \emph{temporal dependency}: for every intermediate vertex $u$ and every time $\ts$, the total amount of money that has \emph{entered} $u$ up to time $\ts$ is at least the total amount that has \emph{left} $u$ by time $\ts$. Intuitively, an account cannot spend funds \emph{before} it receives them.}

\jiaxin{Formally, we use the same capacity and flow-conservation conditions as in the static case, and enforce the above cumulative in–out constraint at every vertex and time point. For brevity, we denote the maximum temporal flow value from $s$ to $t$ by $\MFlow(s,t)$. Figure~\ref{fig:tem-network} illustrates how the temporal constraint can drastically reduce the feasible flow compared to the static case.
}



\begin{figure}[tb]
	\begin{center}
	\includegraphics[width=\linewidth]{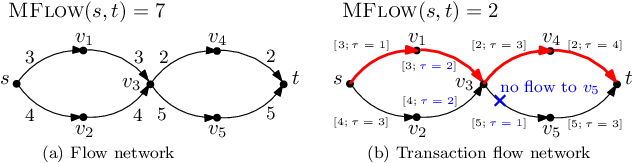}
		\end{center}
  \vspace{-1em}
	\caption{\jiaxin{A transaction flow network}}
	\label{fig:tem-network}
\end{figure}

\begin{example}
Consider the network in Figure~\ref{fig:tem-network}(a), the maximum flow value $\MFlow(s,t) = 7$. In Figure~\ref{fig:tem-network}(b), each edge is annotated with a timestamp. Referring to Definition~\ref{def:tem-flow}, observe that no flow is permissible from $v_1$ to $v_5$ via $v_3$ as the transaction from $v_1$ to $v_3$ occurs after the transaction from $v_3$ to $v_5$ ($\T(v_1, v_3) > \T(v_3,v_5)$), thereby breaching the temporal flow constraint at $\tau=1$. A similar temporal discrepancy prevents flow from $v_2$ to $v_5$ through $v_3$. The only viable temporal flow paths from $s$ to $t$ are highlighted in red, resulting in a maximum temporal flow value, \textnormal{\MFlow}$(s,t)=2$.
\end{example}

\subsection{Problem Statement}\label{subsec:background}

\begin{figure}[tb]
\centering
    \includegraphics[width=\linewidth]{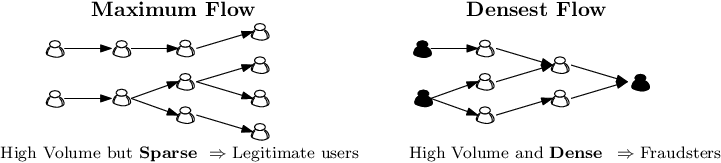}
    \caption{\jiaxin{Difference between Maximum Flow and Densest Flow}}\label{fig:densevsmax}
\end{figure}

\jiaxin{While many accounts with high-flow volumes, as depicted in Figure~\ref{fig:densevsmax}, may belong to legitimate institutions or active users, relying solely on maximum flow analysis is insufficient to distinguish fraudulent users from these legitimate entities. Previous studies have indicated that fraudulent activities often involve large-volume transfers within a small, concentrated group of participants, thereby creating a pattern of dense financial flows~\cite{li2020flowscope}. Such a pattern underscores the need for a more nuanced metric that captures not just the volume, but also the density of these flows.}

\stitle{Flow Density.} Given a set of sources $\Src$ and a set of sinks $\Dst$, the flow density from $\Src$ to $\Dst$ is defined as the maximum flow value divided by the sum of the sizes of $\Src$ and $\Dst$~\cite{jiang2023spade,hooi2016fraudar,shin2017densealert}, specifically \( g(\Src,\Dst) = \frac{\MFlow(\Src, \Dst)}{|\Src| + |\Dst|} \).

\stitle{$\mathsf{S}$-$\mathsf{T}$ Densest Flow ({\SDMF}).} Given a flow network $G$, a set $\Src$ of sources, a set $\Dst$ of sinks, and an integer $k$, \SDMF{} is to find a subset $\Src'\subseteq\Src$ and a subset $\Dst'\subseteq\Dst$, such that $|\Src'|+|\Dst'|\geq k$, and maximize the flow density, $g(\Src', \Dst')$. 

\stitle{The \SDMF{} Decision Problem.} Given a flow network $G$, a set $\Src$ of sources and a set $\Dst$ of sinks, and a parameter $c$, is there a subset $\Src'\subseteq\Src$ and a subset $\Dst'\subseteq\Dst$ such that $|\Src'|+|\Dst'| \geq k$ and $g(\Src', \Dst') \geq c$? The \SDMF{} decision problem is proven NP-complete in Lemma~\ref{lemma:nphard}, thus making the optimization problem of \SDMF{} NP-hard.  \eat{The proof can be found in Appendix~\ref{sec:proof} of \cite{techreport}, due to space limitations.}

\begin{lemma}\label{lemma:nphard}
	\SDMF{} decision problem is NP-complete.
\end{lemma}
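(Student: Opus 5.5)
Membership in NP and NP-hardness are handled separately; hardness comes from a reduction from densest $k$-subgraph ($\DKS$) or, more precisely, from its decision version with an ``at least $k$ vertices'' constraint.

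\textbf{Membership in NP.} The certificate is the pair $(\Src',\Dst')$. A verifier checks $|\Src'|+|\Dst'|\geq k$ and computes $\MFlow(\Src',\Dst')$ in polynomial time. For static networks, add a super source and super sink and run any polynomial max-flow algorithm, as described in the preliminaries. For temporal flows, it suffices to observe that the maximum temporal flow is an LP: the variables are edge flows, and the constraints are capacities, conservation, and one cumulative in--out inequality per vertex and distinct timestamp. There are polynomially many such constraints, so the LP is solvable in polynomial time. Finally the verifier compares $\MFlow(\Src',\Dst')$ with $c(|\Src'|+|\Dst'|)$.

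\textbf{Hardness.} I would reduce from the problem: given an undirected graph $H=(V_H,E_H)$ and integers $k,c$, is there $U\subseteq V_H$ with $|U|\geq k$ and $|E_H(U)|/|U|\geq c$? This ``at least $k$'' variant is known to be NP-complete, and it follows from $\DKS$ because a densest subgraph on at least $k$ vertices can be pruned to exactly $k$ vertices without lowering the density below the appropriate threshold. This needs a little care, so I might reduce from $k$-clique instead, where density $(k-1)/2$ forces a clique.

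The construction builds a flow network $G$ in which every vertex $u\in V_H$ appears twice, as a source $s_u\in\Src$ and as a sink $t_u\in\Dst$. The difficulty is that choosing $\Src'$ and $\Dst'$ independently must encode a single vertex set $U$, and the flow must count edges inside $U$. So the plan is a bipartite-style gadget: for each edge $e=\{u,v\}$, create an intermediate vertex $x_e$ with arcs $s_u\to x_e$ and $s_v\to x_e$ of capacity $1$, and an arc $x_e\to t_e$ of capacity $1$.

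That version makes the sinks edge-vertices, which would turn the problem into a bipartite source--sink density question. I would instead take the sinks to be edge vertices $t_e$ with appropriate weighting. Flow into $t_e$ requires both endpoints selected if each endpoint contributes $1/2$: give the arcs $s_u\to t_e$ capacity $1/2$, so $t_e$ receives its full $1$ only when both $s_u,s_v\in\Src'$.

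The obstacle is the size term, since the sinks count toward $|\Src'|+|\Dst'|$. To neutralize it, I would scale capacities by a large factor $M$ and pad each sink so that selecting a sink pays off only if it is fully fed. With the denominator now including $|E_H(U)|$, I would set the threshold $c'$ so that density $\geq c'$ holds iff the clique condition holds; the $k$-clique instance makes this exact, since only a clique achieves $\binom{k}{2}$ edges on $k$ vertices. All timestamps are set equal, or increasing along paths, so temporal constraints are inactive.

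The key step, and the main difficulty I expect, is proving the reverse direction: any $(\Src',\Dst')$ achieving $g\geq c'$ yields a clique. I would argue by showing that an optimal solution may be assumed to include only fully-fed sinks and that the parameters can be chosen so that such solutions exist only for cliques. The parameter tuning ($M$, $c'$, and the size bound $k'=k+\binom{k}{2}$) must be verified by a careful inequality argument.
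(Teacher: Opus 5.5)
Your NP-membership argument is fine; for the static \SDMF{} decision problem a single max-flow computation with a super source and super sink already suffices. The hardness part has a genuine gap, and it lies in the reverse direction you flagged.

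Take your edge-sink gadget (arcs $s_u\to t_e$ of capacity $M/2$) and choose $S'=\{s_u: u\in U\}$ and $T'=\{t_e: e\in E_H(U)\}$. This gives $g(S',T')=M|E_H(U)|/(|U|+|E_H(U)|)$. That value is at least your clique threshold $c'=M\binom{k}{2}/(k+\binom{k}{2})$ exactly when $|E_H(U)|/|U|\ge (k-1)/2$. Since $|S'|+|T'|\ge k+\binom{k}{2}$ is only a lower bound, every large enough $U$ of average degree $k-1$ passes the test. For example, $H=C_4$ with $k=3$ has no triangle, yet all four vertices plus all four edges give size $8\ge 6$ and density $M/2=c'$. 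More generally, any $(k-1)$-regular bipartite graph is a counterexample.

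The fact that only a clique has $\binom{k}{2}$ edges on $k$ vertices needs \emph{exactly} $k$ vertices, and an at-least constraint cannot enforce that. Scaling all capacities by $M$ multiplies every density by the same factor, so it changes nothing, and the padding is never specified. Falling back on the at-least-$k$ problem does not rescue the gadget either. The size bound now counts edge-sinks, including half-fed ones that cost little to add, so $|S'|+|T'|\ge k'$ no longer matches $|U|\ge k$. Separately, the claim that the at-least-$k$ variant follows from $\DKS$ by pruning does not hold, because pruning can strictly lower density. In $C_4$, every $3$-vertex subset has density $2/3$ while the whole graph has density $1$.

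The missing idea is that you never need to encode a single vertex set of a general graph. The at-least-$k$ densest subgraph decision problem is already NP-complete on bipartite graphs, and that is what the paper reduces from. Given bipartite $H=(U\cup V,E)$, set $S=U$ and $T=V$, and add an arc $u\to v$ of capacity $1$ for each edge. Vertices of $S\setminus S'$ have no in-arcs and vertices of $T\setminus T'$ have no out-arcs. So every unit of flow uses a single arc from $S'$ to $T'$, and $\MFlow(S',T')$ equals the number of edges between $S'$ and $T'$. Meanwhile $|S'|+|T'|$ is exactly the number of chosen vertices. Hence $g(S',T')$ is the induced-subgraph density, and the same $k$ and $c$ carry over unchanged. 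There are no edge-sinks, no denominator mismatch and no exactness issue. The ``bipartite source--sink density question'' you set aside is precisely the route that works. Choosing $S'$ and $T'$ independently is harmless, because it just means choosing the two sides of a bipartite subgraph independently.
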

\begin{proof}
	{\it sketch}. The proof is based on a reduction from the decision problem of the densest at least $k$ subgraph ($\DKS$) (cf. \cite{feige1997densest}). The proof is presented in~\cite{techreport}, due to space limitations.
\end{proof}

Building upon \SDMF{} and incorporating the temporal flow constraint, we introduce \TEMSDMF{}. Notably, \SDMF{} represents a particular case of \TEMSDMF{}, where the timestamp $\ts$ for all edges in the \TFNet{} is set to~$1$.

\eat{
We can expand the \SDMF{} concept to encompass the \TFNet{}, incorporating the temporal flow constraint to address temporal dependency. This extended version is denoted as \TEMSDMF{}. By assigning a value of $\ts_i=1$ to each edge in the \TFNet{}, we essentially transform it into a standard flow network, making \SDMF{} a specific case of \TEMSDMF{}. Hence, \TEMSDMF{} is also NP-hard. }

\tr{
\begin{lemma}
	\TEMSDMF{} is NP-hard. 
\end{lemma}
}
\tr{
\begin{proof}
\SDMF{} is an instance of \TEMSDMF{}. Since \SDMF{} is NP-hard, \TEMSDMF{} is NP-hard.
\end{proof}
}

\stitle{Problem Statement.} Given a \TFNet{} $G=(V,E,C,\T)$, and a \TEMSDMF{} query $Q=(\Src, \Dst, k)$, the answer of $Q$, denoted as $Q$($G$) = $(f, (\Src', \Dst'))$ of $Q$, where $\Src'\subseteq\Src$ and $\Dst'\subseteq\Dst$ is a pair of proper subsets that maximizes the density $g(\Src',\Dst')$ on $G$, and $f$ is the maximum flow from $\Src'$ to $\Dst'$. This paper aims to determine the answer of a given query, $Q$($G$).

\subsection{\jiaxin{Modeling Rationale}}\label{sec:rational}

\jiaxin{\stitle{Flow-based density.} Unlike classical edge-density~\cite{jiang2023spade,hooi2016fraudar,shin2017densealert} measures that rely solely on the sum of edge weights, transaction networks naturally obey flow capacity, flow conservation, and temporal ordering constraints. Ignoring these constraints may treat non-executable or duplicated transfers as dense patterns. Using $\MFlow(S,T)$ ensures that the detected density reflects the \emph{actual realizable amount of money} transferable from $S$ to $T$, which is essential for modeling money laundering chains, wash trading loops, and multi-hop smurfing structures observed in practice.}

\noindent \jiaxin{\textbf{Normalization choices.}
A geometric-mean normalization such as 
$\MFlow(S,T)/\sqrt{|S||T|}$ is mathematically dominated by the smaller of $|S|$ and $|T|$, which artificially inflates the density of highly unbalanced groups—a common pattern in fraud (e.g., few sources but many mule accounts). In contrast, the arithmetic term $|S|+|T|$ penalizes the two sides proportionally, yielding a more stable and size-consistent measure that reflects the collective participation of all accounts. This linear scaling also aligns with prior fraud-detection formulations~\cite{jiang2023spade,hooi2016fraudar,shin2017densealert}, making it appropriate for transaction-flow networks.
}

\noindent \jiaxin{\noindent\textbf{Why a transaction-flow network (\TFNet{}) and compatibility with multilayer models.} Although real payment ecosystems may involve multiple operational layers (e.g., cards, wallets, merchants), industrial fraud-monitoring systems (including Grab’s) standardize all records into a unified account-to-account transaction log.  The \TFNet{} abstraction therefore preserves exactly the information required for computing temporal maximum flows. Importantly, our formulation is fully \emph{compatible} with multilayer representations: a multilayer network can be flattened to a \TFNet{} by treating cross-layer edges as ordinary directed edges, or conversely, a \TFNet{} can be lifted to a multilayer model without affecting the temporal-flow constraints or the definition of the \SDMF{} objective. Thus, using a \TFNet{} does not restrict generality, while avoiding the additional semantic and computational complexity introduced by explicit multilayer models.
}

\noindent \jiaxin{ \stitle{Why $S$ and $T$ are not partitions.} In fraud scenarios, $S$ and $T$ represent accounts with net outgoing and net incoming behavior, respectively, rather than a graph bipartition. This directional interpretation is consistent with real fraudulent workflows (source $\rightarrow$ mule layers $\rightarrow$ sink), making it more appropriate than using $(S,T)$ as a generic cut.}

\noindent \jiaxin{ \stitle{Size constraints.} In fraud scenarios, the ratio between sources and sinks is highly unpredictable: a credit-card fraud case may have only a few sources but many receiving mule accounts, whereas cashback-arbitrage schemes may involve large and fluctuating groups on both sides. Because such proportions are unknown \emph{a priori}, setting two independent thresholds $(k_1,k_2)$ is operationally impractical—there is no reliable way for investigators to determine appropriate values. In contrast, a single size constraint $|S'|+|T'|\ge k$ reflects the requirement that a fraud group must contain at least $k$ participants while allowing the algorithm to automatically choose the most meaningful split between $S'$ and $T'$. This makes the formulation practical across heterogeneous fraud types.
}

%% file: 2.5-overview.tex
\vspace{-1em}
\section{Solution Overview of $\name$}\label{sec:overview}

To evaluate the \TEMSDMF{} query, we propose a two-stage solution called $\name$ (see Figure~\ref{fig:overview}). \revise{Our solution is driven by practical industry insights from fraud detection at Grab.}

\etitle{Observation 1.} Fraudulent activities exhibit temporal dependencies that differ from normal user behavior patterns. For example, in credit card fraud, fraudsters transfer funds following specific time sequences. Considering temporal dependencies increases detection accuracy from 64.93\% to 82.39\%.

\stitle{Stage 1: Network Transformation.}  $\name$ comprises a lightweight network transformation technique for evaluating \TEMSDMF{} on the transformed networks, called \RTFNet{} (detailed in Section~\ref{sec:preprocess}). This technique ensures the correctness of the maximum temporal flow and enables transforming the \TEMSDMF{} query into a \SDMF{} query (Figure~\ref{fig:overview}). It is proven that any maximum flow algorithms can be applied (without modification) to the transformed network to obtain maximum temporal flows. The answers to the \TEMSDMF{} query can be simply derived from the \RTFNet{} and the answer to the corresponding \SDMF{} query.

\etitle{Observation 2.} In scenarios like cashback arbitrage fraud, transaction flows within certain groups are extremely dense. At Grab, executing a dense flow query can take several hours, delaying fraud detection and allowing fraudulent activities to continue undetected. Efficient algorithms are essential to process these dense flows promptly.

\stitle{Stage 2: Query Evaluation.} Given a \SDMF{} query, $Q=(\Src, \Dst, k)$, $\name$ evaluates the query in a divide-and-conquer algorithm. Firstly, $\name$ divides $\Src$ (resp. $\Dst$) into subsets $\Src_i$ (resp. $\Dst_i$) \eat{and searches for the dense flow in $\Src$ and $\Dst$.} \eat{computes dense flows in each $\Src_i$ and $\Dst_i$ }(Section~\ref{sec:dec}). Then, $\name$ peels the vertices recursively in each $\Src_i$ and $\Dst_i$ (Section~\ref{sec:appr}). After finding the 3-approximate densest flows from $\Src_i$ to $\Dst_i$, $\name$ combines them to obtain the global 3-approximate densest flow.

\begin{figure}[tb]
	\begin{center}
	\includegraphics[width=0.35\textwidth]{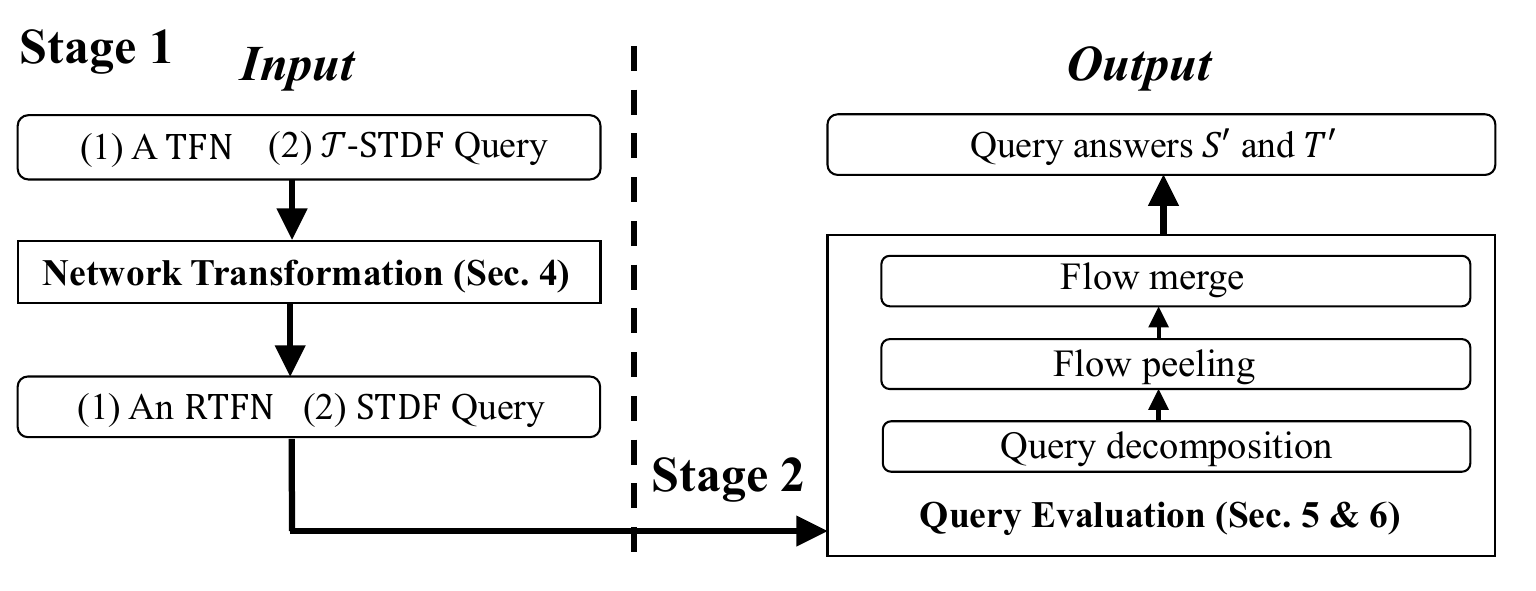}
		\end{center}
  \vspace{-1em}
	\caption{A two-stage solution overview of $\name$}
	\label{fig:overview}
\end{figure}

%% file: 3-reduction.tex
\section{Baseline for Maximum Temporal Flow}\label{sec:preprocess}

\jiaxin{Existing work on maximum flow falls into two categories, neither of which applies
to the \SDMF{} setting.}

\noindent \jiaxin{\textit{(1) Classical static max-flow algorithms.} Algorithms such as augmenting-path methods~\cite{ford1956maximal}, push--relabel~\cite{cherkassky1997implementing}, and  pseudoflow~\cite{hochbaum2008pseudoflow} operate on static networks and rely on unrestricted flow reversal in the residual graph. Such reversal violates the temporal-dependency, and therefore these algorithms cannot compute correct temporal flows (cf. Fig.~\ref{fig:augmenting_paths}).}

\noindent  \jiaxin{\textit{(2) Temporal-flow formulations.} Prior temporal-flow studies are based on problem definitions fundamentally different from ours. Earliest-arrival and transit-time models~\cite{gale1959transient,wilkinson1971algorithm,schmidt2014earliest,skutella2009introduction} optimize arrival time under traversal delays and do not enforce the “funds-must-arrive-before-they-can-leave’’ causality constraint. Similarly, models with time-varying or ephemeral edge availability~\cite{hamacher2003earliest,akrida2019temporal} assume intermittent edge activation but do not require cumulative temporal feasibility at intermediate vertices. Our temporal flow, in contrast, requires monotone non-decreasing timestamps on any feasible augmenting path and regret-enabled flow reversal to preserve such causality—constraints absent from all prior formulations. Because these methods also operate only in the single-source–single-sink, single-evaluation setting, they cannot be directly applied to \SDMF{}.}

\jiaxin{These limitations motivate our regret-enabled transformation, which converts a
\TFNet{} into a static network that preserves temporal consistency and allows
any high-performance max-flow algorithm to be used correctly.}



\stitle{Residual Network.} Given a \TFNet{} $G=(V,E,C,\T)$ and a flow $f$, the residual network is denoted as $G_f=(V_f,E_f,C_f, \T_f)$. The residual capacities $C_f$ and timestamps $\T_f$ for $(u,v) \in V \times V$ are defined as:
\begin{footnotesize}
\begin{equation}
    \begin{aligned}
        \left\langle C_f(u,v), \T_f(u,v) \right\rangle &= 
        \begin{cases}
            \left\langle C(u,v) - f(u,v), \T(u,v) \right\rangle & \text{if } (u,v)\in E \\
            \left\langle f(v,u), \T(v,u) \right\rangle & \text{if } (v,u)\in E \\
            \left\langle 0, 0 \right\rangle & \text{otherwise}
        \end{cases}
    \end{aligned}
\end{equation}
\end{footnotesize}
where $V_f = V$ and $E_f=\{(u,v)~|~C_f(u,v) > 0\}$.

Given a \TFNet{} $G=(V,E,C,\T)$, a source $s$ and a sink $t$, a path $P=(v_1,\ldots, v_n)$ is a \textit{temporal augmenting path} if a) $v_1=s$; b) $v_n=t$; and c) $\forall i\in [2,n-1]$, $\T(v_{i-1},v_i)\leq \T(v_i,v_{i+1})$ .

\begin{definition}[Regret-disabling vertex ($\RDV$)] In a \TFNet{}, $G=(V,E,C,\T)$, a vertex $u$ is designated as an $\RDV$ if $\exists \ts_1,\ts_2\in \T_u^{\mathsf{in}}, \ts_3,\ts_4\in \T_u^{\mathsf{out}}$ such that $\ts_1 < \ts_3 < \ts_2 < \ts_4$, where
    (i) $\T_u^{\mathsf{in}}=\{\T(v, u)~|~(v,u) \in E\}$, and (ii) $\T_u^{\mathsf{out}}=\{\T(u, v)~|~(u,v)\in E\}$.
\end{definition}

The presence of $\RDV$s in a \TFNet{} poses challenges for classic maximum flow algorithms. \eat{These vertices \eat{can disrupt the expected behavior of algorithms such as those based on}\yxzhao{disables} augmenting-path-based algorithms (\eg~\cite{ford1956maximal}) \yxzhao{to decrease existing flow by flowing from the opposite direction}.} To illustrate this issue, we provide a brief example below.


\begin{figure*}[tb]
	\begin{center}
	\includegraphics[width=\textwidth]{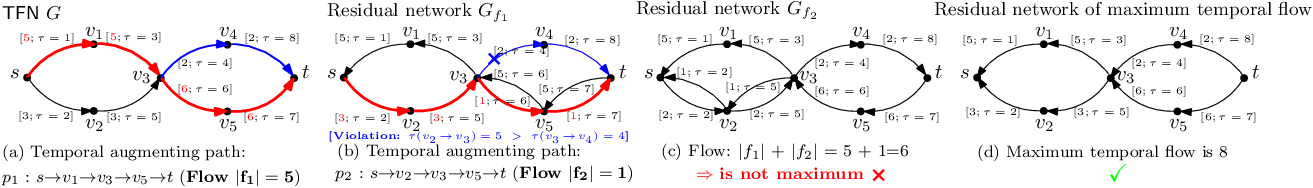}
		\end{center}
  \captionsetup{width=\linewidth}
	\caption{\jiaxin{(a)-(c) classic algorithms cannot return maximum temporal flow, and (d) Residual network of maximum temporal flow.}}
	\label{fig:augmenting_paths}
\end{figure*}

\begin{example}
In a \TFNet{} as depicted in Figure~\ref{fig:augmenting_paths}, $\T(v_1,v_3) = 3$, $\T(v_2,v_3)=5$, $\T(v_3,v_4)=4$, and $\T(v_3,v_5)=6$. $v_3$ is an $\RDV$ since $\T(v_1,v_3) < \T(v_3,v_4) < \T(v_2,v_3) < \T(v_3,v_5)$. Suppose that the first found temporal augmenting path is $p_1=(s,v_1,v_3,v_5,t)$ with a temporal flow value $|f_1|~= 5$. The residual network after finding $p_1$ results in a second temporal augmenting path, $p_2$, and the temporal flow value along $p_2$ is $|f_2|~=1$. Thus, the value of temporal flow from $s$ to $t$ found by classic algorithms is only $|f_1|~+~|f_2|~= 6$ as shown in Figure~\ref{fig:augmenting_paths}(c). Due to the temporal flow constraint, it is not possible to flow from $v_2$ to $v_4$ through $v_3$ (as shown in Figure~\ref{fig:augmenting_paths}(b)) unless part of $f_1$ is reversed. An alternate augmenting path, $p_1=(s,v_1, v_3, v_4, t)$ must be selected at the beginning (as shown in Figure~\ref{fig:augmenting_paths}(a)). The maximum temporal flow from $s$ to $t$ of this \TFNet{} is $\MFlow(s,t)=8$ if three augmenting paths are found in the order of $(s,v_1, v_3, v_4, t)$, $(s,v_1,v_3,v_5,t)$, and $(s,v_2,v_3,v_5,t)$. The residual network is shown in Figure~\ref{fig:augmenting_paths}(d).
\end{example}

To resolve this issue, we present our \TFNet{} transformation that enables augmenting-path-based algorithms to compute the correct maximum flow value.

\begin{figure}[tb]
	\begin{center}
	\includegraphics[width=0.48\textwidth]{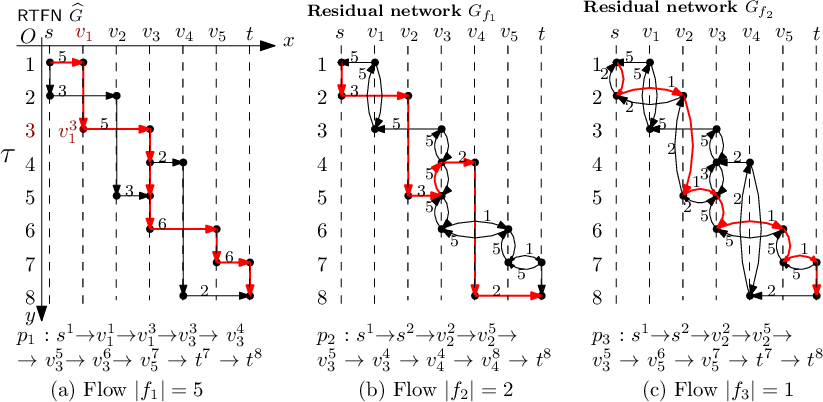}
		\end{center}
	\caption{(a) \RTFNet{} $\Transform$, (a-c) three augmenting paths $p_1$, $p_2$ and $p_3$, and (b-c) two residual networks $G_{f_1}$ and $G_{f_2}$. ({\em Remarks: The capacity on vertical edges is $+\infty$ unless specified.})}
	\label{fig:transformed}
\end{figure}

\noindent\textbf{Network Transformation Algorithm.}  Given a \TFNet{}, $G=(V,E,C,\T)$, $\name$ transforms $G$ into a \textbf{R}egret-enabled \textbf{T}emporal \textbf{F}low \textbf{N}etwork (\RTFNet) $\Transform = (\widehat{V}, \widehat{E}, \widehat{C})$ which is \textit{regret-enabled} for classic maximum flow algorithms to reverse the flow identified in previous iterations. To accomplish this, $\name$ builds a {\em virtual} Cartesian coordinate system $xOy$ with the vertices on the $x$-axis and the timestamps on the $y$-axis. The key steps, that run in $O(|V|+|E|)$, are as follows.

\begin{enumerate}[leftmargin=*]
    \item \textbf{Initialization.} $\Transform$ is initialized as an empty graph.
    \item \textbf{Iterative steps.} For each edge $e=(u,v)$ to be transformed, where $\ts = \T(e)$, a copy $u^{\ts}$ of $u$ (resp. $v^{\ts}$ of $v$) is created at the coordinate $\langle u,\ts\rangle$ (resp. $\langle v, \ts \rangle$) in the $xOy$ Cartesian coordinate system. The edge $e$ is transformed into $\hat{e}=(u^{\ts},v^{\ts})$ with capacity $\C(\hat{e}) = C(e)$. For all copies $\{u^{\ts_1}, \ldots, u^{\ts_k}\}$ of vertex $u$, additional edges $e' = (u^{\ts_i}, u^{\ts_{i+1}})$ for $i\in [1,k-1]$ are added to $\Transform$ with capacity $\C(e') = +\infty$. To ease the discussion, $\hat{e}$ is referred to as the "horizontal edge", while $e'$ is referred to as the "vertical edge".
    \item \textbf{Termination.} The transformation terminates when all vertices and edges have been transformed.
\end{enumerate}

\begin{example}
Consider the \TFNet{} in Figure~\ref{fig:augmenting_paths}(a). The transformation is shown in Figure~\ref{fig:transformed}(a). $\Transform$ is initialized as an empty graph (Step (1)). The edge $(s,v_1)$ of $G$ is transformed into $(s^1,v^1_1)$ of $\Transform$. Consider the edge $(v_1,v_3)$ in $G$, the vertices $v^3_1$ and $v^3_3$, and the horizontal edge $(v^3_1, v^3_3)$ are created in $\Transform$. The existence of $v^1_1$ also results in the addition of a vertical edge $(v^1_1,v^3_1)$ to $\Transform$ (Step (2)). The \RTFNet{} shown in Figure~\ref{fig:transformed}(a) (Step (3)) is the network after all edges are transformed.
\end{example}

We establish in Theorem~\ref{theorem:identical} that the maximum flow values in \RTFNet{} and \TFNet{} are identical. To avoid disrupting the presentation flow, we present its proof in~\cite{techreport}).

\begin{theorem}\label{theorem:identical}
    Given a \TFNet{} $G$ with a source $s$ and a sink $t$, the value of \MFlow$(s,t)$$=$\MFlow$(s^{\ts_1},t^{\ts_{\max}})$ in the \RTFNet{} $\Transform$, where the earliest copy of a source $s$ and the latest copy of a sink $t$ as $s^{\ts_1}$ and $t^{\ts_{\max}}$, respectively.
\end{theorem}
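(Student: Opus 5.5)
We prove the equality by two inequalities, each obtained by converting a feasible flow in one network into a feasible flow of the same value in the other. Throughout we use the temporal-flow semantics of Section~\ref{subsec:prelim}: a flow $f$ in $G$ satisfies capacity and conservation, and at every intermediate vertex $u$ and every time $\ts$ the cumulative inflow over edges with timestamp at most $\ts$ is at least the cumulative outflow over such edges. We fix the convention that, among copies of $u$ at the same timestamp, incoming funds at time $\ts$ may leave at time $\ts$. This matches the condition $\T(v_{i-1},v_i)\leq\T(v_i,v_{i+1})$ for temporal augmenting paths.

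\emph{Step 1: $\MFlow(s,t)\le\MFlow(s^{\ts_1},t^{\ts_{\max}})$.} Let $f$ be a maximum temporal flow in $G$. Set $\hat f(u^{\ts},v^{\ts})=f(u,v)$ on each horizontal edge. For the copies $u^{\ts_1},\ldots,u^{\ts_k}$ of an intermediate vertex $u$, set the flow on the vertical edge $(u^{\ts_i},u^{\ts_{i+1}})$ to the \emph{stored balance}: the total inflow into $u$ over edges with timestamp at most $\ts_i$, minus the total outflow over those edges.
\begin{itemize}
\item The temporal constraint makes every stored balance non-negative, and vertical capacities are $+\infty$, so $\hat f$ is feasible.
\item Conservation at each copy $u^{\ts_i}$ holds by telescoping: the balance entering from below, plus horizontal inflow at $\ts_i$, minus horizontal outflow at $\ts_i$, equals the balance leaving upward.
\item At the last copy the balance is zero by ordinary conservation of $f$.
\end{itemize}
For $s$ and $t$ we route along the vertical chains. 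The source's copies send upward whatever has not yet left, with the super-source supplying $s^{\ts_1}$. The sink's copies accumulate everything up to $t^{\ts_{\max}}$. Hence $|\hat f|=|f|$.

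\emph{Step 2: $\MFlow(s^{\ts_1},t^{\ts_{\max}})\le\MFlow(s,t)$.} Take any flow $\hat f$ in $\Transform$ and project it: $f(u,v)=\hat f(u^{\ts},v^{\ts})$ with $\ts=\T(u,v)$. Capacities are preserved because horizontal edges carry the original capacities. Conservation at $u$ follows by summing conservation over all copies of $u$, since the internal vertical flows cancel. For the temporal constraint, sum conservation over the copies $u^{\ts_1},\ldots,u^{\ts_i}$. This shows that cumulative inflow minus cumulative outflow up to $\ts_i$ equals the flow on the vertical edge leaving $u^{\ts_i}$, which is $\ge 0$. Since time is monotone along the vertical chain and flow never moves downward, funds cannot be spent before arrival. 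The flow value is preserved because the only way to reach $t^{\ts_{\max}}$ is through horizontal edges into copies of $t$.

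\emph{Main obstacle.} The difficulty is the bookkeeping of the definitions rather than any single inequality. Two points need care. First, the temporal flow is defined only informally, so we must fix its cumulative form and the tie-breaking convention for equal timestamps; this is exactly why the transformation uses upward vertical edges only. Second, the theorem treats the source and sink as single copies $s^{\ts_1}$ and $t^{\ts_{\max}}$. We must check that every original edge out of $s$ is reachable from $s^{\ts_1}$ and every edge into $t$ reaches $t^{\ts_{\max}}$; both hold through the infinite-capacity vertical chains. We must also handle edges entering $s$ or leaving $t$, which cannot increase the net value in either direction. We would state these conventions explicitly before running the two constructions above.
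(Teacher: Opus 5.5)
Your proposal is correct and is essentially the paper's proof. Step~1 is the construction of Lemma~\ref{lemma:tfnet}: horizontal edges copy $f$, and vertical edges carry the cumulative in-minus-out balance, which the temporal constraint makes nonnegative. Step~2 is Lemma~\ref{lemma:rtfnet}: project the flow back and read the temporal constraint off the nonnegative vertical flows. The paper merely packages the two directions as a proof by contradiction.

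Your handling of the chains of $s$ and $t$ is more careful than the paper's, which ignores them. Your remark that flow into $s$ or out of $t$ never helps still needs an argument, because naively cancelling cycles through $s$ in $G$ can violate the temporal constraint at intermediate vertices. One way to fix it: map $f$ into $\Transform$ with the copies of $s$ and $t$ joined by two-way infinite edges, discard the cycles of a path decomposition, and shortcut each path along those chains so that only upward chain edges remain.
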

\tr{
\begin{proof}
    We prove this theorem by contradiction. Assume that the maximum flow from $s^{\ts_1}$ to $t^{\ts_{\max}}$ on \RTFNet{} is $\widehat{f}$ and the maximum flow from $s$ to $t$ on \TFNet{} is $f$. 1) If $|\widehat{f}|~<|f|$, there exists a flow $\widehat{f'}$ on \RTFNet{}, such that $|\widehat{f'}|~=|f|$ by Lemma~\ref{lemma:tfnet}. $|\widehat{f'}|~>|\widehat{f}|$ contradicts the assumption that $\widehat{f}$ is the maximum flow on \RTFNet{}. 2) If $|\widehat{f}|~>|f|$, $f$ is not the maximum flow on \TFNet{} using Lemma~\ref{lemma:rtfnet} which contradicts the assumption. Hence, we conclude that $|\widehat{f}|~=|f|$.
\end{proof}
}

We next illustrate that classic algorithms are applied on \RTFNet{} to compute the maximum temporal flow value.

\begin{example}
     Consider \RTFNet{} in Figure~\ref{fig:transformed} (a). Assume that the first augmenting path found is $p_1$ with a temporal flow of $|f_1|~=5$. The second augmenting path found is $p_2$ in the residual network $G_{f_1}$ resulting in $|f_2|~=2$. $p_2$ is found as there is a reverse edge $(v_3^5, v_3^4)$ in $G_{f_1}$ that enables the flow to be reversed. When the last augmenting path $p_3$ with a temporal flow of $|f_3|~=1$ is found, the maximum flow from $s$ to $t$ is computed by \textnormal{\MFlow}$(s,t) = |f_1| + |f_2| + |f_3| = 8$.
\end{example}

\eat{
The size of \RTFNet{} is bounded as follows.
\begin{lemma}\label{lemma:space}
    Given a \TFNet{} $G=(V,E,C,\T)$ and its \RTFNet{} $\Transform=(\V, \E, \C)$, $|\V|$ is bounded by $2|E|$ and $|\E|$ is bounded by $3|E| - |V|$.
\end{lemma}
}

\tr{
\stitle{Time and space complexity.} Given a \TFNet{}, an $O(|V|+|E|)$ traversal algorithm can compute its \RTFNet{} network $\Transform=(\V, \E, \C)$. $|\V|$ is bounded by $2|E|$ and $|\E|$ is bounded by $3|E| - |V|$.
}

\stitle{Remarks.} With Theorem~\ref{theorem:identical}, we assume that $\name{}$ transforms any source $s_i\in \Src$ (resp. any sink $t_i\in \Dst$) in the \TEMSDMF{} query to $s_i^{\tau_1}$ (resp. $t_i^{\tau_{\max}}$) in the \RTFNet{}. We will focus on the \SDMF{} query on the \RTFNet{} for simplicity.

%% file: 4-baseline.tex
\section{Divide-and-conquer Approach}\label{sec:dc}

\begin{figure*}[tb]
	\begin{center}
	\includegraphics[width=0.85\textwidth]{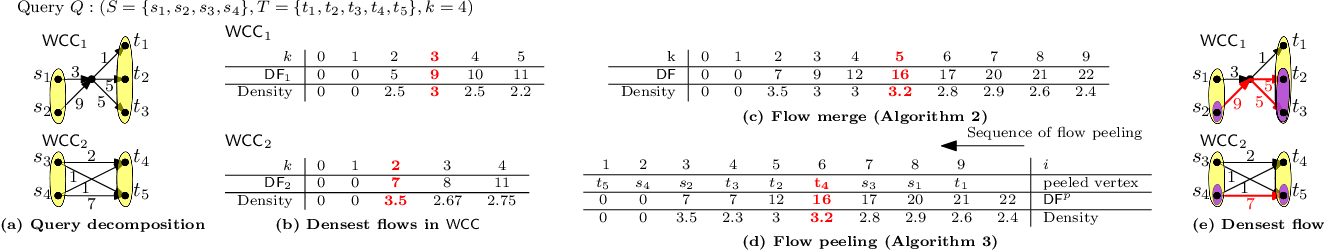}
		\end{center}
  \vspace{-3mm}
	\caption{\SDMF{} evaluation (A running example)}
	\label{fig:runningexample}
\end{figure*}

As presented in Section~\ref{subsec:background}, the density metric $g(\Src,\Dst)$ is defined as $\frac{\MFlow(\Src,\Dst)}{|\Src|+|\Dst|}$. It is evident that the maximum flow is the densest when $|\Src|+|\Dst|$ is fixed. However, it is time-consuming to enumerate all possible subsets of $\Src$ and $\Dst$ and compute the maximum flows to answer the \SDMF{} queries $Q=(\Src, \Dst, k)$ since this approach would require computing the maximum flow $2^{|\Src| + |\Dst|}$ times. Hence, we propose an efficient divide-and-conquer approach to reduce the number of maximum flow computations. Specifically, given a query $Q=(\Src, \Dst, k)$, $\name{}$ decomposes $S$ and $T$ into smaller subsets (Section~\ref{sec:dec}), finds densest flows for each fixed $|S'|+|T'|$ for each pair of subsets (Section~\ref{sec:comp-dense-flow}), and merges the densest flows to answer the query that $|S'|+|T'|\ge k$ (Section~\ref{sec:merge}).

\subsection{Query Decomposition}\label{sec:dec}

In this subsection, we propose to decompose the query subsets $\Src$ and $\Dst$ into smaller, non-overlapping subsets $\Src_i$ and $\Dst_i$. We first present how to determine these subsets. 

\stitle{Reachability.} Given a source $s$ and a sink $t$, if there is a path from $s$ to $t$, we say $s$ can reach $t$ and denote the reachability by $\reach(s,t) = \mathsf{True}$. Otherwise, $\reach(s,t) = \mathsf{False}$.

\stitle{Overlap of Flows.} Given two pairs of a source and a sink $(s_1,t_1)$ and $(s_2,t_2)$, if $\reach(s_1,t_2) = \mathsf{False}$ and $\reach(s_2,t_1) = \mathsf{False}$, the flows $f_1$ and $f_2$ are \textit{overlap-free}, where $f_1$ (resp. $f_2$) is the flow from $s_1$ (resp. $s_2$) to $t_1$ (resp. $t_2$). Otherwise, $f_1$ and $f_2$ overlap. We have the following property if two flows are overlap-free.

\begin{property}\label{property:reachability}
    {\em Given two pairs $(s_1,t_1)$ and $(s_2,t_2)$, and their maximum flow values $|f_1|~=~$\textnormal{\MFlow}$(s_1, t_1)$ and \\ $|f_2|~=~$\textnormal{\MFlow}$(s_2, t_2)$. If the flows $f_1$ and $f_2$ are overlap-free, $|f_1|~+~|f_2|~=~$\textnormal{\MFlow}$(\{s_1,s_2\}, \{t_1, t_2\})$.}
\end{property}

\eat{
\stitle{Auxiliary bipartite graph.} Inspired by property~\ref{property:reachability}, we build an auxiliary bipartite graph $G^{\Aux}=(\Src, \Dst, E^{\Aux})$ \wrt the query $Q=(\Src, \Dst, k)$. $(s,t)\in E^{\Aux}$ $\mathsf{iff}$ $\reach(s,t)=\mathsf{True}$ in $G$. To construct the bipartite graph $G^{\Aux}=(\Src, \Dst, E^{\Aux})$, any existing efficient reachability indexes can be adopted, such as ~\cite{sigmod2012,pathtree,cheng2013tf,su2016reachability}. With the state-of-the-art reachability indexing techniques, the graph is built in $O(|\Src||\Dst|\log |V|)$. 
}

\stitle{Vertex Set Decomposition.} With Property~\ref{property:reachability}, $\name$ divides the queries based on the reachability between the sources $\Src$, and the sinks $\Dst$, into subset pairs, denoted by $\mathsf{WCC}_i = (\Src_i, \Dst_i)$ ($i\in [1,\mathsf{nw}]$), where $\mathsf{nw}$ is the number of subset pairs. The sources in $\Src_i$ and the sinks in $\Dst_i$ are located within the {\em same} weakly connected component ($\mathsf{WCC}$). We remark that the sources (resp. sinks) outside the $\mathsf{WCC}$ are not reachable to $\Dst_i$ (resp. from $\Src_i$). The decomposition satisfies the following:
\begin{enumerate}
    \item $\Src = \bigcup_{i \in [1,~\mathsf{nw}]} \Src_i$ and $\Dst = \bigcup_{i \in [1,~\mathsf{nw}]} \Dst_i$; and
    \item $\forall i,j \in [1,~\mathsf{nw}]$, $\Src_i \cap \Src_j = \emptyset$, and  $\Dst_i \cap \Dst_j = \emptyset$.
\end{enumerate}

If $(\Src_1, \Dst_1)$ and $(\Src_2, \Dst_2)$ are two distinct \WCC{}s, the maximum flows $f_1$ and $f_2$ must be overlap-free, where $f_1$ (resp. $f_2$) is the maximum flow from $\Src_1$ (resp. $\Src_2$) to $\Dst_1$ (resp. $\Dst_2$). Otherwise, $f_1$ and $f_2$ overlap. 

\begin{property}\label{property:reachability2}
    {\em Given two maximum flows, $f_1$ from $\Src_1$ to $\Dst_1$ and $f_2$ from $\Src_2$ to $\Dst_2$, if $f_1$ and $f_2$ are overlap-free, \textnormal{\MFlow}$(\{\Src_1,\Src_2\}, \{\Dst_1, \Dst_2\}) = |f_1| + |f_2|$.}
\end{property}

\eat{
The \SDMF{} query, $Q=(\Src,\Dst,k)$, is impractical to compute as it requires evaluating the maximum flow for $2^{|\Src|+|\Dst|}$ different combinations of subsets from $\Src$ and $\Dst$. Query decomposition is a solution to this issue, as it decomposes the \SDMF{} problem into a set of smaller and manageable \SDMF{} sub-problems.
}

\begin{example}\label{eg:wcc}
Given a query $Q=(\Src,\Dst, 4)$ shown in Figure~\ref{fig:runningexample}, where $\Src=\{s_1,s_2,s_3,s_4\}$ and $\Dst=\{t_1, t_2, t_3, t_4, t_5\}$, there are $2^{|\Src| + |\Dst|}= 512$ combinations, \ie $512$ times maximum flow calculations. By checking the reachability, we obtain two \WCC{}s: $\mathsf{WCC}_1=(\Src_1,\Dst_1)$, where $\Src_1 = \{s_1,s_2\}$ and $\Dst_1=\{t_1,t_2,t_3\}$, and $\mathsf{WCC}_2=(\Src_2,\Dst_2)$, where $\Src_2 = \{s_3,s_4\}$ and $\Dst_1=\{t_4,t_5\}$, as shown in Figure~\ref{fig:runningexample}(a). There are $2^{|\Src_1| + |\Dst_1|}= 32$ (resp. $2^{|\Src_2| + |\Dst_2|}= 16$) combinations on $\mathsf{WCC}_1$ (resp. $\mathsf{WCC}_2$). Thus, a total of $32+16=48$ maximum flow calculations are needed, which is only $9.4\%$ of the calculations required by directly enumerating subsets of $\Src$ and $\Dst$.
\end{example}

\subsection{Computing Intermediate Densest Flows} \label{sec:comp-dense-flow}
We compute and store intermediate densest-flow values in a \jiaxin{\emph{densest-flow (DF) array}, denoted $\DF_i$}. For each $\mathsf{WCC}_i=(\Src_i,\Dst_i)$, $i\in[1,\mathsf{nw}]$,  $\DF_i[k]=\MFlow(\Src',\Dst')$ which is the densest flow value from $\Src'\subseteq \Src_i$ to $\Dst'\subseteq  \Dst_i$, such that $|\Src'|~+~|\Dst'|~=k$ and $g(\Src', \Dst')$ is maximized. The length of $\DF_i$, denoted as $len(\DF_i)$, is $|\Src_i|~+~|\Dst_i|~+~1$ since $0\leq k \leq |\Src_i|~+~|\Dst_i|$. All the arrays, $\DF_i$s, are stored in a sequence $\DFArray$.

\begin{example}
Consider $\mathsf{WCC}_1 = (\Src_1, \Dst_1)$ in Example~\ref{eg:wcc}. $\DF_1$ is initialized with a length of $5$. For $k=3$, the flow from $\Src'=\{s_2\}$ to $\Dst'=\{t_2,t_3\}$ is the densest with the flow value $\DF_1[3]=$ \textnormal{\MFlow}$(\Src', \Dst')=9$. Similarly, $\DF_1[2]=5$ with $\Src'=\{s_2\}$ and $\Dst'=\{t_2\}$, $\DF_1[4]=10$ with $\Src'=\{s_1,s_2\}$ and $\Dst'=\{t_2,t_3\}$, and $\DF_1[5]=11$ with $\Src'=\{s_1,s_2\}$ and $\Dst'=\{t_1,t_2,t_3\}$. $\DF_2$ for $\mathsf{WCC}_2$ is shown in Figure~\ref{fig:runningexample}(b).
\end{example}

\stitle{Time Complexity.} The state-of-the-art indexes \eg~\cite{sigmod2012,pathtree,cheng2013tf,su2016reachability} can be used to determine the reachability between the sources $\Src$ and the sinks $\Dst$. \WCC{}s can be found in $O(|\Src||\Dst|\log |V|)$ by using these indexes. The time complexity to compute the densest flow arrays, $\DF_i$ for $i\in [1,\mathsf{nw}]$, is $O(\sum_{i=1}^{\mathsf{nw}}2^{|\Src_i| + |\Dst_i|}M)$, where $M$ is the complexity of any maximum flow algorithms.

\subsection{Merging of Intermediate Densest Flows}
\label{sec:merge}

We next propose an efficient method for merging the densest flow of each \WCC{} to yield the global densest flow array for solving \SDMF{}. We prove in Lemma~\ref{lemma:subset} that the densest flow of multiple \WCC{}s can be constructed from the densest flows of individual \WCC{}s.

\begin{lemma}\label{lemma:subset}
    Given $\mathsf{WCC}_i~=~(\Src_i,~\Dst_i)$~and~$\mathsf{WCC}_j~=~(\Src_j, \Dst_j)$, if~$\Src_i'\subseteq \Src_i$, $\Dst_i'\subseteq \Dst_i$, $\Src_j'\subseteq \Src_j$, and $\Dst_j'\subseteq \Dst_j$, then 
    \begin{footnotesize}
    \begin{equation}
    \textnormal{\MFlow}(\Src_i'\cup \Src_j', \Dst_i'\cup \Dst_j') = \textnormal{\MFlow}(\Src_i', \Dst_i') + \textnormal{\MFlow}(\Src_j', \Dst_j').
    \end{equation}
    \end{footnotesize}
\end{lemma}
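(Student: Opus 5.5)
The plan is to prove the two inequalities separately, working in the super-source/super-sink formulation of Section~\ref{subsec:prelim}. Let $s'$ be the super source attached to $\Src_i'\cup \Src_j'$ and $t'$ the super sink attached to $\Dst_i'\cup \Dst_j'$. The structural fact underlying everything is that $\mathsf{WCC}_i$ and $\mathsf{WCC}_j$ lie in distinct weakly connected components of the (transformed) network. Consequently no vertex or edge is shared between them. Moreover, every edge incident to a vertex of one component has both endpoints in that component.

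For $\geq$, take a maximum flow $f_i$ from $\Src_i'$ to $\Dst_i'$ and a maximum flow $f_j$ from $\Src_j'$ to $\Dst_j'$, and define $f=f_i+f_j$. Since $f_i$ is supported on edges of component $i$ and $f_j$ on edges of component $j$, and these edge sets are disjoint, $f$ respects every capacity. Conservation holds at every non-terminal vertex, because such a vertex lies in only one component and there $f$ coincides with $f_i$ or with $f_j$. The value of $f$ out of $s'$ is $|f_i|+|f_j|$. This shows $\MFlow(\Src_i'\cup\Src_j',\Dst_i'\cup\Dst_j') \geq \MFlow(\Src_i',\Dst_i')+\MFlow(\Src_j',\Dst_j')$. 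In the temporal setting the same argument applies verbatim on the \RTFNet{}, by the Remarks following Theorem~\ref{theorem:identical}.

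For $\leq$, take a maximum flow $f$ from $s'$ to $t'$ in the combined network and restrict it to the edges of component $i$ together with the super edges into $\Src_i'$ and out of $\Dst_i'$; call this $f|_i$, and define $f|_j$ analogously. Capacities are inherited. Conservation at each internal vertex $v$ of component $i$ is inherited from $f$, because every edge incident to $v$ lies in component $i$ or is a super edge at $v$. Hence $f|_i$ is a feasible flow from $\Src_i'$ to $\Dst_i'$, and $|f|_i|\le \MFlow(\Src_i',\Dst_i')$; likewise $|f|_j|\le \MFlow(\Src_j',\Dst_j')$. Every edge leaving $s'$ goes to a source in exactly one of the two components, so $|f|=|f|_i|+|f|_j|$, which gives the upper bound.

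The main point to handle carefully is the claim that the value splits exactly as $|f|=|f|_i|+|f|_j|$. A flow unit could in principle leave through a source in $\Src_i'$ and arrive at a sink in $\Dst_j'$ via a path crossing components. This is ruled out because no path joins vertices in different weakly connected components, so each source can only reach sinks in its own component; this is exactly the overlap-freeness of Property~\ref{property:reachability2}, of which the lemma is essentially the multi-source restatement. I would cite that property directly and supply the decomposition argument above as its justification.
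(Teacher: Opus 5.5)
Your two-inequality scheme (superpose optimal flows for $\geq$, split an optimal combined flow for $\leq$) is the same one the paper uses in its appendix proof of Property~\ref{property:reachability}, from which this lemma follows. The gap is the premise everything rests on. You assume $\mathsf{WCC}_i$ and $\mathsf{WCC}_j$ are different weakly connected components of the network itself, hence vertex- and edge-disjoint and closed under incident edges. The paper's decomposition does not give you that. The pairs $(\Src_i,\Dst_i)$ are formed ``based on the reachability between the sources and the sinks'' and are computed from a reachability index in $O(|\Src||\Dst|\log|V|)$ time. What distinct $\mathsf{WCC}$s guarantee is only overlap-freeness: no source of one group reaches a sink of the other. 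The network may still be one weakly connected piece.

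For instance, take edges $s_1\to t_1$ and $s_2\to t_2$, and add a vertex $v$ with edges $(s_1,v)$, $(s_2,v)$ and a unit-capacity cycle $v\to w\to v$ that reaches no sink. The groups $(\{s_1\},\{t_1\})$ and $(\{s_2\},\{t_2\})$ are still separate. But ``the edges of component $i$'' is undefined for $(v,w)$, so your restriction $f|_i$ is not well defined. In the $\geq$ direction, $f_1$ and $f_2$ may both be legitimate maximum flows that carry the circulation on $v\to w\to v$, and then $f_1+f_2$ violates capacity. Your last paragraph correctly names overlap-freeness as the real reason. However, the justification you propose for Property~\ref{property:reachability2} only works under the stronger graph-disjointness hypothesis.

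The repair is to argue from non-reachability alone, using flow decomposition. Decompose a flow from $s'$ to $t'$ into $s'$--$t'$ paths plus cycles, and drop the cycles; feasibility and value are unaffected, and on the \RTFNet{}, a DAG by Lemma~\ref{lemma:dag}, there are none. Each remaining path leaves $s'$ through some source and enters $t'$ from some sink. Overlap-freeness, which is inherited by $\Src_i'\subseteq\Src_i$ and the other subsets, forces both endpoints to lie in the same group. Moreover, a path of type $i$ and one of type $j$ cannot share any vertex $v$: splicing the first up to $v$ with the second after $v$ would give a path from $\Src_i'$ to $\Dst_j'$.
\begin{itemize}
\item For $\leq$, group the paths of a maximum combined flow by type; each group is a feasible flow for its own pair.
\item For $\geq$, take cycle-free maximum flows $f_i$ and $f_j$. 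Their supports are vertex-disjoint, so $f_i+f_j$ is feasible with value $|f_i|+|f_j|$.
\end{itemize}
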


\input{algo-flow-merge}

Algorithm~\ref{algo:flowmerge} of $\name{}$ merges the densest flow arrays, $\DF$s, in a recursive manner (Line~\ref{algo:flowmerge:recursive} and Lines~\ref{algo:flowmerge:dfm:start}-\ref{algo:flowmerge:dfm:end}). If there is only one array to be merged, it is returned directly (Line~\ref{algo:flowmerge:dfm:one}). If there are two arrays to be merged (Line~\ref{algo:flowmerge:dfm:two}), $\name$ first initializes an array $\DFP$ to store the merged densest flow (Line~\ref{algo:flowmerge:init}). $\name$ compares $\DF_1[k_1] + \DF_2[k_2]$ with $\DF[k_1+k_2]$ ($k_1\in [0,len(\DF_1)]$, $k_2\in [0,len(\DF_2)]$) by iterating through $\DF_1$ and $\DF_2$. If $\DF_1[k_1] + \DF_2[k_2]$ is greater than $\DF[k_1+k_2]$, a denser flow is found (Line~\ref{algo:flowmerge:denser}). If there are more than two densest flow arrays, they are divided into two parts and merged recursively (Line~\ref{algo:flowmerge:dfm:end}). Once all $\DF$s have been merged, a global densest flow array $\DF$ is obtained and returned (Line~\ref{algo:flowmerge:recursive}).

\stitle{Computing the Answer of \SDMF{} from $\DF$.} For a \SDMF{} query, $Q=(\Src,\Dst, k)$, the maximal value of $\frac{\DF'[k']}{k'}$ is returned, where $k'\geq k$. The subsets $\Src'$ and $\Dst'$ can be obtained using an inverted map. The implementation details are omitted due to space limitations.

\begin{example}
For \WCC{}s in Example~\ref{eg:wcc}, their densest flow arrays are $\DF_1=[0,5,9,10,11]$ and $\DF_2=[0,7,8,11]$ (Figure~\ref{fig:runningexample}(b)). Consider the merged densest flow $\DF[5]$. There are $5$ pairs of ~$k_1$ and ~$k_2$ such that $k_1~+~k_2~=~5$. When $k_1=3$ and $k_2=2$, $\DF[5]~=~\DF_1[3]~+~\DF_2[2]~=~16$ is maximized. The rest elements in $\DF$ are computed similarly (Figure~\ref{fig:runningexample}(c)). Consider the query $Q=(\Src,\Dst, 4)$ in Example~\ref{eg:wcc}, $\frac{\DF'[5]}{5}$ is maximal. Therefore, the subsets $\Src'=\{s_2,s_4\}$ and $\Dst'=\{t_2,t_3,t_5\}$ have the densest flow which consists of two sub-flows: 1) from $\{s_2\}$ to $\{t_2, t_3\}$; and 2) from $\{s_4\}$ to $\{t_5\}$ (Figure~\ref{fig:runningexample}(e)). The answer to the query, $Q$, is $\Src'$ and $\Dst'$ with a flow density of $3.2$.
\end{example}

\stitle{Time Complexity.} Algorithm~\ref{algo:flowmerge} has the time complexity of $O((|\Src| + |\Dst|)^2)$. Specifically, the cost is \begin{footnotesize} $$\sum_{i=2}^{\mathsf{nw}}(len(\DF_i) \sum_{j=1}^{i}len(\DF_j)) < \sum_{i=1}^{\mathsf{nw}}(len(\DF_i)\sum_{j=1}^{\mathsf{nw}}len(\DF_j))$$ \end{footnotesize} Since $\sum_{j=1}^{\mathsf{nw}}len(\DF_j))$ $= |\Src| + |\Dst|$, the time complexity is bounded by $O((|\Src| + |\Dst|)^2)$.

%% file: algo-flow-merge.tex
\begin{algorithm}[tb]
    \caption{Densest Flow Merge}\label{algo:flowmerge}
    \SetKwProg{Fn}{Function}{}{}
    \footnotesize
    \KwIn{A sequence of densest flow arrays $\DFArray=[\DF_1,\dots,\DF_{\mathsf{nw}}]$}
    \KwOut{A merged densest flow array $\DFP$}
    
    \Return $\mathsf{DFM}(\DFArray, 1, \mathsf{nw})$ \label{algo:flowmerge:recursive}  \\
    
    \Fn{$\mathsf{DFM}$$($\textnormal{\DFArray}, $l$, $r)$}{ \label{algo:flowmerge:dfm:start}
        \If(\tcp*[h]{only one densest flow array}){$l = r$}{ 
            \Return $\DF_l$ \label{algo:flowmerge:dfm:one}\\
        }
        $m \gets \lfloor \frac{l+r}{2} \rfloor$ \\
        \Return $\textsc{ArrMrg} (\mathsf{DFM}(\DFArray, l, m), \mathsf{DFM}(\DFArray, m + 1, r))$ \eat{\tcp*[h]{more than two $\DF$s}}\label{algo:flowmerge:dfm:end}\\ 
    }
    \Fn{$\textnormal{\textsc{ArrMrg}}(\DF_1, \DF_2)$}{\label{algo:flowmerge:dfm:two}
    init an empty densest flow array $\DFP$  \label{algo:flowmerge:init} \\ 
        \ForEach(){$k\in [0, \ldots, len(\DF_1) + len(\DF_2)-1]$}{
            $\DFP[k] \gets 0$
        }
    
        \ForEach(){$k_1\in [0,\ldots, len(\DF_1)-1]$}{
            \ForEach(){$k_2\in [0,\ldots, len(\DF_2)-1]$}{
                \If(){$\DFP[k_1+k_2] < \DF_1[k_1]+\DF_2[k_2]$}{
                $\DFP[k_1+k_2] \gets \DF_1[k_1]+\DF_2[k_2]$ \label{algo:flowmerge:denser}\\  
            } 
        }
    }
    \Return $\DFP$
    }

\end{algorithm}

%% file: 6-greedy.tex
\section{3-Approximation algorithm for \SDMF}\label{sec:appr}

\jiaxin{The divide-and-conquer approach (presented in Section~\ref{sec:dc}) solves the \SDMF{} query by dividing $\Src$ and $\Dst$ into smaller subsets. However, when the subsets remain large, computing their densest-flow arrays is still expensive. Inspired by $\DKS$~\cite{feige1997densest}, in Section~\ref{sec:flowpeeling} we propose an approximate \emph{flow peeling} algorithm with a one-third approximation guarantee to substantially reduce this cost. Section~\ref{sec:pruning} further introduces a pruning technique that reduces the number of maximum-flow computations.}

\subsection{Flow Peeling Algorithm for \SDMF{}}\label{sec:flowpeeling}

We use the term $\PF(u,\Src,\Dst)$ to denote the reduction in the maximum flow value $\MFlow(\Src, \Dst)$ when vertex $u$ is removed from the set $\Src \cup \Dst$. This represents the \textit{peeling flow} of vertex~$u$.

\begin{definition}[Peeling Flow ($\PF$)] \label{def:pf}
     Given a flow network $G=(V,E,C)$, a set $\Src$ of sources and a set $\Dst$ of sinks, $\mathsf{PF}(u, \Src, \Dst)$ is:
     \begin{footnotesize}
    \begin{equation}\label{forceeq}
        \mathsf{PF}(u, \Src, \Dst) = \begin{cases}
            \textnormal{\MFlow}(\Src, \Dst) - \textnormal{\MFlow}(\Src\setminus\{u\}, \Dst), & \text{if} \; u\in \Src \\
            \textnormal{\MFlow}(\Src, \Dst) - \textnormal{\MFlow}(\Src, \Dst\setminus\{u\}), & \text{if} \; u\in \Dst  \\
            \end{cases}
\end{equation}
\end{footnotesize}
\end{definition}

\input{algo-argmax-baseline}

\stitle{Flow Peeling (Algorithm~\ref{algo:greedy}, Figure~\ref{fig:runningexample}(d)).} At the beginning of the flow peeling algorithm, $\Src_n$ and $\Dst_n$ are set to  $\Src$ and $\Dst$, respectively, and $n$ is set to $|\Src| + |\Dst|$ (Lines~\ref{algo:greedy:inite}). We use $(\Src_i,\Dst_i)$ to denote the vertex set pair after the $i$-th peeling step.  The algorithm iteratively peels a vertex $u_i$ either from $\Src_{i}$ or $\Dst_{i}$ such that $\PF(u_i)$ is minimized (Lines~\ref{algo:greedy:minizedb}-\ref{algo:greedy:minizede}). The process is repeated until all vertices have been peeled, resulting in a series of sets, denoted by $(\Src_n,\Dst_n), \ldots, (\Src_0,\Dst_0)$ of sizes $n, \ldots, 0$. Then, $\Src_i$ and $\Dst_i$ ($i\in [k, n]$), that maximize the density metric $g(\Src_i, \Dst_i)$, are returned. For simplicity, we denote the minimum peeling flow value in each step as $\delta_i = \min\{\PF(u_i,\Src_i,\Dst_i) | u_i\in \Src_i\cup \Dst_i\}$.

\begin{example}
Consider the query $Q$ in Example~\ref{eg:wcc}. Figure~\ref{fig:runningexample}(d) shows the full sequence of the flow peeling. Figure~\ref{fig:greedy} shows the details of the flow peeling process when $i=6$ and $i=5$. When $i>6$, vertices $t_1$, $s_1$ and $s_3$ have been peeled. Therefore, when $i=6$, $\Src_6=\{s_2,s_4\}$ and $\Dst_6=\{t_2,t_3,t_4,t_5\}$. Since $t_4$ has the smallest peeling flow value $\PF(t_4,\Src_6,\Dst_6) = 1$, it is peeled. This leaves $\Src_5=\Src_6=\{s_2,s_4\}$ and $\Dst_5=\Dst_6\setminus \{t_4\} = \{t_2,t_3,t_5\}$. When $i=5$, $t_2$ is peeled since it has the smallest peeling flow value. After all vertices are peeled, the returned sets ($\Src_5$, $\Dst_5$) have the densest flow among all $(\Src_i,\Dst_i)$ ($i\in [0,9]$).
\end{example}

\stitle{Analysis of Approximation Ratio.} Due to space limitations, we only present the main ideas of the proofs. Detailed proofs of all lemmas and theorem are provided in~\cite{techreport}.

To gain insights into the structural characteristics of networks and determine the essential vertices, we first present a definition of the $F\Core{}$.

\begin{definition}[$F\Core{}$] Given a set $\Src$ of sources and a set $\Dst$ of sinks, $\Src^F\subseteq \Src$ and $\Dst^F\subseteq \Dst$, is an $F\Core$ if ~$~\forall u\in \Src^F\cup \Dst^F$, $\PF(u,\Src^F,\Dst^F) \geq F$, denoted by $(\Src^F,\Dst^F)=F\Core(\Src, \Dst)$.
\end{definition}

\begin{figure}[tb]
	\begin{center}
	\includegraphics[width=0.45\textwidth]{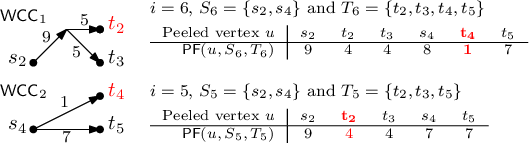}
	\end{center}
 \vspace{-3mm}
	\caption{$i=6$ and $i=5$ of Algorithm~\ref{algo:greedy}: $\Src_6=\{s_2,s_4\}$ and $\Dst_6=\{t_2,t_3,t_4,t_5\}$, and peeling flows of $t_4$ and then $t_2$}\label{fig:greedy}
\end{figure}

Intuitively, vertices in $F\Core{}(\Src,\Dst)$ have significant $\PF$s, \wrt $\Src$ and $\Dst$. We can prove that $F\Core$ exists for $0 \leq F\leq g(\Src, \Dst)$, where $g(\Src, \Dst)$ represents the density of the maximum flow from $\Src$ to $\Dst$.

\begin{lemma}\label{lemma:existence}
    \eat{$\forall F \in [0, g(\Src, \Dst)]$, $\exists i \in [0,n]$, $(\Src_i, \Dst_i) = F\Core(\Src,\Dst)$.}
    For any $F \in [0, g(\Src, \Dst)]$, there always exists $i \in [0,n]$ that $(\Src_i, \Dst_i) = F\Core(\Src,\Dst)$.
\end{lemma}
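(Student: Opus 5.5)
The plan is to run the standard $k$-core peeling argument, with the vertex degree replaced by the peeling flow $\PF$. For a pair $(\Src',\Dst')$ I write $X=\Src'\cup\Dst'$ and regard $h(X)=\MFlow(\Src',\Dst')$ as a set function on the terminal set $\Src\cup\Dst$. In this notation $\PF(u,\Src',\Dst')=h(X)-h(X\setminus\{u\})$ is the marginal contribution of $u$ to $X$. I read $F\Core(\Src,\Dst)$ as the \emph{maximal} $F$-core, that is, the largest pair $(\Src^F,\Dst^F)$ with every vertex having peeling flow at least $F$. The statement then asserts that this maximal core appears exactly as one of the peeling prefixes $(\Src_i,\Dst_i)$.

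\textbf{Step 1 (monotonicity of peeling flows).} I would first establish that for $X\subseteq Y$ and $u\in X$ we have $\PF(u,X)\le \PF(u,Y)$, i.e.\ a supermodularity-type property of $h$. I would try to prove it from the max-flow/min-cut characterization, either via the super-source/super-sink construction of Section~\ref{subsec:prelim} on the \RTFNet{} (Theorem~\ref{theorem:identical}) or by augmenting a maximum flow for $X\setminus\{u\}$ to one for $X$ and then to one for $Y$. I expect this to be the main obstacle. Adding a source should complement adding a sink, but two sources can compete for the same sink capacity, so within one side $h$ may behave submodularly. If the inequality fails in general, the fallback is to prove only the weaker statement actually used below: a vertex of an $F$-core keeps peeling flow at least $F$ in every superset produced by the peeling sequence.

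\textbf{Step 2 (the maximal core is well defined).} Using Step 1, I would show that the union of two $F$-cores is again an $F$-core. For $u$ in the first core, $\PF(u,\text{union})\ge \PF(u,\text{first core})\ge F$, and symmetrically for the second core. Hence a unique maximal $F$-core $C$ exists. The hypothesis $F\le g(\Src,\Dst)$ is used to argue that $C$ is nonempty, by an averaging argument along any peeling order. Along such an order the peeling flows telescope to $h(\Src\cup\Dst)=(|\Src|+|\Dst|)\,g(\Src,\Dst)$, so they cannot all be below $F$. This guarantees that a nontrivial core is captured; the empty case is covered by $i=0$ in any event.

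\textbf{Step 3 (peeling never touches $C$ early).} I would argue by induction on the peeling steps that $C\subseteq \Src_i\cup\Dst_i$ as long as $|\Src_i|+|\Dst_i|\ge |C|$. Suppose $C\subsetneq X_i=\Src_i\cup\Dst_i$. Then $X_i$ is not an $F$-core, since otherwise maximality of $C$ is contradicted. Hence some vertex $w$ has $\PF(w,X_i)<F$, so the minimum $\delta_i$ is below $F$. On the other hand, every $u\in C$ satisfies $\PF(u,X_i)\ge \PF(u,C)\ge F$ by Step 1. The vertex peeled by Algorithm~\ref{algo:greedy} therefore lies outside $C$, and the inclusion $C\subseteq X_{i-1}$ persists. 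Once the size reaches $|C|$, we get $(\Src_i,\Dst_i)=C=F\Core(\Src,\Dst)$ with $i=|C|\in[0,n]$, which is the claim.
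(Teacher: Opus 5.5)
\textbf{There is a genuine gap: Step 1 is false, and so is your fallback, so Steps 2 and 3 collapse.}

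Take $\Src=\{s_1,s_2\}$, $\Dst=\{t\}$, and edges $(s_1,v)$, $(s_2,v)$, $(v,t)$, each of capacity $1$, with all timestamps equal. Then $\MFlow=1$ whenever $t$ and at least one source are present, and $\MFlow=0$ otherwise. Hence $\PF(s_1,\{s_1\},\{t\})=1$ but $\PF(s_1,\{s_1,s_2\},\{t\})=0$. A source's marginal value shrinks when a competing source is added, which is exactly the submodular behaviour you feared.

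This one example breaks each later step. Take $F=g(\Src,\Dst)=1/3$.
\begin{itemize}
\item Both $(\{s_1\},\{t\})$ and $(\{s_2\},\{t\})$ are $F$-cores, since every member has peeling flow $1$. Their union is not an $F$-core. So unions of $F$-cores need not be $F$-cores, and there is no unique maximal $F$-core. Under your reading the claim is not even well posed.
\item The vertex $s_1$ lies in an $F$-core yet has peeling flow $0<F$ in $(\Src_n,\Dst_n)$. This refutes your weaker fallback statement.
\item Because of that tie at $0$, Algorithm~\ref{algo:greedy} may peel $s_1$ first, which breaks the invariant of Step 3.
\end{itemize}

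\textbf{What you are missing is that no maximality is required.} In the paper, an $F$-core is any pair all of whose members have peeling flow at least $F$. The paper states right after the lemma that several $F$-cores may exist.

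Under that definition, the averaging argument you already placed inside Step 2 is the whole proof, provided you run it along Algorithm~\ref{algo:greedy}'s own sequence.
\begin{itemize}
\item Since $\delta_i=\MFlow(\Src_i,\Dst_i)-\MFlow(\Src_{i-1},\Dst_{i-1})$ and $\MFlow(\Src_0,\Dst_0)=0$, the sum telescopes: $\sum_{i=1}^{n}\delta_i=\MFlow(\Src,\Dst)=n\,g(\Src,\Dst)\ge nF$.
\item Hence $\delta_i\ge F$ for some $i\ge 1$.
\item Because $\delta_i$ is the \emph{minimum} peeling flow over $\Src_i\cup\Dst_i$, the prefix $(\Src_i,\Dst_i)$ is a nonempty $F$-core.
\end{itemize}

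Note that the greedy choice is essential here. Along an arbitrary peeling order, one peeled vertex having large peeling flow would not certify that its prefix is a core.

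This is precisely the paper's proof, written there as a contradiction: if every $\delta_i<F$, then $\MFlow(\Src,\Dst)<nF\le \MFlow(\Src,\Dst)$. Drop Steps 1 and 3 and keep only this argument.
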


For $F \in [0,g(\Src,\Dst)]$, there may exist multiple $F\Core$s. In the following, we focus on the $F\Core$ with the highest index $i$ for $\Src_i$ and $\Dst_i$, \ie $\delta_j < F$ for $j\in (i,n]$.

\begin{lemma}\label{lemma:apprcore}
  $\forall \alpha\in [0,1]$ and $F = \alpha \cdot g(\Src, \Dst)$, $(1-\alpha)$~\textnormal{\MFlow}$(\Src, \Dst)~\le$~\textnormal{\MFlow}$(\Src^F,\Dst^F)$.
\end{lemma}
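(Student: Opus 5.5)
The plan is to mimic the classical peeling argument for densest subgraph. Let $(\Src^F,\Dst^F)=(\Src_i,\Dst_i)$ be the $F\Core$ with the highest index $i$, as fixed after Lemma~\ref{lemma:existence}. By this choice, every peeling step $j\in(i,n]$ removed a vertex $u_j$ with $\delta_j=\PF(u_j,\Src_j,\Dst_j)<F$. By Definition~\ref{def:pf}, $\delta_j$ is exactly the drop $\MFlow(\Src_j,\Dst_j)-\MFlow(\Src_{j-1},\Dst_{j-1})$. So the whole argument reduces to a telescoping sum along the peeling sequence from $(\Src_n,\Dst_n)=(\Src,\Dst)$ down to $(\Src_i,\Dst_i)$.

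\textbf{Telescoping.} We write
\begin{equation*}
\MFlow(\Src,\Dst)-\MFlow(\Src^F,\Dst^F)=\sum_{j=i+1}^{n}\delta_j .
\end{equation*}
Each summand is strictly less than $F$, and there are $n-i$ of them, so the sum is at most $(n-i)F\le nF$. Here $n=|\Src|+|\Dst|$ and $F=\alpha\, g(\Src,\Dst)=\alpha\,\MFlow(\Src,\Dst)/n$. Hence $nF=\alpha\,\MFlow(\Src,\Dst)$. Rearranging gives
\begin{equation*}
\MFlow(\Src^F,\Dst^F)\ge \MFlow(\Src,\Dst)-\alpha\,\MFlow(\Src,\Dst)=(1-\alpha)\,\MFlow(\Src,\Dst),
\end{equation*}
which is the claimed inequality.

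\textbf{Points to check.} First, each $\delta_j$ is nonnegative, because $\MFlow$ is monotone under enlarging the source or sink set: any feasible flow stays feasible when more super-source or super-sink edges are added. This is not needed for the upper bound, but it confirms the telescoping terms are well defined. Second, the edge cases need care. If $i=n$, the whole $(\Src,\Dst)$ is already an $F\Core$ and the claim is trivial. If $\alpha=1$, the bound says only $0\le\MFlow$, which holds.

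\textbf{Main obstacle.} The delicate step is the identification of the $F\Core$ with the peeled set $(\Src_i,\Dst_i)$ whose index is maximal, together with the justification that all earlier peel values satisfy $\delta_j<F$. This rests on Lemma~\ref{lemma:existence} and on the convention stated right after it, so I would invoke them directly. If the lemma is read for an arbitrary $F\Core$ $(\Src^F,\Dst^F)$ rather than the peeled one, I would first reduce to the peeled one. To do so, I would argue that the peeling never removes a vertex of $\Src^F\cup\Dst^F$ while a smaller peel value is available. That reduction requires a monotonicity (supermodularity-type) property of $\PF$ under shrinking the terminal sets. Such a property is not obvious for max-flow, so I would rely on the paper's stated convention instead.
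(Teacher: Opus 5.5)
Your proof is correct and is essentially the paper's argument: telescope $\MFlow(\Src,\Dst)$ along the peeling sequence, bound each of the $n-i$ peel values above the highest-index $F\Core$ by $F$, and use $nF=\alpha\,\MFlow(\Src,\Dst)$. The paper telescopes all the way down to $(\Src_0,\Dst_0)$ and then identifies $\sum_{j\le i}\delta_j$ with $\MFlow(\Src_i,\Dst_i)$, which amounts to the same computation.

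One correction to your closing paragraph: relying on the highest-index convention is necessary, not just convenient. For an arbitrary $F\Core$ the claim is false, since the empty pair is vacuously an $F\Core$ and would give $(1-\alpha)\,\MFlow(\Src,\Dst)\le 0$. The fallback reduction you sketch should therefore be dropped.
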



\eat{
\yxzhao{Denote the exact solution to \KSTDF{} query by $(S',T')$ and assume $\alpha=\frac{2}{3}$. Then, $F=\frac{2g(S',T')}{3}$. There always exists $i \in [0,n]$ that $(\Src_i, \Dst_i) = F\Core(\Src,\Dst)$ (Lemma~\ref{lemma:existence}). If $i\ge k$, then $g(\Src_i,\Dst_i)\ge \frac{g(S',T')}{3}$ and hence is a 3-approximation of the answer. If $i<k$, then $\frac{\MFlow(S_i,T_i)}{3}\revise{\ge}\MFlow(S,T)$ (Lemma~\ref{lemma:apprcore}). Therefore, we have the following theorem.}
}

Denote the exact answer to \SDMF{} query by $(S',T')$, consider $\alpha=\frac{2}{3}$ and $F= \alpha \cdot g(\Src, \Dst)$. Lemma~\ref{lemma:existence} guarantees that there always exists $(S_i,T_i)$ that is a $F$Core, and Lemma~\ref{lemma:apprcore} guarantees that either $(S_i,T_i)$ or $(S_k,T_k)$ is a 3-approximation of the exact answer. Therefore, we have the following theorem.
\begin{theorem}\label{theorem:appr}
    Algorithm~\ref{algo:greedy} is a $3$-approximation for \SDMF{}.
\end{theorem}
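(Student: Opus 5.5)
Let $(\Src^*,\Dst^*)$ denote an optimal answer, with $|\Src^*|+|\Dst^*| = k^* \ge k$ and density $g^* = g(\Src^*,\Dst^*)$. We will show that some $(\Src_i,\Dst_i)$ with $i\ge k$ produced by Algorithm~\ref{algo:greedy} has density at least $g^*/3$.

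The plan is to apply Lemma~\ref{lemma:existence} and Lemma~\ref{lemma:apprcore} to the pair $(\Src^*,\Dst^*)$ rather than to $(\Src,\Dst)$. Set $F = \frac{2}{3}g^*$. First, using monotonicity of $\MFlow$ under supersets, we show that $(\Src^*,\Dst^*)$ contains an $F$Core with respect to itself. By Lemma~\ref{lemma:apprcore} with $\alpha = 2/3$, this core $(\Src^F,\Dst^F)\subseteq(\Src^*,\Dst^*)$ satisfies $\MFlow(\Src^F,\Dst^F)\ge \frac13 \MFlow(\Src^*,\Dst^*) > 0$, so it is nonempty. Every vertex of this core has peeling flow at least $F$ inside the core.

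Next we argue that the peeling run on the full $(\Src,\Dst)$ must reach a step where $\delta_i \ge F$. This is where submodularity of multi-source/multi-sink max flow enters. The function $X\mapsto \MFlow(X\cap\Src, X\cap\Dst)$ is submodular, since it is a cut function minimized over cuts; we take this as the key structural fact. Submodularity gives $\PF(u,A) \ge \PF(u,B)$ whenever $u\in B\subseteq A$. Consider the first step $i$ at which the algorithm peels a vertex $u_i$ of the core. Just before that step, the core is still entirely contained in $(\Src_i,\Dst_i)$. Hence $\delta_i = \PF(u_i,\Src_i,\Dst_i) \ge \PF(u_i,\Src^F,\Dst^F)\ge F$, so every vertex remaining at step $i$ has peeling flow at least $F$.

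We then split into two cases on the index $i$. If $i\ge k$, we bound the density directly. The peeling sequence from $i$ down to $0$ removes all flow, so $\MFlow(\Src_i,\Dst_i)=\sum_{j\le i}\delta_j$. This sum is not obviously $\ge iF$, because later $\delta_j$ can be smaller. Instead, we use a telescoping bound together with the fact that the vertex removed at step $i$ had peeling flow $\ge F$. The cleaner route is to use the standard peeling fact that at step $i$, $\sum_{u}\PF(u,\Src_i,\Dst_i)\le \MFlow(\Src_i,\Dst_i)$ for submodular functions, which holds because the marginal values of any ordering sum to the total and submodularity lower-bounds each marginal by $\PF$. This yields $i\cdot F\le i\,\delta_i \le \MFlow(\Src_i,\Dst_i)$, so $g(\Src_i,\Dst_i)\ge \frac23 g^*\ge g^*/3$.

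If $i<k$, then the core has size below $k$ and $(\Src_k,\Dst_k)\supseteq(\Src_i,\Dst_i)\supseteq$ core. By monotonicity, $\MFlow(\Src_k,\Dst_k)\ge \MFlow(\Src^F,\Dst^F)\ge \frac13\MFlow(\Src^*,\Dst^*) = \frac13 k^* g^*$. Therefore $g(\Src_k,\Dst_k)\ge \frac{k^*}{3k}g^*\ge g^*/3$, since $k^*\ge k$. Since the algorithm returns the densest $(\Src_j,\Dst_j)$ over $j\ge k$, the theorem follows.

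The main obstacle is justifying the submodularity step, namely the inequality $\sum_u \PF(u)\le\MFlow$ and the monotonicity of $\PF$ under shrinking. We will try to prove these via the min-cut characterization with a super source and super sink. In that formulation, adding a source or sink corresponds to adding an infinite-capacity edge, and the resulting min-cut value is a submodular function of the set of attached terminals.
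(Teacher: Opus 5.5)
Your outline is the same as the paper's. Both apply Lemma~\ref{lemma:apprcore} with $\alpha=2/3$ to the optimum, transfer that core into the peeling run on $(S,T)$, bound the density of an $F$-core by $F/2$, and split into $i\ge k$ versus $i<k$. Your $i<k$ computation is fine.

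The step that fails is the transfer, $\delta_i=\PF(u_i,S_i,T_i)\ge\PF(u_i,S^F,T^F)$. It is the supermodular inequality $\PF(u,X)\ge\PF(u,Y)$ for $u\in Y\subseteq X$ ("degrees only grow in a supergraph"), and it is false for flows. Your justification cannot supply it, for two reasons:
\begin{itemize}
\item $h(X)=\MFlow(X\cap S,X\cap T)$ is not submodular. A single edge $s\to t$ of capacity $1$ gives $h(\{s\})+h(\{t\})=0<1=h(\{s,t\})+h(\emptyset)$. The min-cut argument makes $\MFlow(A,B)$ submodular only as a function of $A\cup(T\setminus B)$, so same-side terminals are substitutes and opposite-side terminals are complements.
\item Submodularity would give the reverse inequality anyway. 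You use supermodularity in this step and submodularity in the next.
\end{itemize}
Among sources, the reverse inequality is what actually holds: a source's peeling flow can drop to $0$ once another source sharing its bottleneck is present.

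Separately, $\sum_u\PF(u,S_i,T_i)\le\MFlow(S_i,T_i)$ is false; the same single edge gives $2>1$. Decomposing a fixed maximum flow gives only $\sum_s\PF(s)\le\MFlow$ and $\sum_t\PF(t)\le\MFlow$, since deleting a terminal loses at most the flow it carries. Hence $iF\le 2\,\MFlow(S_i,T_i)$ and $g(S_i,T_i)\ge F/2=g^*/3$, not $\frac23 g^*$. That part is repairable, and it is exactly where the paper's factor $3$ comes from.

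The transfer step cannot be repaired, because the stated guarantee is false. Take a static network with:
\begin{itemize}
\item sources $a,x_1,\dots,x_N$, sinks $b,c_1,\dots,c_N$, and one internal vertex $w$;
\item edges $a\to w$ (capacity $L$), $x_i\to w$ (capacity $L/N$), $w\to b$ (capacity $L$), and $x_i\to c_i$ (capacity $\varepsilon$, with $0<\varepsilon<L/(N+1)$);
\item $k=2$.
\end{itemize}
The optimum is $g(\{a\},\{b\})=L/2$.

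Initially $\PF(a)=0$, because the $x_i$ jointly saturate $w\to b$, while every other vertex has peeling flow at least $\varepsilon$. So $a$ is peeled first, then the $c_i$ (peeling flow $\varepsilon$ each), then the $x_i$ (peeling flow $L/N$ each). Every set in the sequence with $i\ge 2$ has density at most $L/(N+1)$, attained at $(\{x_1,\dots,x_N\},\{b\})$. The ratio is therefore $(N+1)/2$, which exceeds $3$ for $N\ge 6$ and is unbounded.

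The paper's proof has the same hole. Its line ``since $S',T'$ are subsets of $S,T$, we deduce $\MFlow(S^F,T^F)\ge\MFlow(S',T')/3$'' is precisely your transfer step. In this instance, with $N\ge 6$ and $F=\frac23 g^*=L/3$, no nonempty set in the peeling sequence is an $F$-core.

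The argument is valid whenever $\PF$ is monotone under supersets. An example is the bipartite unit-capacity instances of the NP-hardness reduction, where $\PF(u)$ is simply a degree. In general, an extra hypothesis or a different algorithm is needed.
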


\stitle{Time Complexity.} Algorithm~\ref{algo:greedy} requires $|\Src|+|\Dst|$ times the cost of the maximum flow calculation to obtain the smallest peeling flow for each $i$. Since there are at most $|\Src|+|\Dst|$ peelings, the time complexity is bounded by $O((|\Src|+|\Dst|)^2M)$, where $M$ represents the complexity of any maximum flow algorithm.

\noindent \jiaxin{\stitle{Remarks on the relation to $\DKS{}$.} Although our flow-peeling procedure is conceptually inspired by the intuition behind $\DKS{}$—removing low-contribution vertices iteratively—the process of \SDMF{} is fundamentally different.  In $\DKS{}$, the marginal gain of a vertex is a \emph{local}, additive quantity derived from edge weights, which enables classical degree-based peeling.  In contrast, the marginal gain in \SDMF{} depends on a \emph{global} multi-source–multi-sink \emph{temporal} maximum flow: removing one vertex may reroute, diminish, or eliminate entire flow paths, and its contribution cannot be inferred from local connectivity. Moreover, $\DKS{}$ is a special case of \SDMF{} obtained when all timestamps equal $1$ and the flow value $\MFlow(S,T)$ degenerates into the total edge weight between $S$ and $T$. Because temporal flow interactions are non-decomposable, classical $\DKS{}$ peeling rules cannot be directly applied; thus, we introduce a new flow-based marginal definition and a flow-aware $F\Core$ ordering for \SDMF{}.}

\subsection{Pruning in the Flow Peeling Algorithm}\label{sec:pruning}

\input{algo-argmax-pruning.tex} 

\jiaxin{Algorithm~\ref{algo:greedy} enumerates all vertices $s \in \Src_i$ (resp.\ $t \in \Dst_i$) and computes $\MFlow(\Src_i\setminus\{s\},\Dst_i)$ (resp.\ $\MFlow(\Src_i,\Dst_i\setminus\{t\})$) in order to find the vertex $u_i$ that minimizes the peeling flow $\PF(u,\Src_{i-1},\Dst_{i-1})$ (Line~\ref{algo:greedy:minizede}). However, this full enumeration is computationally expensive because each  candidate requires a new maximum-flow evaluation.}

\jiaxin{To make this step more efficient, we exploit that peeling-flow values at later iterations are \emph{algebraically constrained} by those computed on larger source/sink sets: once $\PF(s,\Src,\Dst)$ and $\PF(t,\Src,\Dst)$ are known, the peeling flows after removing $s$ or $t$ cannot become arbitrarily small. This enables us to maintain, for every vertex $u\in \Src\cup\Dst$, a  \emph{lower bound} $\LPF(u)$ on its future peeling-flow value without performing any max-flow computation. Intuitively, if the smallest already-computed peeling flow is strictly smaller than the minimum possible $\LPF$ among all remaining vertices, then no unexamined vertex can become the best peeling candidate, and the enumeration can be safely terminated. Based on this idea, we derive a set of properties that yield valid lower bounds for pruning unnecessary candidates.}

\begin{property}\label{property:peelnotneg}
    $\forall u\in \Src\cup \Dst$, $\mathsf{PF}(u,\Src,\Dst)\ge 0$.
\end{property}


\begin{property}\label{property:otherside}
$\forall s\in \Src, t\in \Dst$, 
\begin{align}
    1)\ & \mathsf{PF}(t, \Src\setminus\{s\}, \Dst) \ge \mathsf{PF}(t, \Src, \Dst) - \mathsf{PF}(s, \Src, \Dst); \text{ and} \\
    2)\ & \mathsf{PF}(s, \Src, \Dst\setminus\{t\}) \ge \mathsf{PF}(s, \Src, \Dst) - \mathsf{PF}(t, \Src, \Dst).
\end{align}
\end{property}

\tr{
\begin{proof}
With the definition of peeling flow, we have
\begin{equation}\label{eq:tst-fp}
\begin{split}
\PF(t,\Src,\Dst) & =  \MFlow(\Src,\Dst)-\MFlow(\Src,\Dst\setminus\{t\})\\
\end{split}
\end{equation}
\begin{equation}\label{eq:sst-fp}
\begin{split}
\PF(s,\Src,\Dst) & =  \MFlow(\Src,\Dst)-\MFlow(\Src\setminus\{s\},\Dst)\\
\end{split}
\end{equation}
Combining Equation~\ref{eq:tst-fp} and Equation~\ref{eq:sst-fp}, we have the following.
\begin{equation}\label{eq:combine-fp}
    \PF(t,\Src,\Dst) - \PF(s,\Src,\Dst) =\MFlow(\Src\setminus\{s\},\Dst) -  \MFlow(\Src,\Dst\setminus\{t\})
\end{equation}
Due to property~\ref{property:peelnotneg}, we have
\begin{equation}
    \PF(s,\Src,\Dst\setminus\{t\}) = \MFlow(\Src,\Dst\setminus\{t\}) - \MFlow(\Src\setminus\{s\},\Dst\setminus\{t\}) \ge 0
\end{equation}
Therefore, $\MFlow(\Src,\Dst\setminus\{t\}) \geq \MFlow(\Src\setminus\{s\},\Dst\setminus\{t\})$. Moreover, with Equation~\ref{eq:combine-fp}, we have the following.
\begin{equation}
\begin{split}
\PF(t,\Src\setminus\{s\},\Dst) & =  \MFlow(\Src\setminus\{s\},\Dst)-\MFlow(\Src\setminus\{s\},\Dst\setminus\{t\})\\
& \geq \MFlow(\Src\setminus\{s\},\Dst)-\MFlow(\Src,\Dst\setminus\{t\})\\
& = \PF(t,\Src,\Dst) - \PF(s,\Src,\Dst)
\end{split}
\end{equation}
Similarly, we have $\mathsf{PF}(s,\Src,\Dst\setminus\{t\})\ge \mathsf{PF}(s,\Src,\Dst) - \mathsf{PF}(t,\Src,\Dst)$.
\end{proof}
}

\stitle{Flow Peeling Using a Lower Bound (Algorithm~\ref{algo:greedyprune}).} By using Properties~\ref{property:peelnotneg}-\ref{property:otherside}, we maintain a \textit{lower bound of the peeling flow} for each vertex in $\Src \cup \Dst$, denoted by $\LPF(u)$ (Line~\ref{algo:greedyprune:initLPF}). To determine $u_i$ that minimizes $\PF(u,\Src_{i-1},\Dst_{i-1})$, we calculate $\PF(u)$, where $u\in \Src_i\cup \Dst_i$, in ascending order of $\LPF(u)$. \textit{If the current minimum $\PF(u,\Src_i,\Dst_i)$ is smaller than the minimum $\LPF$ of all vertices that have not been enumerated, the enumeration is terminated (Line~\ref{algo:greedyprune:break}), and $u$ is chosen for the peeling.} $\LPF(u)$ is refined if the following conditions are satisfied.
\begin{enumerate}[leftmargin=*]
    \item When $\PF(u,\Src_i,\Dst_i)$ is calculated, $\LPF(u)$ is refined by $\LPF(u)=\PF(u,\Src_i,\Dst_i)$ (Line~\ref{algo:greedyprune:LPF1}).
    \item When a vertex $u$ is peeled (Lines~\ref{algo:greedyprune:peelstart}-\ref{algo:greedyprune:peelend}), if $u\in \Src_i$, $\LPF(t)$ is refined by $\LPF(t)=\LPF(t)-\PF(u,\Src_i,\Dst_i)$, where $t \in \Dst_i$; otherwise, $\LPF(s)$ is refined by $\LPF(s)=\LPF(s)-\PF(u,\Src_i,\Dst_i)$, where $t \in \Src_i$.
\end{enumerate}

\stitle{Time Complexity.} The time complexity of Algorithm~\ref{algo:greedyprune} is $O((|\Src| + |\Dst|)^2M)$, where $M$ is the complexity of any maximum flow algorithm. Sorting $\LPF$s adds an extra $O((|\Src|+|\Dst|)^2\cdot\log(|\Src|+|\Dst|))$ cost. Since $|\Src|~\ll~|V|$ and $|\Dst|~\ll~|V|$, the sorting is efficient in practice. The pruning technique reduces the elapsed time of the peeling algorithm by avoiding enumerations and provides an approximation guarantee.

\subsection{Divide-and-conquer Peeling Algorithm}\label{subsec:dcpeel}

\begin{figure}[tb]
	\begin{center}
	\includegraphics[width=0.45\textwidth]{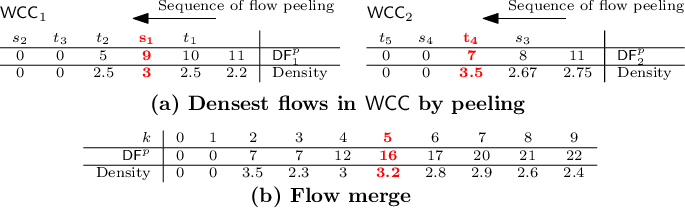}
	\end{center}
 \vspace{-3mm}
	\caption{Examples of the divide-and-conquer peeling algorithm}\label{fig:peelingdc}
\end{figure}

The flow peeling algorithm in Sections~\ref{sec:flowpeeling}-\ref{sec:pruning} is orthogonal to the divide-and-conquer approach (Section~\ref{sec:dc}). We propose to integrate them to obtain a divide-and-conquer peeling algorithm as follows.

\begin{enumerate}[leftmargin=*]
\itemsep0em 
    \item \stitle{Query Decomposition.} $\name{}$ decomposes $\Src$ and $\Dst$ into a set of \WCC{}s as introduced in Section~\ref{sec:dec}.
    \item \stitle{Flow Peeling.} $\name{}$ performs flow peeling on each $\mathsf{WCC}_i$ to produce the densest flow array, denoted by $\DF_i^p$.
    \item \stitle{Densest Flow Merge.} $\name{}$ merges $\DF_i^p$ to obtain the global answer of \SDMF{} by using Algorithm~\ref{algo:flowmerge}.
\end{enumerate}

\begin{example}
    In Figure~\ref{fig:runningexample}, the sources and sinks are decomposed into two \WCC{}s as depicted in Figure~\ref{fig:runningexample}(a). Upon applying the peeling algorithms on each of these $\mathsf{WCC}$s, two densest flow arrays ($\DF_1^{p}$ and $\DF_2^{p}$) are obtained as shown in Figure~\ref{fig:peelingdc}(a). Finally, these two densest arrays are merged to produce the global densest flow array $\DF^p$ (Figure~\ref{fig:peelingdc}(b)).
\end{example}

\stitle{Time Complexity.} The overall time complexity is \\ $O(\sum_{i=1}^{\mathsf{nw}}(|\Src_i| + |\Dst_i|)^2M)$\eat{ which is }, bounded by $O((|\Src| + |\Dst|)^2M)$, where $M$ represents the complexity of any maximum flow algorithm.

\stitle{Space Complexity.} The sizes of both the original graph and the transformed graph are bounded by $O(|V|+|E|)$. \eat{The maximum size of the flow array is limited by $O(|\Src| + |\Dst|)$.} The size of the flow array is bounded by $O(|\Src| + |\Dst|)$.

%% file: algo-argmax-baseline.tex
\begin{algorithm}[tb]
    \caption{Flow Peeling Algorithm for \SDMF{}}\label{algo:greedy}
    \SetKwProg{Fn}{Function}{}{}
    \footnotesize
    \KwIn{$G=(V, E, C)$, $Q=(\Src, \Dst, k)$}
    \KwOut{$Q(G)$}
    $n \gets |\Src|~+~|\Dst|$, $\Src_n \gets \Src$, $\Dst_n \gets \Dst$ \label{algo:greedy:inite} \\
    \ForEach(){$i \in [n-1,\ldots, 0]$}{
        $\Src_i\gets \emptyset, \Dst_i\gets \emptyset$
    }
    \ForEach(){$i \in [n,\ldots, 1]$}{
        $\delta_i \gets +\infty$ \label{algo:greedy:minizedb} \tcp*[h]{init the minimum peeling flow value}\\ 
        \ForEach(\tcp*[h]{search for $u$ minimizing $\PF(u,\Src_i,\Dst_i)$}){$u\in \Src_i\cup \Dst_i$}{
            \If{$\PF(u, \Src_i, \Dst_i) < \delta_i$}{ \label{algo:greedy:callpf}
                $\delta_i \gets \PF(u, \Src_i, \Dst_i)$ \\
                $u_i \gets u$ \label{algo:greedy:minizede}\\
            }
        }
        \eIf(\tcp*[h]{peel $u_i$ from $\Src_i$}){$u_i\in \Src_i$}{ $\Src_{i-1}\gets \Src_i\setminus\{u_i\}$, $\Dst_{i-1}\gets\Dst_i$\\
               }(\tcp*[h]{peel $u_i$ from $\Dst_i$}){ $\Src_{i-1}\gets\Src_i$, $\Dst_{i-1}\gets\Dst_i\setminus\{u_i\}$\\
               }
    }
    \Return $\Src_i$ and $\Dst_i$ maximizing density $g(\Src_i, \Dst_i)$, where $i\geq k$
\end{algorithm}

%% file: algo-argmax-pruning.tex
\begin{algorithm}[tb]
    \caption{Peeling Algorithms with Pruning}\label{algo:greedyprune}
    \SetKwProg{Fn}{Function}{}{}
    \footnotesize
    \KwIn{$G=(V, E)$, $Q=(\Src, \Dst, k)$}
    \KwOut{$Q(G)$}
    $n \gets |\Src| + |\Dst|$, $\Src_n \gets \Src$, $\Dst_n \gets \Dst$ \\
    \ForEach(){$i \in [n-1,\ldots, 0]$}{
        $\Src_i\gets \emptyset, \Dst_i\gets \emptyset$
    }
    \ForEach(){$u \in \Src\cup \Dst$}{
        $\LPF(u) = 0$ \label{algo:greedyprune:initLPF} \tcp*[h]{Property \ref{property:peelnotneg}}
    }
    \ForEach(){$i \in [n,\ldots, 1]$}{
        $\delta_i \gets +\infty$ \\
        Sort $u\in \Src_i \bigcup \Dst_i$ in the ascending order of $\LPF(u)$ \\
        \ForEach(){$u\in \Src_i \bigcup \Dst_i$}{
            \If(\tcp*[h]{early pruning}){$\delta_i\le \LPF(u)$}{
                break \label{algo:greedyprune:break}
            }
            \If{$\PF(u,\Src_i,\Dst_i)<\delta_i$}{  \label{algo:greedyprune:callpf}
                $\LPF(u)\gets\PF(u,\Src_i,\Dst_i)$ \label{algo:greedyprune:LPF1}\\
                $\delta_i \gets \PF(u,\Src_i,\Dst_i)$ \\
                $u_i \gets u$ \\
            }
        }
        \If{$u_i\in \Src_i$}{ \label{algo:greedyprune:peelstart}
            \ForEach(\tcp*[h]{Property~\ref{property:otherside}}){$t\in \Dst_i$}{$\LPF(t)\gets\LPF(t)-\PF(u,\Src_i,\Dst_i)$}
            $\Src_{i-1}\gets\Src_i\setminus\{u_i\}$
        }
        \Else{
            \ForEach(\tcp*[h]{Property~\ref{property:otherside}}){$s\in \Src_i$}{$\LPF(s)\gets\LPF(s)-\PF(u,\Src_i,\Dst_i)$}
            $\Dst_{i-1}\gets\Dst_i\setminus\{u_i\}$ \label{algo:greedyprune:peelend} 
        }
        $\delta_i \gets \MFlow(\Src_i, \Dst_i) - \MFlow(\Src_{i-1}, \Dst_{i-1})$ \\
    }
    
    \Return $\Src_i$ and $\Dst_i$ maximizing density $g(\Src_i, \Dst_i)$, where $i\geq k$ \\
\end{algorithm}

%% file: 7-experiments.tex
\section{EXPERIMENTAL STUDY}\label{sec:exp}

\eat{
We begin by outlining the experimental settings in Section~\ref{Exp-Settings}. We first evaluate the performance of network transformation (Stage 1 of $\name$) Section~\ref{Exp-Effectiveness-Transformation}. We then investigate into both the efficiency and the effectiveness of $\name$ (Stage 2) in Section~\ref{Exp-Efficiency} and Section~\ref{Exp-Effectiveness}, respectively. Lastly, we provide case studies in Section~\ref{Exp-CaseStudy} to further demonstrate its capabilities to investigate anomalies in real-world transaction flow networks.
}

\subsection{Experimental Setup}\label{Exp-Settings}

\stitle{Software and Hardware.} Our experiments are conducted on a machine with a Xeon Gold 6330 CPU, and $64$GB memory. The algorithms are implemented in C++ and the implementation is made memory-resident. All codes are compiled by GCC-8.5.0 with -$O3$. For computing the maximum flow, we implemented the algorithm from~\cite{dinic1970algorithm} as a built-in component. It should be remarked that any maximum flow algorithm could be employed to replace the one from~\cite{dinic1970algorithm}.

\input{exp-datasets}

\begin{table*}[!tb]
\caption{\jiaxin{\FCT{} vs. \BTF{} vs. \BRTF{} on the whole network ("Trans." represents the time taken for network transformation, and "\% of Trans." indicates the percentage of network transformation time relative to the total query time of \BRTF{}).}} \label{table:preprocessing-effect}
\centering
\resizebox{\linewidth}{!}{%
\begin{footnotesize}
\jiaxin{
\begin{tabular}{|c||c|c|c|c|c|c|c|c|c|c|c|c|}
  \hline
  \multirow{3}{*}{\textbf{Datasets}} 
  & \multicolumn{5}{c|}{\textbf{Query Efficiency without / with Transformation}} 
  & \multicolumn{3}{c|}{\textbf{Performance of Max. Flow Value}} 
  & \multirow{3}{*}{\textbf{Distance}}
  & \multicolumn{3}{c|}{\textbf{Peak Memory (MB)}} \\
  \cline{2-9}\cline{11-13}
   & \multirow{2}{*}{\FCT{} (ms)} 
   & \multirow{2}{*}{\BTF{} (ms)} 
   & \multicolumn{3}{c|}{\BRTF{}} 
   & \multirow{2}{*}{\FCT{}} 
   & \multirow{2}{*}{\textbf{\BTF{}}} 
   & \multirow{2}{*}{\textbf{\BRTF{}}}
   & & \multirow{2}{*}{\FCT{}} 
   & \multirow{2}{*}{\BTF{}} 
   & \multirow{2}{*}{\BRTF{}} \\
  \cline{4-6}
   & & & Trans. (ms) & Total Query Time (ms) & \% of Trans. & & & & & & & \\
  \hline
  Btc2011 & 6 & 0.9 & 1.3 & 174 & 0.74 & 0.0017 & 0.060 & \textbf{2.99} & 165 & 271 & 89 & 89 \\
  Btc2012 & 6,134 & 13 & 28 & 4,472 & 0.63 & 0.0002 & 0.021 & \textbf{1.59} & 43 & 1,259 & 431 & 432 \\
  Btc2013 & 7,952 & 47 & 103 & 16,780 & 0.62 & 0.0005 & 0.023 & \textbf{0.70} & 23 & 2,935 & 926 & 948 \\
  Eth2016 & 6 & 4 & 15 & 1,231 & 1.24 & 7.34 & 65.659 & \textbf{214.66} & 5,074 & 316 & 237 & 240 \\
  Eth2021 & 970 & 941 & 941 & 55,371 & 1.70 & 0.04 & 0.422 & \textbf{1.46} & 108 & 19,359 & 6,686 & 6,695 \\
  IBM & 52,680 & 12 & 39 & 401 & 8.83 & 14K & 98K & \textbf{98K} & 5 & 3,201 & 3,046 & 3,047 \\
  \hline
\end{tabular}%
}
\end{footnotesize}
}
\end{table*}

\stitle{Datasets.} Our experiments are conducted on six datasets, including five real-world transaction-flow datasets and one large-scale synthetic benchmark.
\begin{enumerate}[leftmargin=*]
    \item Three real-world datasets are extracted from the Bitcoin transaction network in $2011$, $2012$, and $2013$~\cite{btc2013}. In addition, we collect transactions from the Ethereum network in $2016$ and $2021$~\cite{wood2014ethereum}. We extract datasets by year because both the graph structure and transaction amount distributions vary significantly across time. In the early years, both BTC and ETH exhibit larger transaction values and denser activity.
    \item \jiaxin{To complement real-world data with a controllable benchmark, we additionally include the IBM synthetic transaction dataset released by the Watson Research Lab~\cite{altman2023realistic}. This dataset is designed to simulate realistic financial transaction flows at scale.} 
\end{enumerate}

\stitle{Timespan.} We provide an interface that allows investigators to specify a detection timespan $\Delta=[\tau_s,\tau_e]$. \jiaxin{For the real-world Bitcoin and Ethereum datasets, we evaluate $\name$ using three timespan lengths—one day, one week, and one month—denoted by $\Delta_d$, $\Delta_w$, and $\Delta_m$, respectively, which reflect common investigation practices on recent transactions. To generate queries, we randomly select 20 timespans from each type of time window for evaluation.} For the IBM synthetic transaction dataset, we evaluate $\name$ on the full graph, since its timestamps are synthetically generated and do not correspond to operational investigation windows; this setting is sufficient for assessing scalability.


\stitle{Queries.} For each selected timespan, we randomly generate $50$ pairs of sets $\Src$ and $\Dst$, as queries,~which satisfy the following conditions: 1) $\forall s_i \in \Src, \dego(s_i) \geq 1$ and $\forall t_i\in \Dst, \degi(t_i)\geq 1$, and 2) $|\Src|=|\Dst|=\frac{n}{2}$, where $n$ is the parameter called the size of~the~queries. \jiaxin{For the IBM synthetic transaction dataset, queries are generated on the entire graph, following the same query-generation procedure.} Table~\ref{table:Statistics} summarizes some characteristics of the datasets and queries. When investigating efficiency, we may discuss the term maximum flow as it helps to analyze the runtime, and it is different from the densest flow by a factor of $|S'|+|T'|$.


\stitle{Parameters.} For the query $Q = (\Src, \Dst, k)$ of \TEMSDMF{}, the number of sources and sinks, denoted as $n=|\Src| + |\Dst|$, affects scalability, while $k$ influences the detection of the flow density. To demonstrate the efficiency with respect to the query size, we vary $n$ to be 16, 32, and 64. To illustrate how $k$ affects the flow density, we vary $k$ among 6, 8, and 10. 
The default values for $k$ and $n$ are set to 6 and 32, respectively.

\stitle{Algorithms.} We investigate two major performance factors of $\name$, each investigation includes the respective related competitors. We first compare the efficiency and effectiveness of our graph transformation technique (Section~\ref{sec:preprocess}). The competitors are listed as follows:
\begin{enumerate}[wide, labelwidth=!, labelindent=0pt]
\itemsep0em 
\item \etitle{\FCT{}.} The network transformation of ~\cite{kosyfaki2021flow} represents the most recent work closely related to temporal flow detection. For a fair comparison, we adopt their default settings, including all optimization features. \eat{This configuration in our experiments is referred to as \FCT{}.}
\item \etitle{\BTF{} $\&$ \BRTF{}.} We implement the Dinic's algorithm in~\cite{dinic1970algorithm} on the \TFNet{} and the \RTFNet{}, respectively.
\end{enumerate}

We use $1{,}000$ random queries with a single source $s$ and a single sink $t$. We set a time threshold of $1{,}000$ seconds. We denote the queries that exceed this threshold as "Did Not Finish" ("DNF"). To investigate into the efficiency of the query evaluation of $\name$, we compare four variants as follows:
\begin{enumerate}[wide, labelwidth=!, labelindent=0pt]
	\itemsep0em 
	\item \etitle{\BWCC.} This baseline constitutes the dense flow detection pipeline at \Grab{} by enumerating all combinations of the subsets of $\Src$ and $\Dst$. We enhance this enumeration with the query decomposition and the densest flow merge techniques. Note that \BWCC{} returns the exact answers for \SDMF{} queries.
	\item \etitle{\Greedy.} Our flow peeling algorithm (Algorithm~\ref{algo:greedy}).
    \item \etitle{\Greedy*.} \Greedy{} with the pruning (Algorithm \ref{algo:greedyprune}).
	\item \etitle{\GDYWCC.} We combine \Greedy{} and \BWCC{} (Section~\ref{subsec:dcpeel}).
    \item \etitle{\GDYWCC*.} \GDYWCC{} with the pruning (Section~\ref{sec:pruning}).
\end{enumerate}



\subsection{Effectiveness of Network Transformation (Stage 1)}\label{Exp-Effectiveness-Transformation}

\tr{by analyzing the network processing time and overall efficiency in the default setting. We then examine the impact of varying the parameter $n$ on the runtime of the compared algorithms.}

\tr{
\begin{table}[tb]
\caption{\FCT{} vs. \BTF{} vs. \BRTF{} \wrt{} $\Delta_m$ ("Trans." represents the time taken for network transformation, and "\% of Trans." indicates the percentage of network transformation time relative to the total \BRTF{} time).}\label{table:preprocessing-effect}
\vspace{-0.5em}
\centering
\resizebox{\columnwidth}{!}{%
\begin{tabular}{|c|c|c|c|c|c|c|c|c|c|}
  \hline
  \multirow{2}{*}{\textbf{Datasets}} & \multicolumn{6}{c|}{\textbf{Elapsed Time (ms)}} & \multicolumn{3}{c|}{\textbf{Maximum Flow}} \\ 
  \cline{2-10}
   & \multicolumn{2}{c|}{\FCT{}~\cite{kosyfaki2021flow}} & \multirow{2}{*}{\BTF{}} & \multicolumn{3}{c|}{\BRTF{}} & \multirow{2}{*}{\FCT{}} & \multirow{2}{*}{\textbf{\BTF{}}} & \multirow{2}{*}{\textbf{\BRTF{}}} \\ 
  \cline{2-3} \cline{5-7}
   & $\mathsf{Path}$ & $\mathsf{RunTime}$ & & Trans. & Total & \% of Trans. &  &  &  \\ 
  \hline
  Btc2011 & 5.58 & 0.541 & 4 & 0.8 & 942 & 0.085 & 0.0017 & 0.082 & \textbf{3.331} \\
  Btc2012 & 6134 & 0.14  & 48 & 36.9 & 8383 & 0.440 &  0.0002 & 0.021 & \textbf{1.167} \\
  Btc2013 & 7952 & 0.32 & 201 & 161.4 & 32391 & 0.498 &  0.0005 & 0.026 & \textbf{0.562} \\
  Eth2016 & 5.59 & 0.33 & 15 & 9.1 & 2750 & 0.331 & 7.34 & 39.030 & \textbf{103.073} \\
  Eth2021 & 970 & 0.21 & 1288 & 577.9 & 77164 & 0.749 & 0.04 & 0.279 & \textbf{0.475} \\
  \hline
\end{tabular}%
}
\end{table}
}

\tr{
\begin{table}[tb]
\caption{Network processing time for queries \wrt $\Delta_m$ }\label{table:preprocessing}
\centering
\begin{scriptsize}
\begin{tabular}{|c|m{1.5cm}<{\centering}|m{1.9cm}<{\centering}|m{1.7cm}<{\centering}|}
  \hline
  {\bf Datasets} & {Network reduction (ms)}  & {Network transformation (ms)} & {Network compression (ms)}  \\ 
  \hline
  Btc2011 & 22.9 & 0.8 & 0.5 \\ \hline
  Btc2012 & 301.3 & 36.9 & 25.7 \\ \hline
  Btc2013 & 776.6 & 161.4 & 111.1 \\ \hline
  Eth2016 & 132.0 & 9.1 & 5.4 \\ \hline
  Eth2021 & 4685.0 & 577.9 & 323.1 \\ \hline
\end{tabular}
\end{scriptsize}
\end{table}
}

\tr{
\stitle{Impact of network processing.} For a \TFNet{}, we obtain its corresponding \RTFNet{} by using the network reduction (Section~\ref{sec:reduction}), the network transformation (Section~\ref{sec:transform}), and the network compression (Section~\ref{sec:compression}). For the set of $1{,}000$ random queries of $\Delta_m$, we present the average runtimes for each step across our tested datasets in Table~\ref{table:preprocessing}. Notably, the overall processing time for \RTFNet{} per query remains under $1$ second for all but the Eth2021 dataset. Further, Table~\ref{table:preprocessing-effect} details the average runtimes of \BTF{} and \BRTF{} for $1{,}000$ random queries of $\Delta_m$, each with a single source $s$ and a single sink $t$. As Tables~\ref{table:preprocessing-effect} and \ref{table:preprocessing} illustrate, the network processing time for these queries constitutes less than $7.5\%$ of the total query duration. Additionally, our experiments corroborate that the time required for network transformation is \textit{linear} to the network size. In Table~\ref{table:preprocessing-effect}, we present the results of the maximum flow values computed by~\cite{dinic1970algorithm} on \TFNet{} and \RTFNet{}, respectively. On Btc2011 (resp. Btc2012, Btc2013, Eth2016 and Eth2021), the maximum flow value of \BTF{} is only $2.47\%$ (resp. $1.80\%$, $4.63\%$, $37.87\%$ and $58.74\%$) as much as that of \BRTF{}. This \yxzhao{also} demonstrates the effectiveness of our network processing techniques in reducing the error of the maximum flow value\eat{, enabling more precise detection}.
}

\input{exp-query-performance.tex}

\eat{
\stitle{Impact of Network Transformation\eat{\footnote{$\name{}$ incorporates optional techniques for reducing the size of temporal networks, such as network reduction and network compression. Further details on these optimizations are elaborated in the appendix of our technical report, specifically in Section~\ref{sec:ntxprocess}~\cite{techreport}. Readers interested in the underlying principles and broader context of temporal networks are encouraged to consult this section for more information.}}.} 
}

\stitle{\jiaxin{Impact on Query Time.}} For a \TFNet{}, we obtain its corresponding \RTFNet{} by using the network transformation (Section~\ref{sec:preprocess}). Notably, the network transformation time for \RTFNet{} remains under $1$ second for all datasets. Further, Table~\ref{table:preprocessing-effect} reports the average runtimes of \BTF{} and \BRTF{} for $1{,}000$ random queries on respective datasets, each with a single source $s$ and a single sink $t$. As Table~\ref{table:preprocessing-effect} illustrates, the network transformation time for these queries constitutes less than $1.7\%$ of the total query duration. Additionally, our experiments corroborate that the time required for network transformation is \textit{linear} to the network size.

\stitle{\jiaxin{Impact on Maximum Flow Value.}} In Table~\ref{table:preprocessing-effect}, we also present the results of the maximum flow values computed by~\cite{dinic1970algorithm} on \TFNet{} and \RTFNet{}, respectively. On Btc2011 (resp. Btc2012, Btc2013, Eth2016 and Eth2021), the maximum flow value of \BTF{} is only $2.0\%$ (resp. $1.3\%$ $3.3\%$ $30.6\%$ and $28.8\%$) as much as that of \BRTF{}. This verified that our network processing techniques successfully addressed the problems caused by temporal flow constraint in \TFNet{} and enabled determining much larger flow values. \jiaxin{The implementation of \FCT{}~\cite{kosyfaki2021flow} consists of two primary steps: path enumeration and flow computation. Compared with both \BTF{} and \BRTF{}, \FCT{} consistently yields even smaller flow values. This is primarily due to its hop-bounded path enumeration strategy: \FCT{} enumerates only temporal paths up to a fixed length, which is often shorter than the actual flow distances observed in our queries (with average flow distance exceeding $20$ on both BTC and ETH datasets).}

\tr{
\input{exp-sizes}
}

\tr{
\stitle{Effectiveness of Network Reduction and Network Compression.} The result of network reduction and network compression mentioned in Section~\ref{sec:optimize} is shown in Table~\ref{table:Sizes}. On Btc2011 (resp. Btc2012, Btc2013, Eth2016, and Eth2021), network reduction reduces $94.7\%$ (resp. $87.9\%$, $83.1\%$, $93.3\%$, and $92.4\%$) vertices and $94.9\%$ (resp. $88.4\%$, $83.6\%$, $93.5\%$, and $92.8\%$) edges in the original \RTFNet{}. Furthermore, network compression reduces $77.5\%$ (resp. $77.5\%$, $78.8\%$, $81.3\%$, and $79.1\%$) vertices and $66.7\%$ (resp. $66.5\%$, $66.4\%$, $56.6\%$, and $58.9\%$) edges on the result of network reduction. The reduced network size results in a $72.7\times$ (resp. $15.1\times$, $14.7\times$, and $46.3\times$ and infinite) speed up in maximum flow queries. On Eth2021, maximum flow queries cannot be processed on original \RTFNet{} within the memory limit and cannot be compared. We can conclude that our network reduction and network compression techniques significantly reduce the graph size, thereby accelerating the computation of maximum flow queries.
}

\stitle{\FCT{} vs. \BTF{} vs. \BRTF{}.}  \jiaxin{During our evaluations, we found that increasing the default path length significantly hindered \FCT{}'s ability to complete path enumeration on most of our datasets. Consequently, we adhered to \FCT{}'s default settings with all optimizations. Next, we highlight the results we obtained: \BTF{} detected up to 105 times and at least $7$ times more flow than \FCT{}. Moreover, \BRTF{} identified up to 7{,}945 times more flow than \FCT{}. These discrepancies are caused by \FCT{}'s path enumeration strategy, which only considers paths within a certain length, potentially leading to the loss of flow. Notably, \BRTF{} consistently returned significantly larger temporal flows. In terms of efficiency, \BTF{} always outperformed \FCT{} on all datasets, particularly on BTC2012 where it required as little as 0.21\% of \FCT{}'s time. \emph{In addition, \BTF{} and \BRTF{} also exhibit substantially lower peak memory usage than \FCT{}, since they avoid explicit path materialization and operate on compact transformed networks (Table~\ref{table:preprocessing-effect}).}}

\subsection{Efficiency of Query Evaluation (Stage 2)}\label{Exp-Efficiency}

\stitle{Overall Efficiency (Figure~\ref{fig:overall-performance}).} We show the efficiency of the benchmarked techniques in various datasets using the default settings. It was found that \GDYWCC*{} was the most efficient in all datasets. Specifically, a) on average, \GDYWCC*{} was $4.9$ (resp. $6.5$, $4.1$, $3.9$ and $3.8$) times more efficient than \GDYWCC{} in Btc2011 (resp. Btc2012, Btc2013, Eth2016, and Eth2021) due to the pruning technique, which avoids an enumeration during each peeling step. Additionally, b) when compared to \Greedy*{}, \GDYWCC*{} had a speedup of $33$ (resp. $14$, $15$, $37$ and $33$) times on Btc2011 (resp. Btc2012, Btc2013, Eth2016, and Eth2021) on average. The divide-and-conquer approach reduces the number of peeling iterations due to smaller query sets. Furthermore, c) \GDYWCC*{} had a speedup of $253$ (resp. $97$, $110$, $313$ and $270$) times when compared to \Greedy{} on Btc2011 (resp. Btc2012, Btc2013, Eth2016 and Eth2021) on average. This is because \Greedy{} requires $n^2$ ($n=32$ in our default setting) times the maximum flow computation, which is still time-consuming. Lastly, d) it is not surprising that \GDYWCC*{} was $2{,}745$ (resp. $3{,}087$, $2{,}205$, $3{,}111$ and $349$) times more efficient than \BWCC{} in Btc2011 (resp. Btc2012, Btc2013, Eth2016, and Eth2021) on average since \BWCC{} enumerates each combination and computes the corresponding maximum flow as discussed in Section~\ref{sec:appr}. The improvement is more obvious in smaller datasets since $\Src$ and $\Dst$ has a higher chance of dividing into smaller subsets that the divide-and-conquer approach is more efficient.

\begin{figure}[tb]
    \begin{minipage}{\linewidth}
        \center
        \includegraphics[width=1\textwidth]{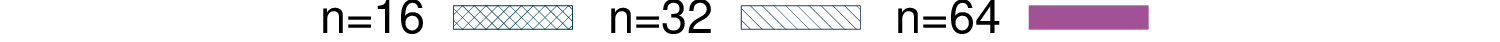}
    \end{minipage}
	\centering    
	\begin{subfigure}[b]{0.22\textwidth}
		\centering
            \includegraphics[width=\textwidth]{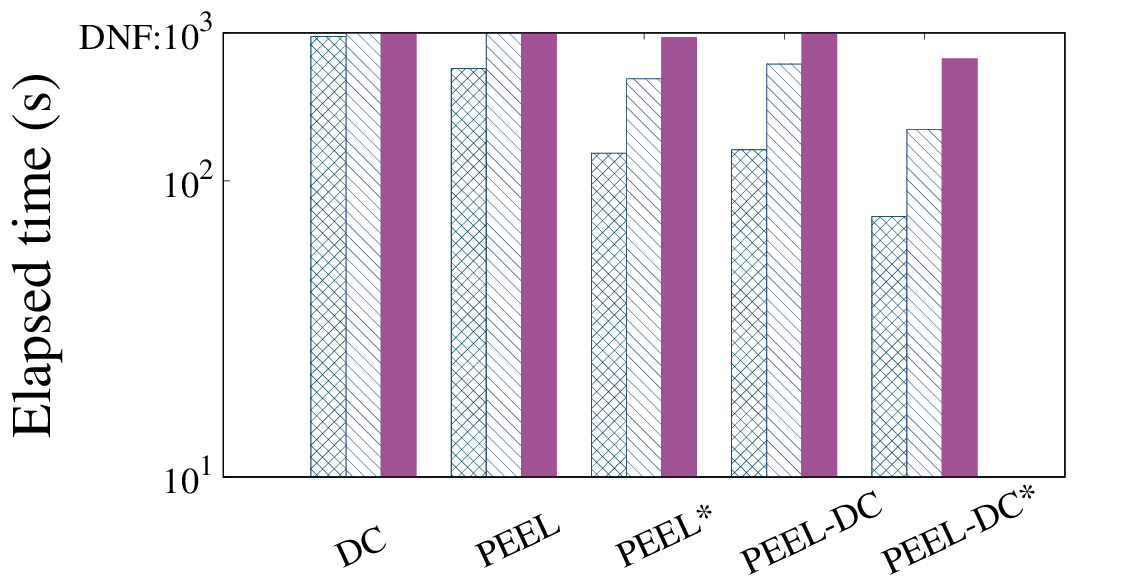}
		\caption{Btc2013}
		\label{fig:btc2013-l}
	\end{subfigure}
	\begin{subfigure}[b]{0.22\textwidth}
		\centering
            \includegraphics[width=\textwidth]{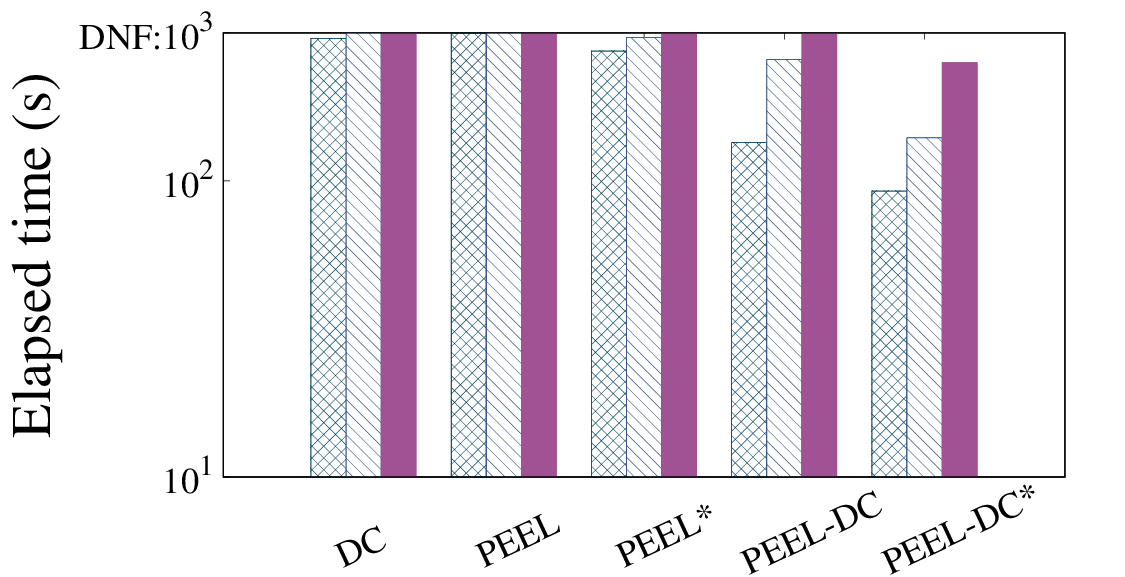}
		\caption{Eth2021}
		\label{fig:eth2021-n}
	\end{subfigure}
 \captionsetup{width=1\linewidth}
 \vspace{-0.5em}
	\caption{Elapsed time by varying $n$ (with $k=6$)}
	\label{fig:l}
\end{figure}

\stitle{Improvement of \BWCC{}.} We further investigate the impact of the divide-and-conquer technique. Our experiments, as shown in Figure~\ref{fig:overall-performance}, demonstrate that \GDYWCC{} (Section~\ref{subsec:dcpeel}) is up to $219$ (resp. $129$ and $23$) times faster than \Greedy{} \wrt $\Delta_d$ (resp. $\Delta_w$ and $\Delta_m$). On average, \GDYWCC{} is $117$ (resp. $40$ and $10$) times faster than \Greedy{} \wrt $\Delta_d$ (resp. $\Delta_w$ and $\Delta_m$). Additionally, we compare the elapsed time of \Greedy*{} and \GDYWCC*{}. The findings demonstrate that \GDYWCC{} achieves speed improvements of up to $81$ times (and $55$ times and $24$ times) compared to \Greedy{} for the different time spans $\Delta_d$, $\Delta_w$, and $\Delta_m$, respectively. On average, \GDYWCC*{} is $46$ (resp. $24$ and $9$) times faster than \Greedy*{} \wrt $\Delta_d$ (resp. $\Delta_w$ and $\Delta_m$). This verifies that it is efficient to divide sets $\Src$ and $\Dst$ into smaller subsets $\Src_i$ and $\Dst_i$, resulting in significantly fewer combinations of $\Src_i$~and~$\Dst_i$ in query evaluation.

\begin{figure}[tb]
\begin{minipage}{\linewidth}
        \center
        \includegraphics[width=1\textwidth]{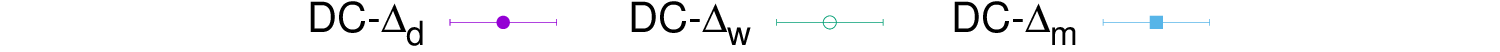}
    \end{minipage}
	\centering
	\begin{subfigure}[b]{0.2\textwidth}
		\centering
		\captionsetup{width=1\linewidth}
            \includegraphics[width=\textwidth]{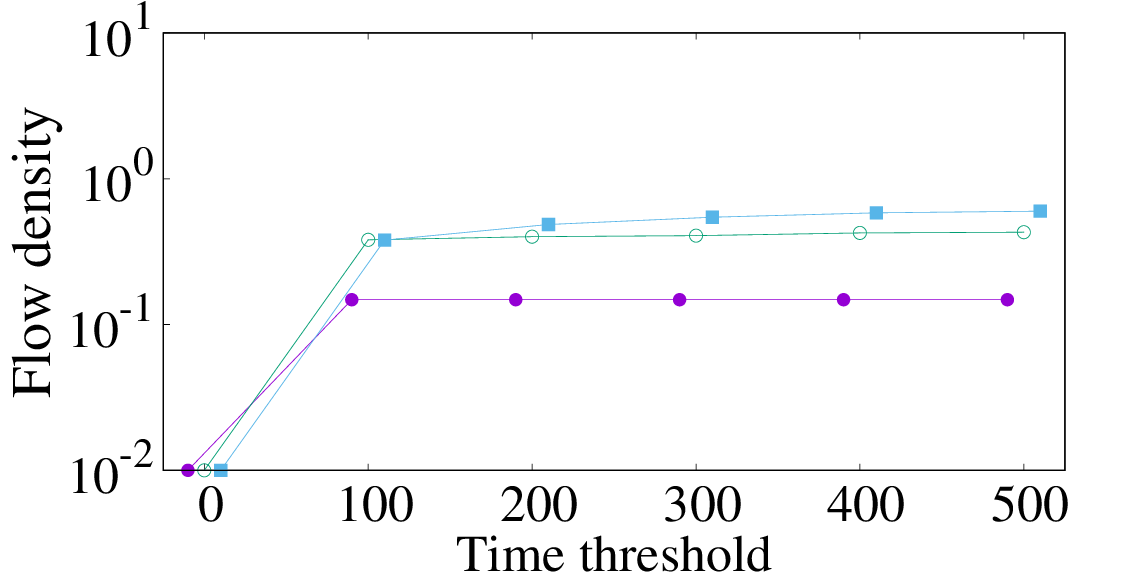}
		\caption{Btc2013}
		\label{fig:btc2013-truth}
	\end{subfigure}
	\begin{subfigure}[b]{0.2\textwidth}
		\centering
            \includegraphics[width=\textwidth]{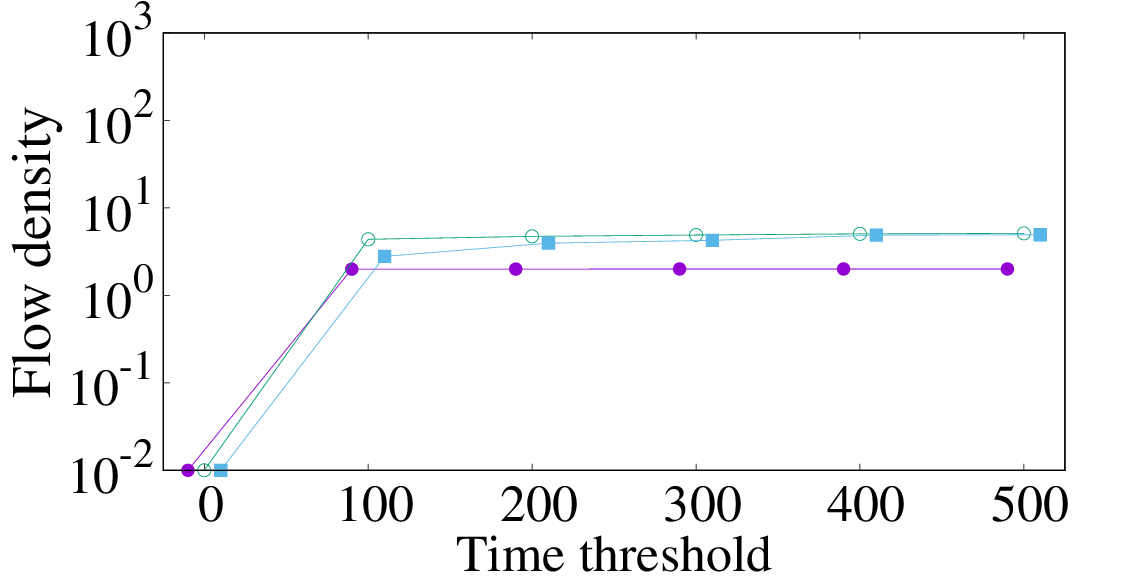}
		\caption{Eth2021}
		\label{fig:eth2021-truth}
	\end{subfigure}
    \captionsetup{width=1.05\linewidth}
    \vspace{-0.5em}
	\caption{Flow density of \BWCC{} for queries of different timespans (denoted by the subscript) under different time thresholds}
	\label{fig:effect-truth}
\end{figure}

\stitle{Improvement of Pruning.} We next evaluate the impact of the pruning technique (Line~\ref{algo:greedyprune:break} of Algorithm~\ref{algo:greedyprune}). As shown in Figure~\ref{fig:overall-performance}, \Greedy*{} is up to $17$ (resp. $278$ and $934$) times faster than \Greedy{} \wrt $\Delta_d$ (resp. $\Delta_w$ and $\Delta_m$). On average, \Greedy*{} is $4$ (resp. $72$ and $321$) times faster than \Greedy{} \wrt $\Delta_d$ (resp. $\Delta_w$ and $\Delta_m$). We also compare the performance improvement of the pruning technique in \GDYWCC*{}. \GDYWCC*{} is up to $6$ (resp. $9$ and $7$) times faster than \GDYWCC{} \wrt $\Delta_d$ (resp. $\Delta_w$ and $\Delta_m$). On average, \GDYWCC*{} is $4.0$ (resp. $5.4$ and $4.4$) times faster than \GDYWCC{} \wrt $\Delta_d$ (resp. $\Delta_w$ and $\Delta_m$). The reason for such an improvement is that the lower bounds of the peeling flow prune some unnecessary enumerations.

\stitle{Impact of the Timespans.} We evaluate the impact of the timespans by varying the length of the timespans on the performance of the compared algorithms. a) Not surprisingly, as the length of the timespans increases, all algorithms take longer to finish. \tr{When the duration of the timespans increases from $\Delta_d$ to $\Delta_w$, the elapsed time of \BWCC{} (resp. \Greedy{}, \Greedy*{}, \GDYWCC{}, and \GDYWCC*{}) increases by $576$ (resp. $48$, $57$, $249$, and $202$) times. When the length of the timespans increases from $\Delta_w$ to $\Delta_m$, the elapsed time of \BWCC{} (resp. \Greedy{}, \Greedy*{}, \GDYWCC{}, and \GDYWCC*{}) increases by $3$ (resp. $10$, $13$, $33$ and $31$) times.} This is consistent with the time complexity of the maximum flow $O(M)$, where $M=|V|^2|E|$, as $M$ is the factor of the complexities of the algorithms for \SDMF{}. b) It is important to note that \BWCC{}, \Greedy{} and \Greedy*{} fail to return the densest flow over long timespans, whereas \GDYWCC{} and \GDYWCC*{} can detect the densest flow over all timespans. As indicated in Section~\ref{sec:appr}, the main bottleneck of \SDMF{} is the time to calculate the maximum flow, which can be well reduced by \GDYWCC{} and \GDYWCC*{}.

\eat{
\begin{figure}[tb]
	\centering
	\begin{subfigure}[b]{0.24\textwidth}
		\centering
		\captionsetup{width=1.3\linewidth}
		\includegraphics[width=\textwidth]{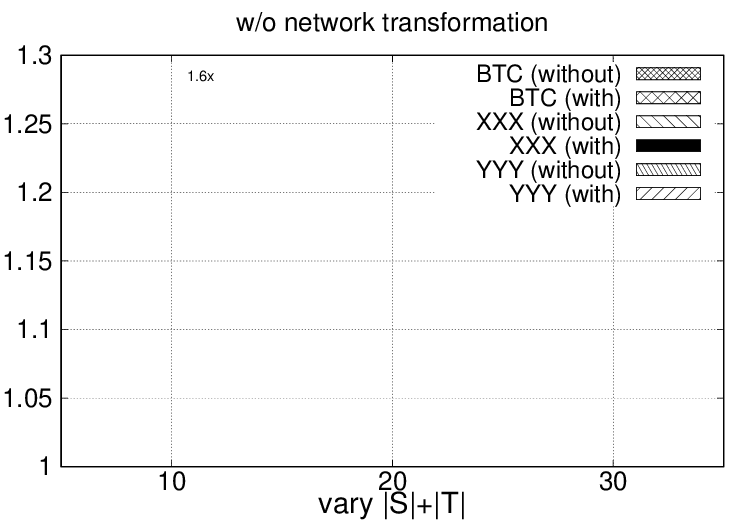}
		\caption{Btc2013}
		\label{fig:btc2013-k}
	\end{subfigure}
	\hfill
	\begin{subfigure}[b]{0.24\textwidth}
		\centering
		\includegraphics[width=\textwidth]{./figures/STMF-tran.eps}
		\caption{xxx}
		\label{fig:xxx-k}
	\end{subfigure}
	\caption{Average runtimes of algorithms by varying $k$}
	\label{fig:k}
\end{figure}
}

\begin{figure*}[tb]
    \centering
    \includegraphics[width=\linewidth]{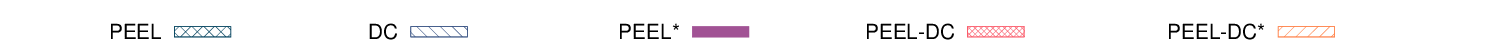}
     \setlength\tabcolsep{-3pt}
    \begin{tabular}{ccccc}
        \begin{minipage}{0.2\linewidth}
            \centering
            \includegraphics[width=\textwidth]{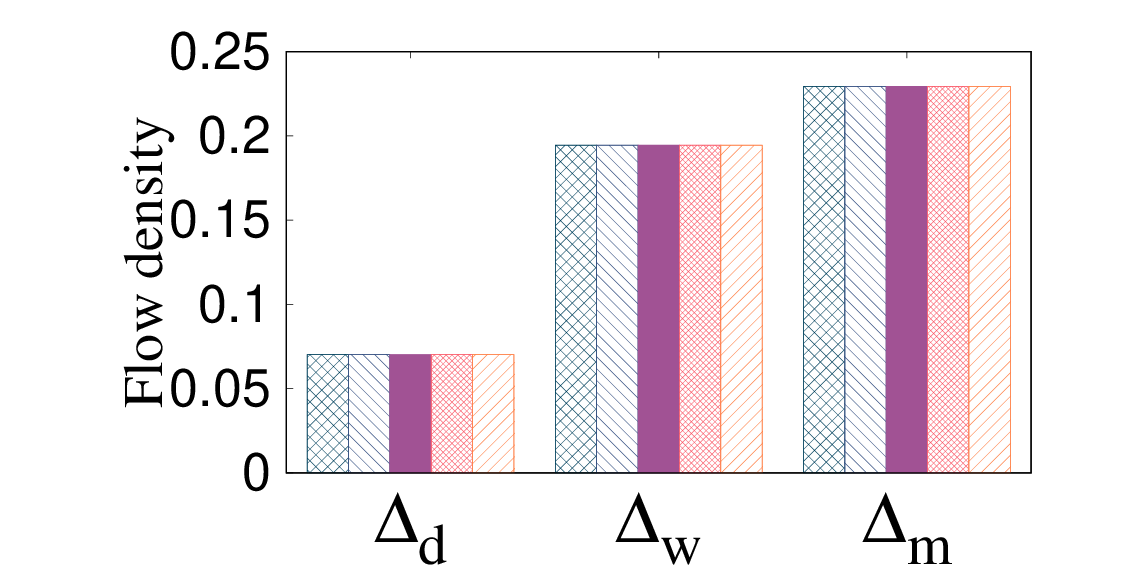}
            \subcaption{Btc2011}
            \label{fig:effect-2011-d}
        \end{minipage}
        & \begin{minipage}{0.2\linewidth}
            \centering
            \includegraphics[width=\textwidth]{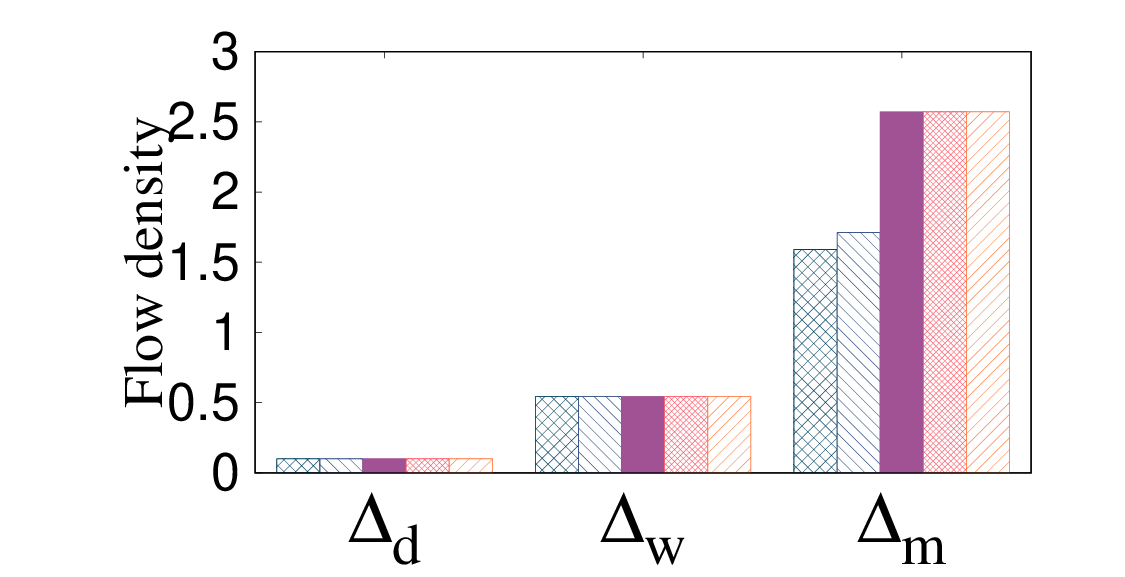}
            \subcaption{Btc2012}
            \label{fig:effect-2012-d}
        \end{minipage}
        & \begin{minipage}{0.2\linewidth}
            \centering
            \includegraphics[width=\textwidth]{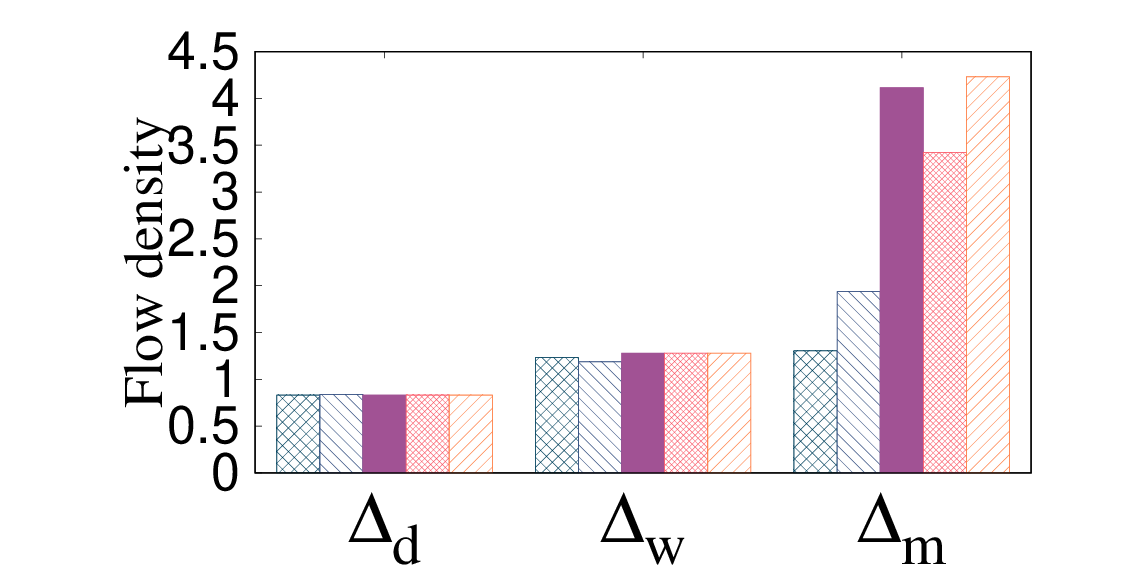}
            \subcaption{Btc2013}
            \label{fig:effect-2013-d}
        \end{minipage}
        & \begin{minipage}{0.2\linewidth}
            \centering
            \includegraphics[width=\textwidth]{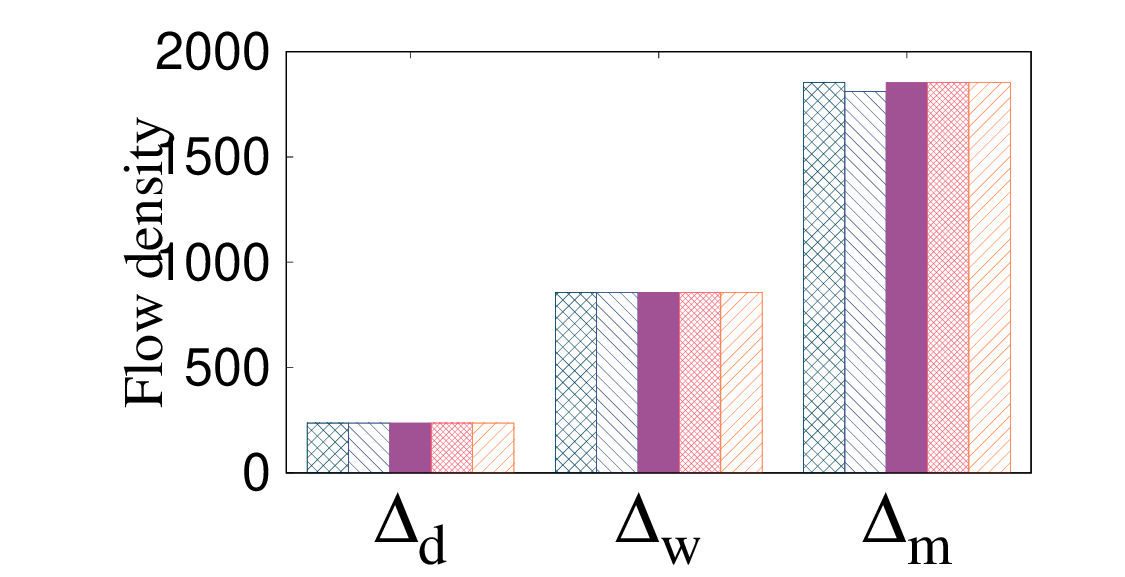}
            \subcaption{Eth2016}
            \label{fig:effect-xxx-d}
        \end{minipage}
        & \begin{minipage}{0.2\linewidth}
            \centering
    \includegraphics[width=\textwidth]{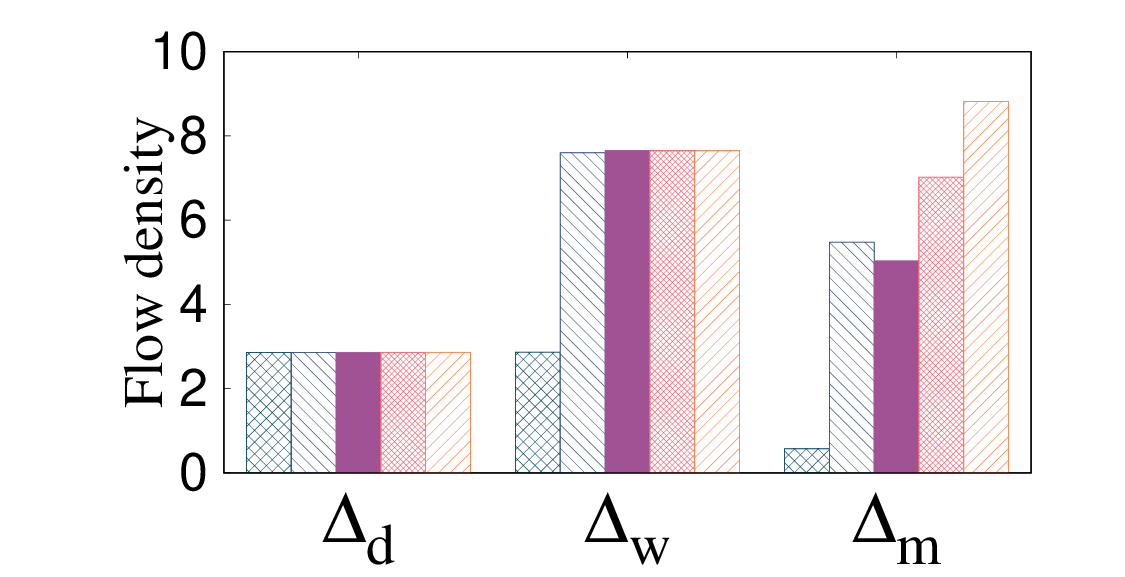}
            \subcaption{Eth2021}
            \label{fig:effect-yyy-d}
        \end{minipage}
    \end{tabular}
    \vspace{-0.5em}
    \caption{Flow density of algorithms on the five real-world datasets under default settings}
    \label{fig:overall-effectiveness}
\end{figure*}

\stitle{Impact of $n$.} The time complexity of the compared algorithms using \Greedy{} is relevant to the parameter $n$, \ie the size of the query vertex sets. In Figure~\ref{fig:l}, we present the runtimes of the compared algorithms on two larger datasets, Btc2013 and Eth2021, for different values of $n$. As expected, the runtimes of the compared algorithms increase as $n$ increases. 

\tr{This is due to the fact that a larger $n$ results in more combinations of subsets of $S$ and $T$. Additionally, similar to the results shown in Figure~\ref{fig:overall-performance}, \GDYWCC* is the most efficient among the compared algorithms for different settings.}

\begin{table}[!tb]
\caption{\jiaxin{Peak memory (MB) of algorithms on five real-world datasets under the default settings. Numbers in parentheses are normalized to \BWCC{} (lower is better).}}
\label{tab:peak-mem}
\centering
\resizebox{\linewidth}{!}{%
\jiaxin{\begin{tabular}{|c||c|c|c|c|c|}
\hline
\textbf{Dataset} 
& \textbf{\BWCC{}} 
& \textbf{\Greedy{}} 
& \textbf{\Greedy*{}} 
& \textbf{\GDYWCC{}} 
& \textbf{\GDYWCC*{}} \\
\hline
Btc2011 
& 1,129 (100.00\%) 
& 11 (0.97\%) 
& 11 (0.97\%) 
& 6 (0.53\%) 
& 6 (0.53\%) \\
Btc2012 
& 1,176 (100.00\%) 
& 65 (5.53\%) 
& 65 (5.53\%) 
& 25 (2.13\%) 
& 25 (2.13\%) \\
Btc2013 
& 1,160 (100.00\%) 
& 115 (9.91\%) 
& 115 (9.91\%) 
& 47 (4.05\%) 
& 47 (4.05\%) \\
Eth2016 
& 1,116 (100.00\%) 
& 42 (3.76\%) 
& 42 (3.76\%) 
& 15 (1.34\%) 
& 15 (1.34\%) \\
Eth2021 
& 1,641 (100.00\%) 
& 1,200 (73.12\%) 
& 1,200 (73.12\%) 
& 478 (29.13\%) 
& 478 (29.13\%) \\
IBM & 27,050 (100.00\%) & 27,048 (99.99\%) & 26,929 (99.55\%) & 8,532 (31.54\%) & 8,532 (31.54\%) \\
\hline
\end{tabular}}
}
\end{table}

\noindent \jiaxin{\stitle{Peak Memory Usage.}
To further assess space efficiency, we profiled the peak memory of each method on the same platform and query set. Table~\ref{tab:peak-mem} reports the peak memory consumption in megabytes (MB) on all datasets, with values in parentheses normalized to \BWCC{}. Overall, the reduction in memory footprint mainly comes from \emph{the search space reduction} and \emph{avoiding maintaining a large number of max-flow instances simultaneously}, rather than from pruning itself. Compared to \BWCC{}, the \Greedy{}-based variants use substantially less memory, because \Greedy{} limits the amount of intermediate max-flow state that needs to be maintained. As a result, \Greedy*{} (resp. \GDYWCC*{}) consumes only around $32.14\%$ (resp. $11.46\%$) of \BWCC{}'s peak memory on average.}

\subsection{Effectiveness of $\name{}$}\label{Exp-Effectiveness}

\tr{
Only \BWCC{} computes the exact densest flow of the \SDMF{} problem, while the remaining four algorithms provide approximate answers. Therefore, we examine the effectiveness of the five algorithms. 
}

\stitle{Flow Density of \BWCC.} In Figure~\ref{fig:effect-truth}, we present the average maximum flow densities of \BWCC{} for $1{,}000$ random queries under different thresholds of elapsed time for the three types of timespans in Btc2013 and Eth2021 datasets. In a nutshell, the \textit{denser} the detected flow, the \textit{more effective} the method. The computed maximum flow density increases rapidly in the first $100$ seconds. However, after $500$ seconds, the flow density value increases slowly in all scenarios. Based on these observations, we set the elapsed time threshold to $1{,}000$ seconds and \textit{\BWCC{} is configured to return the densest flow found within this time threshold.}

\stitle{Approximation Quality.} \jiaxin{We further compare the solution quality of our peeling-based methods against the exact baseline \BWCC{} on instances where \BWCC{} is able to complete. The results show that all peeling variants produce solutions extremely close to the optimum. Specifically, \Greedy{} and \Greedy*{} differ from \BWCC{} by only $0.03\%$ on average, while the hybrid variants \GDYWCC{} and \GDYWCC*{} further reduce the average difference to $0.01\%$. These results indicate that, in practice, the proposed peeling-based algorithms incur negligible loss in solution quality while achieving performance improvements.}

\stitle{Overall Effectiveness (Figure~\ref{fig:overall-effectiveness}).} We present the effectiveness of the five techniques in terms of flow density. a) In most cases, \GDYWCC*{} found the densest flow among the five techniques. In our experiments, \GDYWCC*{} is capable of returning the results \textit{before} the threshold. On average, \GDYWCC*{} detects $1.15$ (resp. $1.26$, $1.03$, and $1.03$) times denser flow than \BWCC{} (resp. \Greedy{}, \Greedy*{} and \GDYWCC{}). b) Compared to the exact algorithm \BWCC{}, we observe that although \BWCC{} is able to return answers for all queries on smaller datasets, such as Btc2011, \GDYWCC*{} is still very competitive. We denote the relative error by $\epsilon=\frac{g-\hat{g}}{g}$ to measure the densest quality, where $\hat{g}$ and $g$ are the flow densities of \BWCC{} and \GDYWCC*{}, respectively. Our experiment shows that \GDYWCC*{} introduces an error of only $0.2\%$, \wrt $\Delta_d$. It is noteworthy that, despite \BWCC{}'s capability to return exact answers, \eat{its reliance on exponential many combinations hinders its ability to return precise results within a designated timeframe. In particular, }\BWCC{} cannot finish all possible subset combinations, resulting in a lower flow density at the termination due to the time threshold.


\stitle{Impact of $k$.} In Figure~\ref{fig:effect-effect-k}, we examine the impact of the parameter $k$ on the flow densities. By varying $k$ to $6$, $8$, and $10$, we observe the following: a) \GDYWCC*{} consistently outperforms the other algorithms in finding the densest flows. b) As $k$ increases, the flow density decreases.\eat{ We observe that some fraudsters form smaller groups, which results in a higher volume of money transfers. }

\begin{figure}[tb] 
        \includegraphics[width=0.9\linewidth]{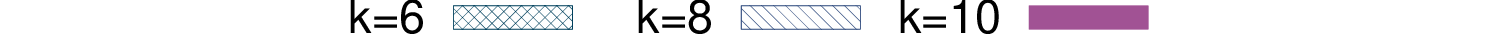} 
	\centering
	\begin{subfigure}[b]{0.23\textwidth}
		\centering
		\includegraphics[width=\textwidth]{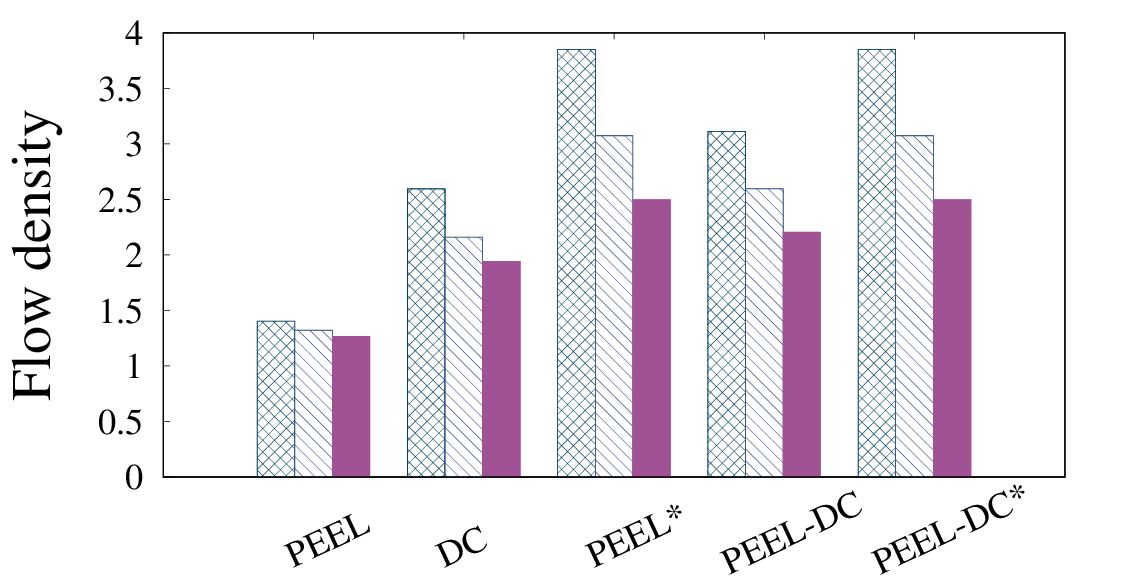}
		\caption{Btc2013}
		\label{fig:btc2013-k}
	\end{subfigure}
	\begin{subfigure}[b]{0.23\textwidth}
		\centering
		\includegraphics[width=\textwidth]{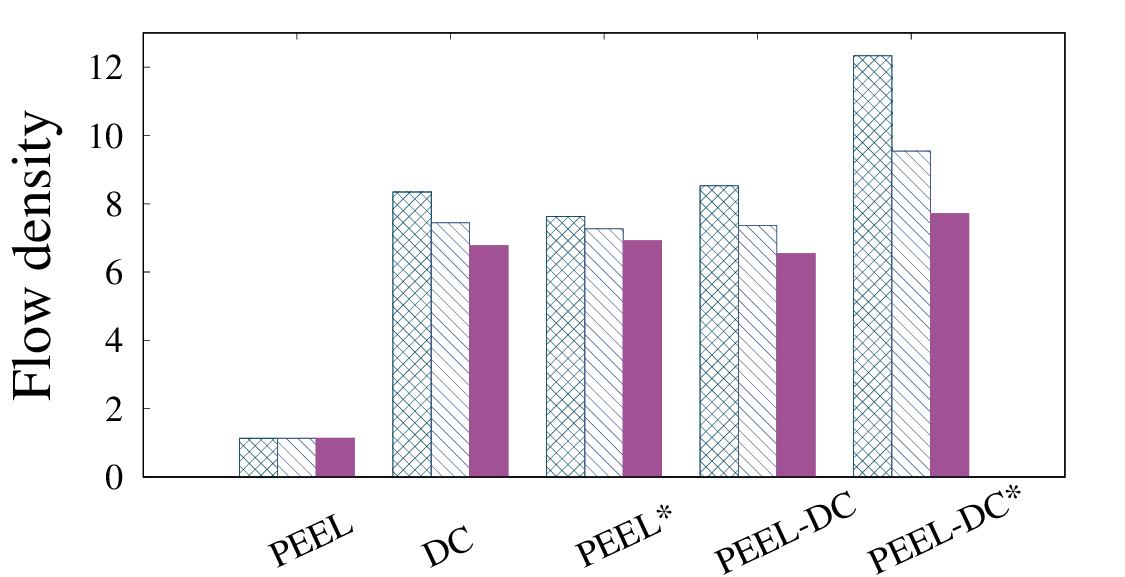}
		\caption{Eth2021}
		\label{fig:eth2021-effect-k}
	\end{subfigure}
    \captionsetup{width=1.1\linewidth}
    \vspace{-0.5em}
	\caption{Flow density of algorithms by varying $k$ (with $n=32$)}
	\label{fig:effect-effect-k}
\end{figure}

\tr{
\begin{figure}[tb]
    \begin{minipage}{\linewidth}
        \center
        \includegraphics[width=0.9\textwidth]{fig-script/fig/title-n.eps}
    \end{minipage}
	\centering
	\begin{subfigure}[b]{0.23\textwidth}
		\centering
		\includegraphics[width=\textwidth]{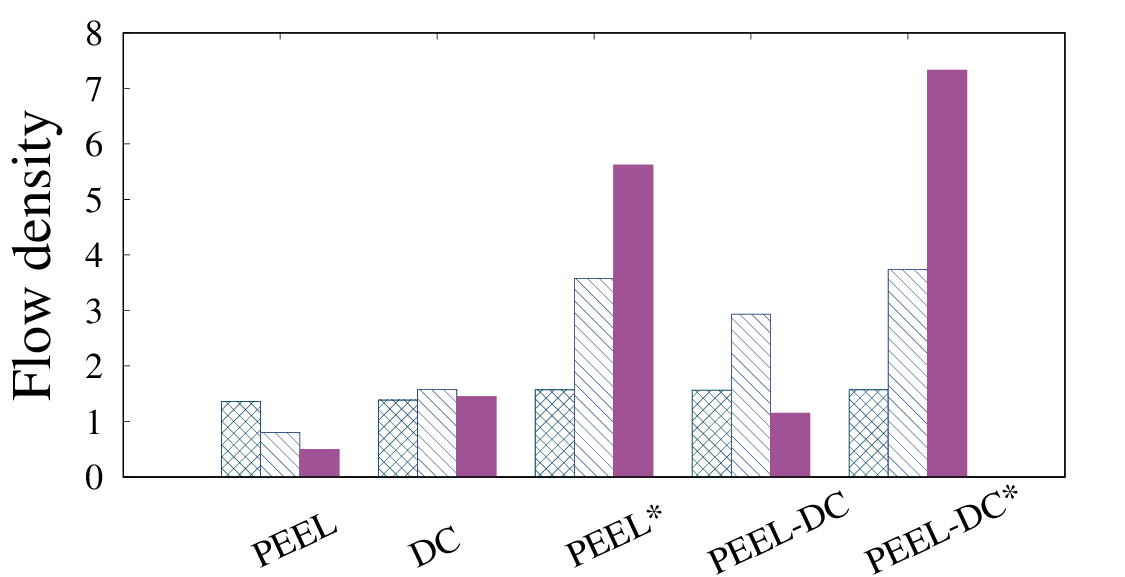}
		\caption{Btc2013}
		\label{fig:btc2013-effect-l}
	\end{subfigure}
	\begin{subfigure}[b]{0.23\textwidth}
		\centering
		\includegraphics[width=\textwidth]{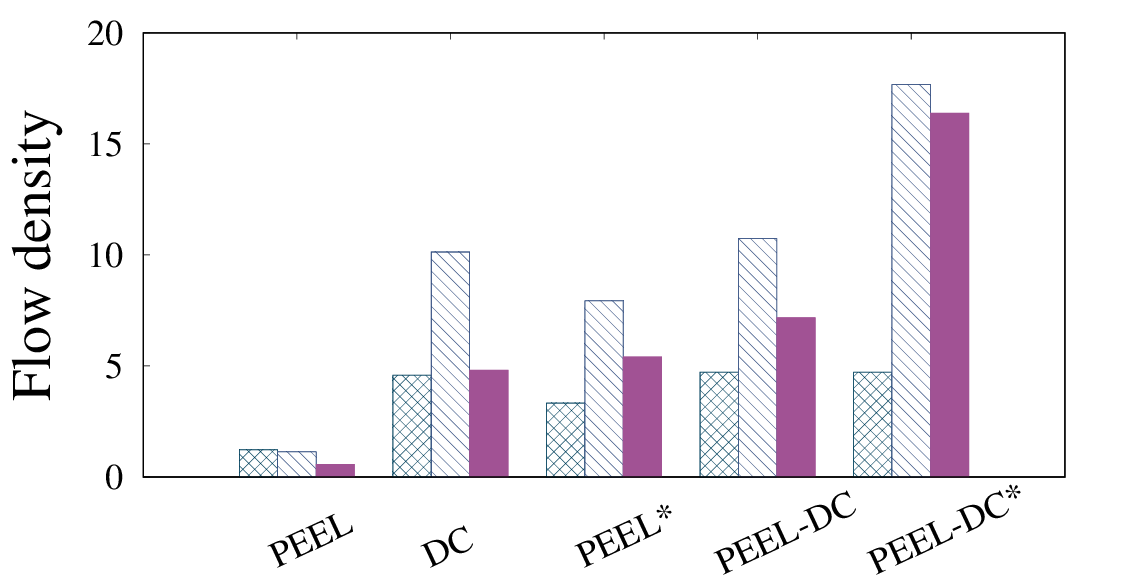}
		\caption{Eth2021}
		\label{fig:eth2021-effect-l}
	\end{subfigure}
    \captionsetup{width=1.1\linewidth}
    \vspace{-0.5em}
	\caption{Flow density of algorithms by varying $n$ (with $k=6$)}
	\label{fig:effect-l}
\end{figure}
}

\tr{
\stitle{Impact of $n$.} As we vary the parameter $n$ in our experiments, we observe that  \GDYWCC*{} significantly outperforms the other techniques. Specifically, when $n=16$, \GDYWCC*{} is able to find $1.13$ (resp. $1.15$, $1$, and $1.01$) times denser flow than \BWCC{} (resp. \Greedy{}, \Greedy*{} and \GDYWCC{}). As $n$ increases, the advantage of \GDYWCC*{} becomes even larger. When $n=64$, \GDYWCC*{} can find up to $5.05$ (resp. $14.79$, $1.30$, and $6.37$) times denser flow than \BWCC{} (resp. \Greedy{}, \Greedy*{} and \GDYWCC{}). Our results demonstrate that \GDYWCC*{} consistently outperforms the compared algorithms.
}

\tr{
In conclusion, our experiments show that \GDYWCC*{} is an effective technique for detecting dense flows in temporal transaction networks. It outperforms the other algorithms in terms of efficiency, effectiveness, and scalability under various settings, including different values of $k$ and $n$. \GDYWCC*{} is able to find the densest flow even in larger datasets such as Btc2013 and Eth2021, and its performance does not deteriorate as the value of $k$ or $n$ increases.
}

\eat{
\etitle{Size of transformed networks.} To obtain a temporal-free network, we transform a \TFNet{} to \RTFNet{} as introduced in Sec~\ref{sec:transform}. We report the size of the network after reduction and transformation in Tab.~\ref{table:Statistics}. For real-life networks, $\name$ reduces $XX\%$ of the networks.

\etitle{Accuracy.} We randomly selected two sets of vertices, $\Src$ and $\Dst$, from $|V|$. 1) We compared the $\MFavg(\Src, \Dst)$ between $\Src$ and $\Dst$ on \TFNet{} and \RTFNet{} to show the errors without graph transformation. 2) We also shown the effectiveness of $\Greedy$ in practice. The exact $\MFavg(\Src, \Dst)$ is computed by $\Baseline$. For consistence, we denoted answers returned by $\Greedy$ and $\Baseline$ on \TFNet{} by $\hat{\MFavg}(\Src, \Dst)$. The error is denoted by $\epsilon = \frac{|\hat{\MFavg}(\Src, \Dst) - \MFavg(\Src, \Dst)|}{\MFavg(\Src, \Dst)}$. We repeated the above procedure $1000$ times and got the average error. Since $\Baseline$ cannot scale to process large $\Src$ and $\Dst$ when $|\Src| + |\Dst| > \red{XX}$, we set $|\Src| + |\Dst| = \{\red{XX1, XX2, XX3}\}$ as we presented in Fig.~\ref{fig:accurcy-btc}. On BTC2009, $\bar{\epsilon}$ of $\Greedy$ (resp. $\Baseline$-$\mathsf{TF}$) is $\red{YY1}$ (resp. $\red{YY2}$). Similarly, $\Greedy$ also very accurate on BTC2010$\sim$BTC2013 in practice.

\stitle{Effectiveness.} We also evaluate the effectiveness of our approximation algorithm (detailed in Sec.~\ref{sec:appr}).

\begin{figure}[tb]
	\centering
	\begin{subfigure}[b]{0.22\textwidth}
		\centering
		\includegraphics[width=\textwidth]{./figures/STMF-tran.eps}
		\caption{Accuracy on BTC}
		\label{fig:accurcy-btc}
	\end{subfigure}
	\hfill
	\begin{subfigure}[b]{0.22\textwidth}
		\centering
		\includegraphics[width=\textwidth]{./figures/STMF-tran.eps}
		\caption{Accuracy on XXX}
		\label{fig:accuracy-xxx}
	\end{subfigure}
	\caption{Comparison between {\TFNet} and {\RTFNet}}
	\label{fig:sizeappr}
\end{figure}

}

\section{INDUSTRIAL APPLICATION}

\subsection{Case Study 1: Application of $\name{}$ within \Grab{}}\label{sec:real}

\stitle{Dataset Description.} To validate the practicality of $\name{}$, we tested its efficiency and effectiveness on an industry dataset, \gfg{}, provided by \Grab{}. \gfg{} comprises a transaction flow network with $|V| = 3.38$M nodes and $|E| = 28.64$M edges, where each node represents entities such as users, merchants, digital wallets, or card numbers. Each edge represents transactions or transfer records between these nodes (all data have been normalized to large random values for business privacy).

\stitle{Flow Density as a Fraud Indicator.} We examined the flow densities of different fraudulent and benign groups, using ground truth provided by Grab. For each group size, we collected 100 samples of fraudulent groups. If a node among the suspects had larger outgoing funds than incoming funds, we allocated it to $\Src$; if the incoming flow was greater, it was assigned to $\Dst$. As illustrated in Figure~\ref{fig:motivation}, fraudulent groups tend to exhibit higher flow density compared to benign ones, underscoring flow density as a strong indicator of fraudulent activity. The figure also reveals that smaller groups have higher flow density, highlighting the necessity of controlling group size in queries. Without a minimum size threshold \( k \), queries might only return small groups, potentially overlooking larger fraud networks. By enforcing the \( k \) parameter, \SDMF{} captures extensive fraud networks, increasing the likelihood of identifying a broader range of fraudulent actors.

\begin{figure}[tb]
\centering
    \includegraphics[width=0.75\linewidth]{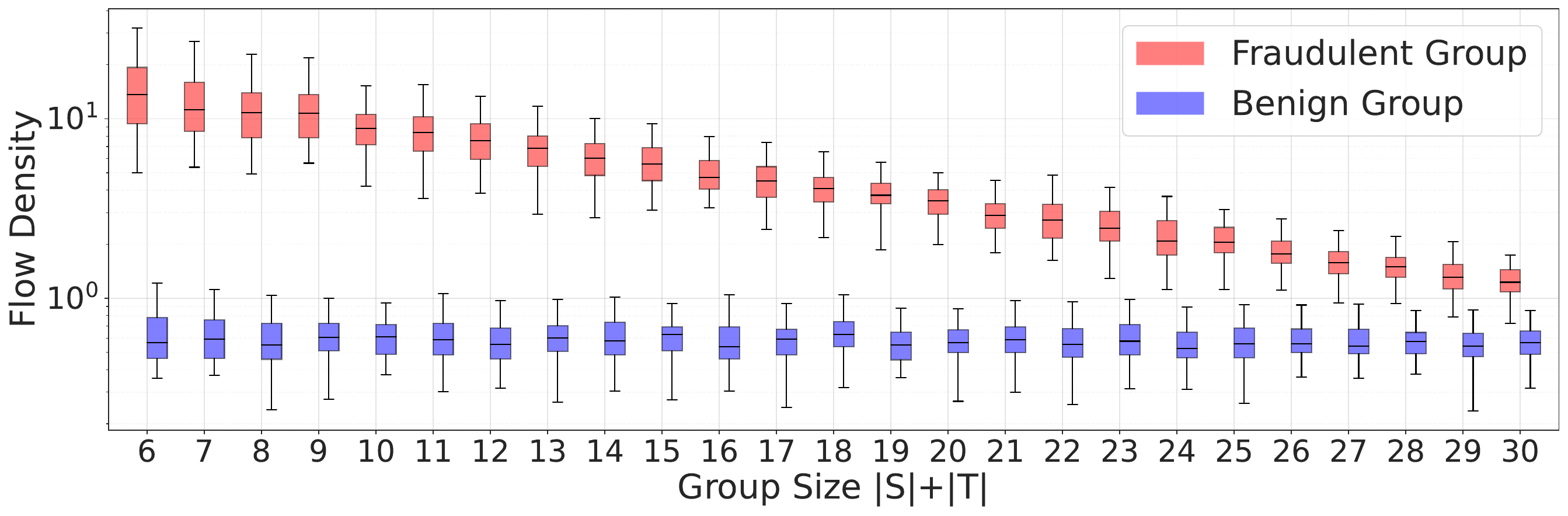}
    \vspace{-0.5em}
    \caption{Distribution of Flow Density by Group Size}\label{fig:motivation}
\end{figure}

\stitle{Query Formulation.} We used \Spade{}~\cite{jiang2023spade} to identify suspects. \jiaxin{\Spade{} supports three density metrics: \emph{dense subgraphs} (DG~\cite{charikar2000greedy}), \emph{dense weighted subgraphs} (DW~\cite{gudapati2021search}), and \emph{Fraudar}-based density (FD~\cite{hooi2016fraudar})}. We divided the detected suspects into query sets $\Src$ and $\Dst$ for each mode. If a node had larger outgoing funds than incoming funds among the suspects, we allocated it to $\Src$; if the incoming flow was greater, it was assigned to $\Dst$. We then utilized $\Src$ and $\Dst$ as queries, setting $k$ to $20\%$ of $|\Src| + |\Dst|$, i.e., $k = 470$.

\stitle{Impact of Temporal Dependency.} We compared the detection results with and without incorporating temporal dependencies. We observed that neglecting temporal dependencies led to detecting only $22.7\%$ of the flow volume. Specifically, precision merely increased from $64.93\%$ to $82.39\%$, a less substantial improvement compared to cases considering temporal dependencies. These results underscore the critical role of temporal context in improving the accuracy of fraud detection, emphasizing the need for methodologies that account for the timing of transactions.

\stitle{Efficiency and Effectiveness.} Since \BWCC{} could not finish within an hour, we limited our comparison to four algorithms of $\name{}$ and \Spade{}. The results are illustrated in Figure~\ref{fig:gfg}. \GDYWCC* took only $1.04\%$, $0.43\%$, and $0.14\%$ of the time that \Spade{} takes, respectively, yet detected $3.51\times$, $8.41\times$, and $3.70\times$ denser money flows under DG, DW, and FD modes. This significantly reduces the time needed for manual screening and verification at \Grab{}. It is worth noting that the detection pipeline in \Grab{}, \BWCC{}, could not complete the \SDMF{} queries within 24 hours, whereas \GDYWCC* typically completed the \SDMF{} queries in about one second.

\stitle{Detection of Credit Card Fraud.} Compared to the ground truth provided by \Grab{}, only $64.9\%$ of the cases reported by \Spade{} are fraudsters or fraudulent cards. The precision is enhanced from $64.9\%$ to $95.8\%$, and the flow density is increased from $0.46$ to $0.80$ (after normalizing the amounts to random values). This improvement is attributed to the flow peeling, which excludes users with lower participation.

\begin{figure}[tb]
\begin{minipage}{\linewidth}
        \centering
        \includegraphics[width=0.9\textwidth]{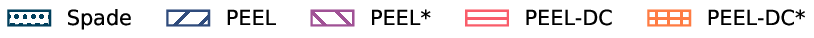}
    \end{minipage}
 	\begin{subfigure}[b]{0.235\textwidth}
		\centering
		\includegraphics[width=\textwidth]{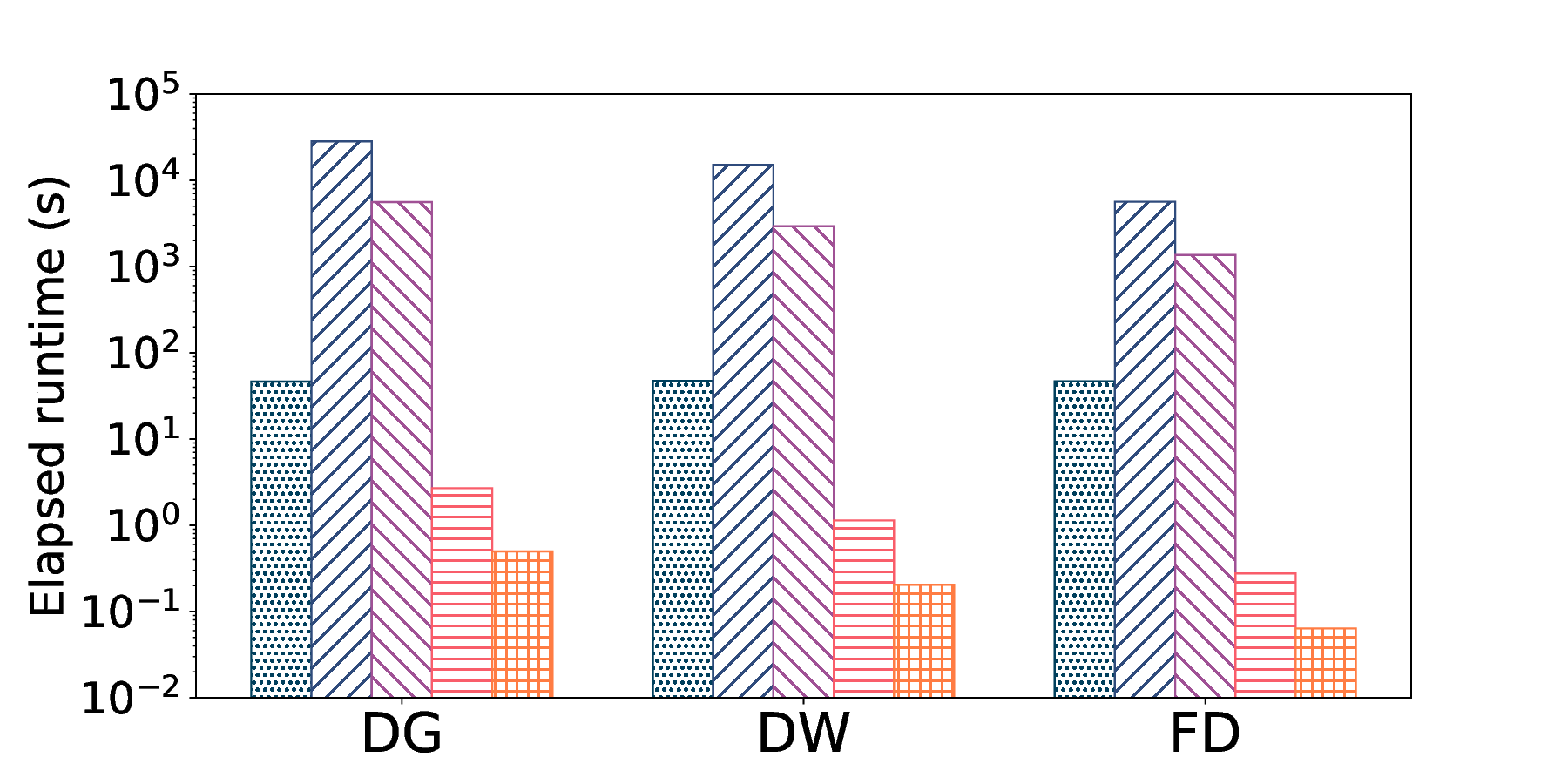}
 		\caption{Runtime}
		\label{fig:gfgruntime}
	\end{subfigure}
	\begin{subfigure}[b]{0.235\textwidth}
		\centering
		\includegraphics[width=\textwidth]{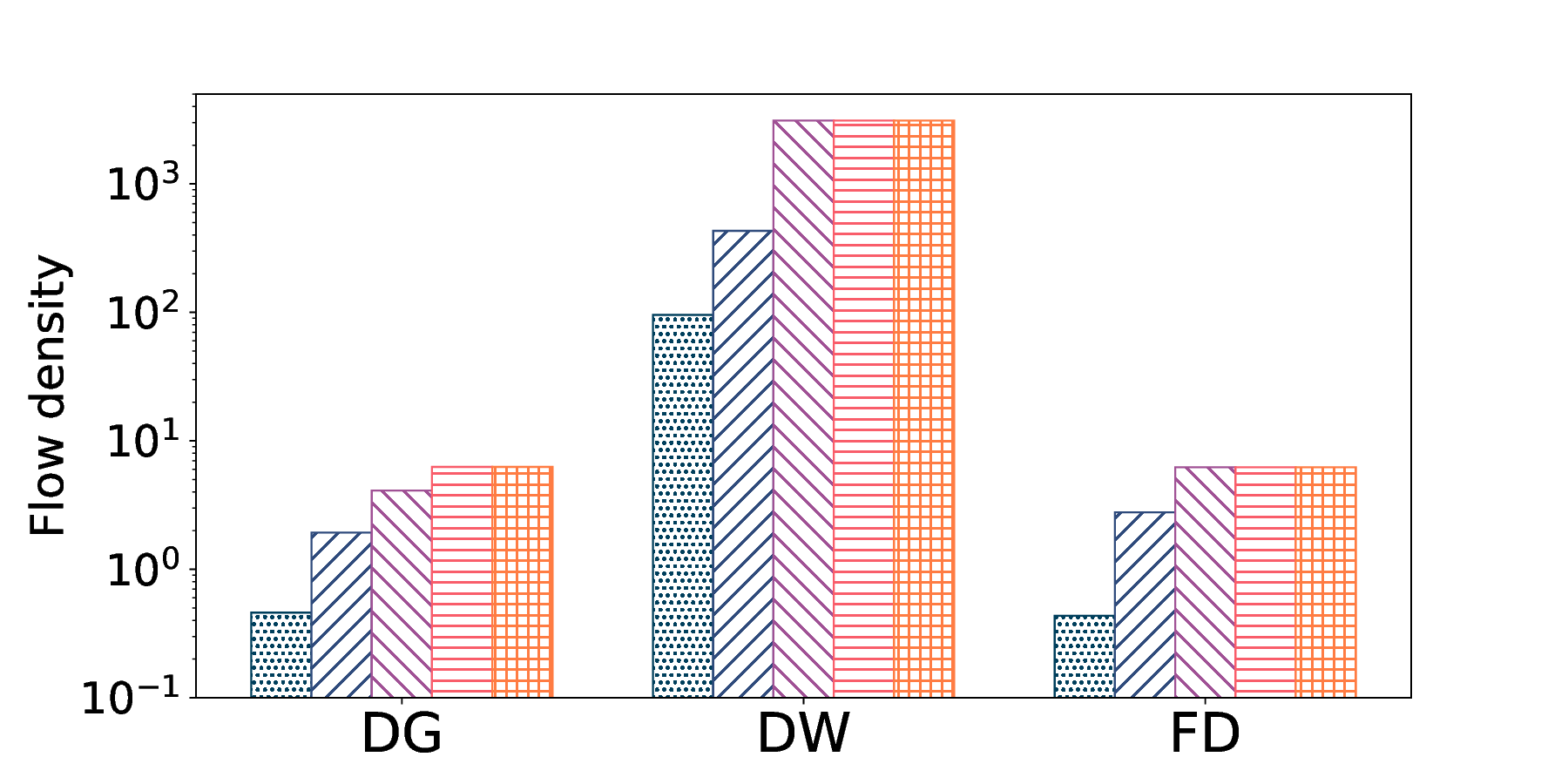}
 		\caption{Flow density}
		\label{fig:gfgdensity}
	\end{subfigure}
    \begin{center}
    \begin{subfigure}[b]{0.48\textwidth}
    \centering
    \includegraphics[width=0.8\linewidth]{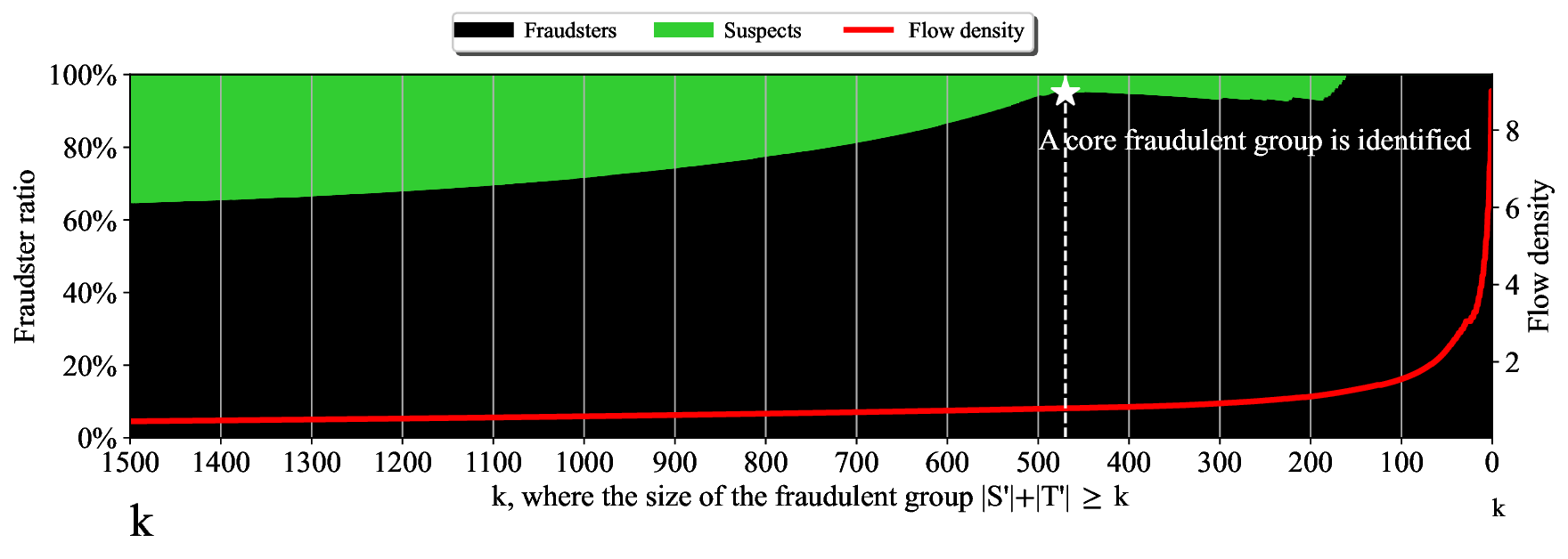}
    \vspace{-0.5em}
    \caption{Impact of $k$ to fraudster ratio and flow density}
    \label{fig:gfgvaryk}
    \end{subfigure}
    \end{center}
    \vspace{-1.2em}
    \caption{Case study on \Grab{} transaction networks}
    \label{fig:gfg}
\end{figure}

\subsection{Case Study 2: Application of $\name{}$ within NFT}\label{Exp-CaseStudy}

\begin{figure}[tb]
	\begin{center}
        \includegraphics[width=0.85\linewidth]{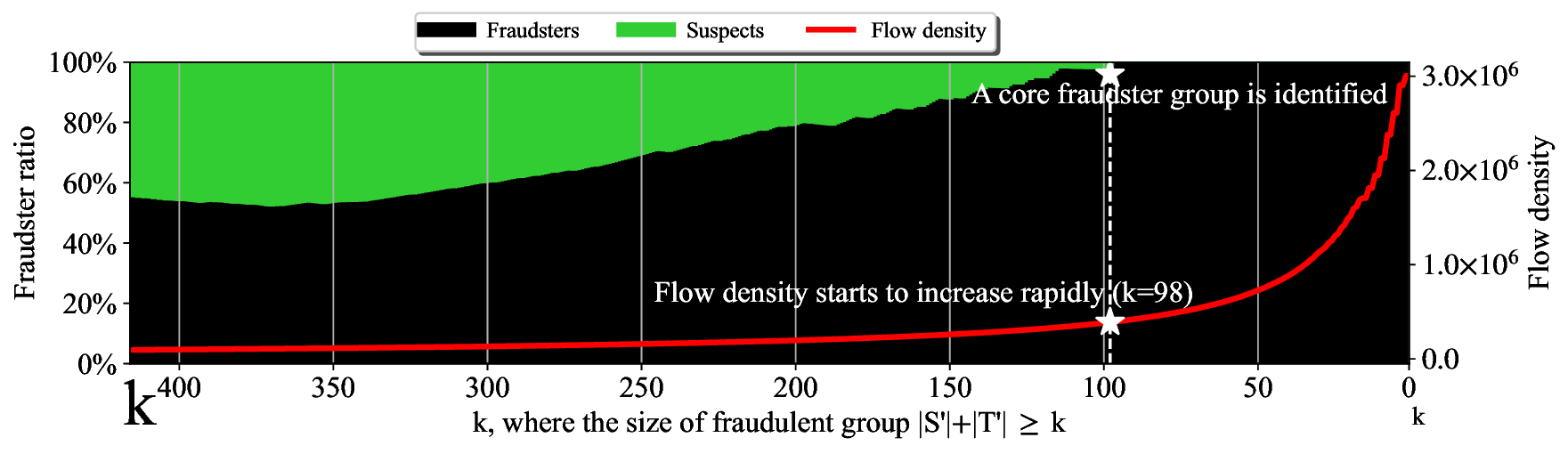}
    	\includegraphics[width=0.85\linewidth]{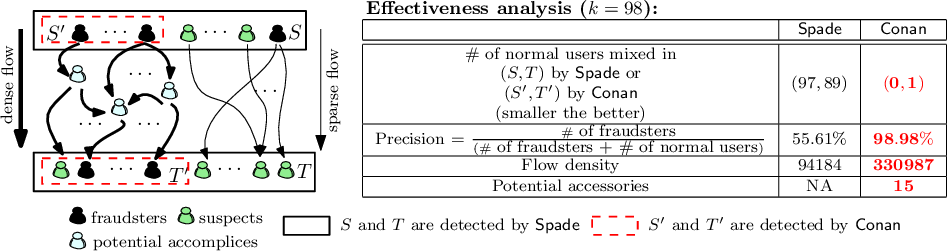}
    \end{center}
    \vspace{-1.2em}
	\caption{Case Study on NFT Networks}
	\label{fig:casestudy}
\end{figure}

We employed $\name{}$ to examine instances of wash trading fraud in the NFT network. In this study, we leveraged known wash trading fraud identified in NFT communities~\cite{dune} as ground truth. The query formulation is identical to that in Section~\ref{sec:real}. Previous research~\cite{jiang2023spade,gudapati2021search,hooi2016fraudar} has shown limited success in accurately identifying suspicious addresses; only $55.61\%$ of the suspects ($\Src$ and $\Dst$ detected by \Spade{}) are true fraudsters. $\name{}$ detects the densest flow from $\Src$ to $\Dst$ and obtains subsets $\Src'$ and $\Dst'$. We show the impact of the parameter $k$ at the top of Figure~\ref{fig:casestudy}. If $k$ is smaller, more vertices are peeled, leaving only the core fraudsters and resulting in denser flows, as indicated by the red line. When $k = 98$, the flow density increases rapidly and a core fraudulent group, $\Src'$ and $\Dst'$, is identified. The flow density increases from $94,\!184$ to $330,\!987$, and the precision increases from $55.61\%$ to $98.98\%$. We investigate the vertices in the flow between $\Src'$ and $\Dst'$ and spot 15 fraudsters not identified by \Spade{}.

\stitle{Summary.} Our case studies demonstrate the effectiveness of $\name{}$ in (1) identifying meaningful fraudulent communities in real-life transaction flow networks, (2) tracking money transfers among fraudsters, and (3) revealing more potential accomplices. By setting $k$ between $0.2 \times (|\Src| + |\Dst|)$ and $0.3 \times (|\Src| + |\Dst|)$, $\name{}$ returns a core fraudulent group.

\tr{This is because these studies ignore the chain transfer scheme and only consider one-hop transfers, which results in an unclear pattern of fraudulent activities.}

\tr{
\etitle{Query formulation.} We apply \Spade{}~\cite{jiang2023spade} to an NFT network~\cite{dune} to obtain a set of vertices of a dense subgraph (\aka suspects) denoted by $V^{s}$. To formulate the query, $\name{}$ first divides $V^{s}$ into two groups, $\Src$ and $\Dst$. If $\MFlow(u,V^{s}\setminus \{u\}) \geq \MFlow(V^{s}\setminus \{u\},u)$, $\name{}$ assigns $u$ to $\Src$. Otherwise, $u$ is assigned to $\Dst$. As shown in Figure~\ref{fig:casestudy}, only $55.61\%$ of the suspects are true fraudsters.}

\tr{The size of $\Src$ is $199$ and $102$ vertices among them are labeled fraudsters. The size of $\Dst$ is $220$ and $89$ vertices among them are labeled with fraudsters.}

\tr{We have also repeated this case study $100$ times by generating $\Src$ and $\Dst$ randomly. Most of them exhibit the similar pattern that is shown in Figure~\ref{fig:casestudy}.}

\tr{The size of $\Src'$ is $49$, and all the vertices are labeled with fraudsters. And the size of $\Dst'$ is $49$, and the $48$ vertices are labeled with fraudsters.}

\tr{Intuitively, for any $u\in V^{s}$, we compare the maximum flow coming from $u$ to other vertices $V^{s}\setminus \{u\}$ with the maximum flow coming from other vertices in $V^{s}\setminus \{u\}$ to $u$.}

%% file: exp-datasets.tex
\eat{
\begin{table*}[tb]
\caption{Statistics of real-world datasets}\label{table:Statistics}
\centering
\begin{scriptsize}
\begin{tabular}{|c|c|c|c|c|c|c|c|c|}
  \hline
  {\bf Datasets} & {\bf $|V|$} & {\bf $|E|$}  & avg. degree & $\widehat{V}$ & $\widehat{E}$ & $V_c$ & $E_c$ & avg. new edges per day \\
  \hline
  Btc2009 & 11,639 & 11,126 & 2.1  & 33,891  & 33,378  & 11,643  &  11,130 &  30.5   \\
  \hline
  Btc2010 & 179,145 & 211,172 & 2.4  & 601,489  & 633,516  & 179,588  & 211,615 & 578.6\\
  \hline
  Btc2011 & 1,987,113 & 3,909,286 & 3.9  & 9,805,685    & 11,727,858   & 1,996,947    & 3,919,120 & 10710.4\\
  \hline
  Btc2012 & 8,577,776 & 18,389,841 & 4.3  & 45,357,458  & 55,169,523  & 9,449,189  & 19,261,254 & 50383.1\\
  \hline
  Btc2013 & 19,658,304 & 43,570,830 & 4.4  & 106,799,964  & 130,712,490  & 21,039,323  & 44,951,849 & 119372.1\\
  \hline
\end{tabular}
\end{scriptsize}
\end{table*}
}

\begin{table*}[tb]
\caption{\jiaxin{Statistics of datasets and queries of $\Src$ and $\Dst$ with the default size $n=32$, where $\widehat{V}$ (\resp $\widehat{E}$) denotes the vertex set (\resp edge set) of the compressed \RTFNet{}, and $\Delta_d$ (resp. $\Delta_w$ and $\Delta_m$) denotes a time span of a day (resp. a week and a month)}}\label{table:Statistics}
\vspace{-0.5em}
\centering
\resizebox{\textwidth}{!}{
\begin{scriptsize}
\begin{tabular}{|c|c|c|c|c|c|c|c|c|m{1.1cm}<{\centering}|m{1.1cm}<{\centering}|c|m{1.05cm}<{\centering}|m{1.05cm}<{\centering}|c|}
  \hline
  \multirow{2}{*}{\bf Datasets} & \multirow{2}{*}{\bf $|V|$} & \multirow{2}{*}{\bf $|E|$} & \multicolumn{3}{c|}{avg. $|\widehat{V}|$ of the \RTFNet{}} & \multicolumn{3}{c|}{avg. $|\widehat{E}|$ of the \RTFNet{}}  & \multicolumn{3}{c|}{avg. out-degree of the vertices in $\Src$} &  \multicolumn{3}{c|}{avg. in-degree of the vertices in $\Dst$} \\ \cline{4-15}
  & & & $\Delta_d$ & $\Delta_w$ & $\Delta_m$ & $\Delta_d$ & $\Delta_w$ & $\Delta_m$ & $\Delta_d$ & $\Delta_w$ & $\Delta_m$ & $\Delta_d$ & $\Delta_w$ & $\Delta_m$ \\
  \hline
Btc2011
& 1,987,113 & 3,909,286 & 10,127 & 55,290 & 212,921 & 11,748 & 82,397 & 362,380 & 1.33 & 1.59 & 1.82 & 1.31 & 1.65 & 1.89 \\ \hline
Btc2012
& 8,577,776 & 18,389,841 & 36,651 & 220,058 & 992,282 & 52,554 & 369,522 & 1,804,755 & 1.63 & 1.89 & 2.17 & 1.57 & 1.86 & 2.10 \\ \hline
Btc2013
& 19,658,304 & 43,570,830 & 83,924 & 502,205 & 2,072,286 & 127,588 & 900,325 & 3,995,375 & 1.73 & 2.00 & 2.19 & 1.70 & 1.99 & 2.16 \\ \hline
Eth2016
& 670,559 & 13,654,676 & 19,720 & 92,361 & 368,071 & 43,209 & 311,628 & 1,394,052 & 5.69 & 6.25 & 7.15 & 3.63 & 5.67 & 4.93 \\ \hline
Eth2021
& 58,274,378 & 461,781,254 & 600,576 & 3,275,312 & 11,740,881 & 1,326,215 & 9,536,231 & 41,863,219 & 1.82 & 2.21 & 1.90 & 2.75 & 1.64 & 2.62 \\ \hline
       IBM &    2,116,168 &  179,702,229 & \multicolumn{3}{c|}{38,721,761} & \multicolumn{3}{c|}{216,307,822} & \multicolumn{3}{c|}{147.67} & \multicolumn{3}{c|}{105.57} \\ \hline
\end{tabular}
\end{scriptsize}
}
\end{table*}

%% file: exp-query-performance.tex
\begin{figure*}[tb]
        \begin{minipage}{\linewidth}
        \centering
        \includegraphics[width=1\textwidth]{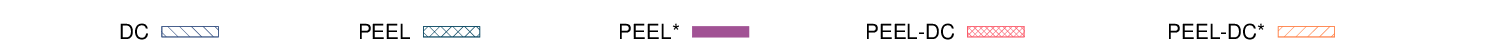}
        \end{minipage}
		\begin{minipage}{\linewidth}	
        \centering
			\begin{tabular}{cc}
				\begin{minipage}[t]{0.19\textwidth}
            \includegraphics[width=\textwidth]{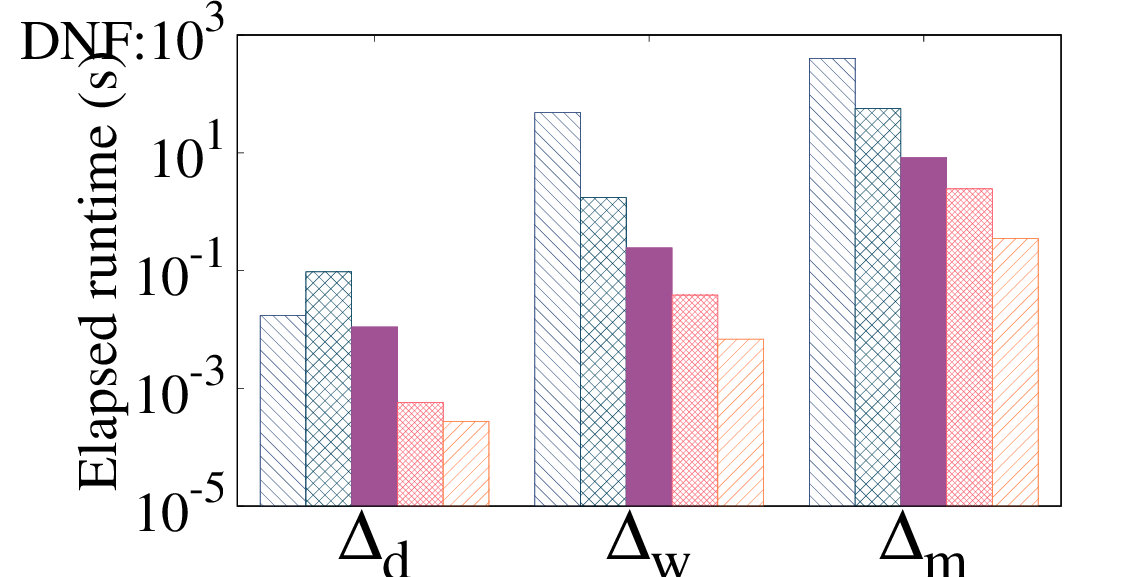}
            \vspace{-5mm}
				\subcaption{Btc2011}\label{fig:overall-2011-d}
				\end{minipage}
				\begin{minipage}[t]{0.19\textwidth}
                    \includegraphics[width=\textwidth]{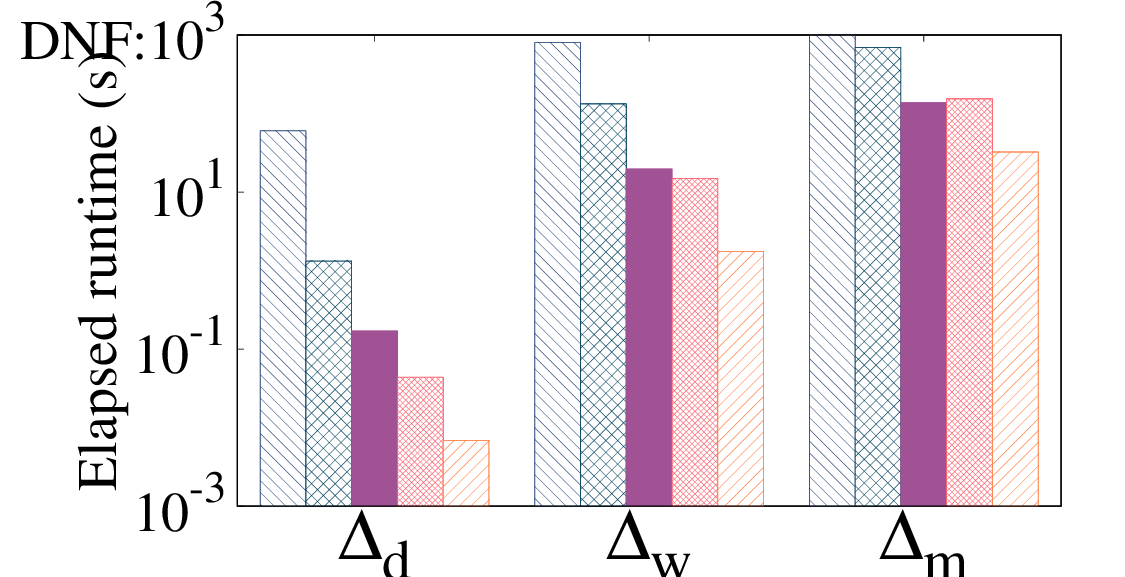}
                    \vspace{-5mm}
				\subcaption{Btc2012}\label{fig:overall-2012-d}
				\end{minipage}
				\begin{minipage}[t]{0.19\textwidth}
                    \includegraphics[width=\textwidth]{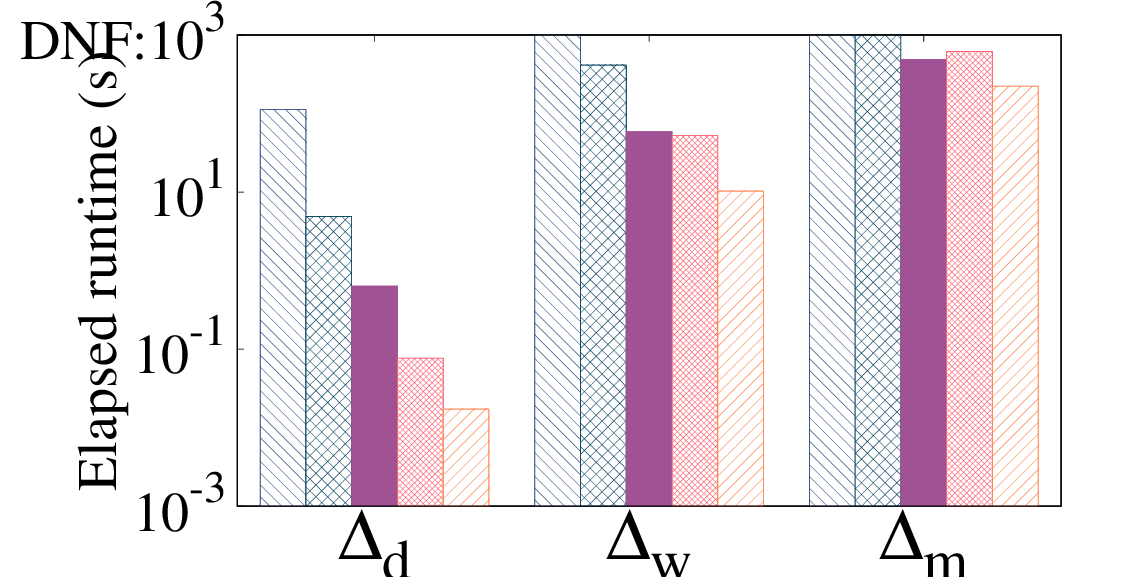}
                    \vspace{-5mm}
					\subcaption{Btc2013}\label{fig:overall-2013-d}
					\end{minipage}
					\begin{minipage}[t]{0.19\textwidth}
                    \includegraphics[width=\textwidth]{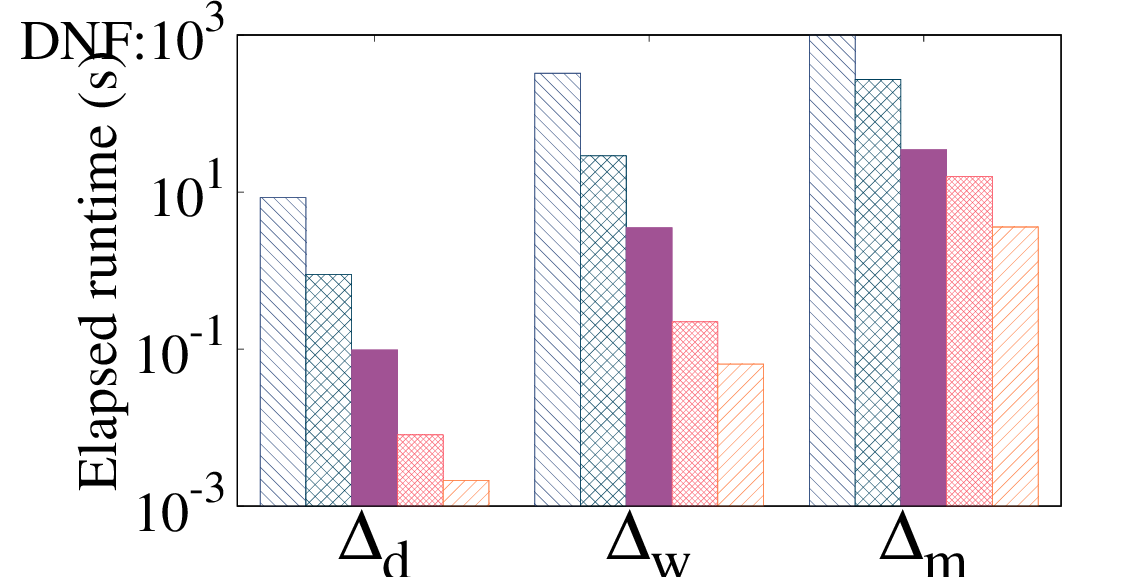}
                    \vspace{-5mm}
					\subcaption{Eth2016}\label{fig:overall-xxx-d}
					\end{minipage}
					\begin{minipage}[t]{0.19\textwidth}
                    \includegraphics[width=\textwidth]{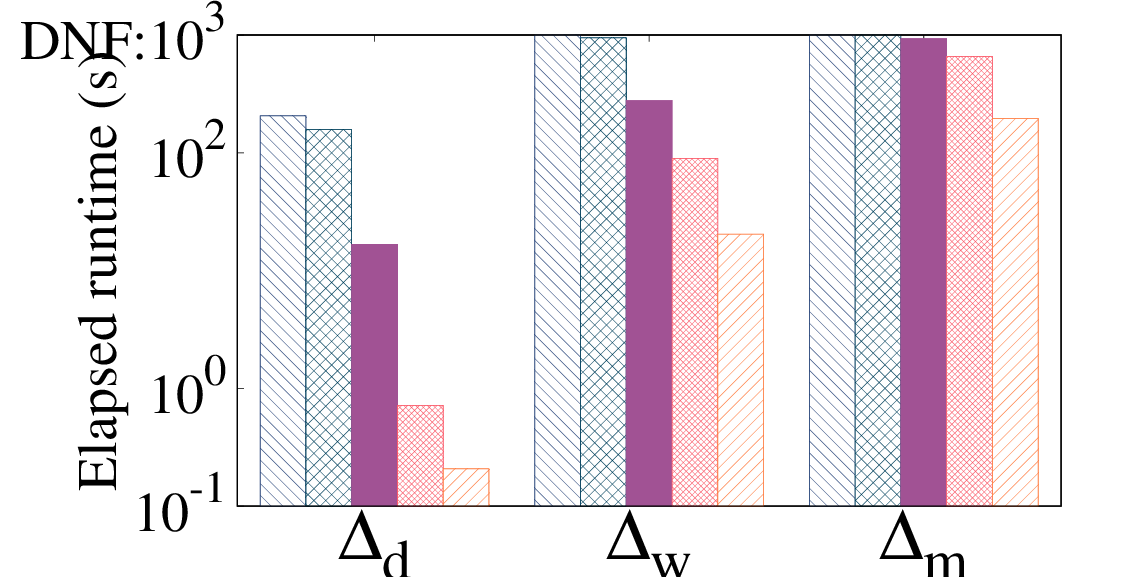}
					\subcaption{Eth2021}
					\label{fig:overall-yyy-d}
					\end{minipage}
				\end{tabular}
    \vspace{-0.5em}
	\caption{Elapsed runtime of algorithms on the five real-world datasets under the default settings}
	\label{fig:overall-performance}
	\end{minipage}
	\end{figure*}

%% file: exp-sizes.tex
\begin{table*}[tb]
\caption{Impact of optimization techniques on graph size reduction and maximum flow time efficiency ("avg. time" represents the average time taken for a random pair of source and sink.)}\label{table:Sizes}
\vspace{-0.5em}
\centering
\resizebox{0.85\textwidth}{!}{
\begin{tabular}{|c|c|c|c|c|c|c|c|c|c|}
  \hline
  \multirow{2}{*}{\bf Datasets} & \multicolumn{3}{c|}{\RTFNet{}} & \multicolumn{2}{c|}{with network reduction}  & \multicolumn{4}{c|}{with network reduction \& network compression} \\ \cline{2-10}
  & $|V|$ & $|E|$ & avg. time(s) & avg.$|V|$ & avg.$|E|$ & avg.$|V|$ & avg.$|E|$ & avg. time(s) & speedup \\
  \hline
Btc2011 & 9.8M  & 12M    & 12.62  & 0.5M   & 0.6M   & 0.1M     & 0.2M    & 0.17   & \textbf{72.7$\times$}  \\
Btc2012 & 45M   & 55M    & 67.54  & 6.8M   & 6.4M   & 1.2M     & 2.1M    & 4.47  & \textbf{15.1$\times$}  \\
Btc2013 & 107M  & 131M   & 246.16 & 18M    & 21M    & 3.8M     & 7.2M    & 16.78 & \textbf{14.7$\times$}  \\
Eth2016 & 28M   & 41M    & 57.04  & 1.9M   & 2.7M   & 0.3M     & 1.1M    & 1.23  & \textbf{46.3$\times$}  \\
Eth2021 & 982M  & 1.4B   & -      & 75M    & 100M   & 15M      & 41M     & 55.37 & -  \\

\hline



\end{tabular}
}
\end{table*}

%% file: 8-relatedworks.tex
\vspace{-1em}
\section{Related Work}\label{sec:related}

\tr{
In recent years, there is an increasing interest in query processing on temporal graphs~\cite{kosyfaki2021flow,wu2014path,wu2016reachability,zhang2019efficient,yuan2019constrained}. Under the topic of fraud detection, current researches are mainly focused on dense subgraph discovery techniques~\cite{jiang2023spade,hooi2016fraudar,shin2017densealert,gudapati2021search}. Under the topic of transaction flow networks, maximum flow is one of the fundamental problems in graph theory with many solutions~\cite{ford1956maximal,dinic1970algorithm,goldberg1988new,hochbaum2008pseudoflow,hochstein2007maximum,chen2022maximum}, and recently there are also a few works on flows under temporal constraints~\cite{hamacher2003earliest,akrida2019temporal}.
}

\tr{
To detect fraudulent activities, current research has predominantly focused on dense subgraph discovery techniques~\cite{jiang2023spade,hooi2016fraudar,shin2017densealert,gudapati2021search}. However, these methods often overlook the complexities of layering transfer schemes and temporal dependencies critical in fraud detection. While \cite{kosyfaki2021flow} addresses temporal patterns and traditional maximum flow algorithms~\cite{dinic1970algorithm,hochbaum2008pseudoflow,goldberg1988new,chen2022maximum} tackle layering transfers, distinguishing between benign and fraudulent actors remains a challenge, as legitimate users also add to the flow volume within communities. A pivotal shortcoming of these techniques is their failure to consider the flow density within transactions, a critical aspect for identifying the concentrated, abnormal activities indicative of fraud. 
}

\stitle{Query Processing on Temporal Graphs.} Recently, there has been a growing interest in query processing on temporal graphs from both the industry and research communities (\eg \cite{kosyfaki2021flow,wu2014path,wu2016reachability,zhang2019efficient,yuan2019constrained}). The most relevant work to this paper is Kosyfaki et al.~\cite{kosyfaki2021flow}. They defined two flow transfer models on temporal networks but only focused on the temporal flow problem between a single source $s$ and a single sink $t$. The authors’ solution for maximum temporal flow applies to graphs with a maximum of $10$K transactions, owing to the quadratic time complexity. In contrast, $\name$ aims to find the densest flow of two given sets $\Src$~and~$\Dst$.

\stitle{Dense Subgraph Detection.} Current research in detecting fraudulent activities has predominantly focused on dense subgraph discovery techniques, as detailed in studies like~\cite{jiang2023spade,hooi2016fraudar,shin2017densealert,gudapati2021search,liu2017holoscope,jiang2024spade+,chen2024rush,zhou2025efficient,chen2023densest,liang2026accelerated,Niu2025sansefficientdensest,Jiang2025communitydetectionin,Jiang2025dupinaparallel}. However, these methods often fail to account for the complexities of layering transfer schemes, crucial in fraud detection. Though the approach in~\cite{li2020flowscope} emphasizes detecting dense flows, it is confined to strictly k-partite data and lacks adaptability to more intricate network structures, as indicated by~\cite{tariq2023topology}. Furthermore, existing studies on dense flow detection tend to overlook the temporal dependencies in transaction networks, which can significantly impact detection effectiveness. \jiaxin{In addition to these algorithmic efforts, several labeled benchmarks have been released to support fraud and phishing detection in transaction graphs. Notably, the Ethereum phishing-label transaction dataset released by InPlusLab has been used for graph-based phishing and fraud detection in Ethereum networks~\cite{xblockEthereum}.} $\name$ sets itself apart as the first framework to focus on layering transfer schemes densest flow queries, incorporating temporal dependency for enhanced precision in fraud detection applications.

\stitle{Maximum Flow.} Over the past few decades, many solutions have been proposed to solve the maximum-flow problem. Ford et al.~\cite{ford1956maximal} introduced the first feasible-flow algorithm by iteratively finding the augmenting paths. Dinic~\cite{dinic1970algorithm} built a layered graph with a breadth-first search on the residual graph to return the maximum flow in a layered graph in $O(|V|^2|E|)$. Goldberg et al.~\cite{goldberg1988new} proposed the push-relabel method with a time complexity of $O(|V|^3)$. Hochbaum~\cite{hochbaum2008pseudoflow} used a normalized tree to organize all unsaturated arcs, then the process of finding augmenting paths can be accelerated to $O(|V||E|\log|V|)$ by incorporating the dynamic tree. Hochstein et al.~\cite{hochstein2007maximum} optimized the push-relabel algorithm and calculated the maximum flow of a network with $k$ crossings in $O(k^3|V|\log |V|)$ time. Chen et al.~\cite{chen2022maximum} introduced an algorithm capable of nearly linearly addressing maximum and minimum-cost flow problems by constructing flows through a sequence of approximate undirected minimum-ratio cycles. While this method represents a significant theoretical advancement, its practical implementation remains an open challenge. $\name$ stands out, differing in two significant ways: (a) $\name$ pioneers the incorporation of a densest flow query; and (b) $\name$ returns the maximum temporal flow in transaction networks with temporal dependencies. Hence, $\name$ operates orthogonally to the individual maximum flow algorithms.

\stitle{Temporal Flow.} There are a few studies on flows under temporal constraints. Horst et al.~\cite{hamacher2003earliest} investigate the effects of time-varying capacities on maximum flow in temporal graphs. Akrida et al.~\cite{akrida2019temporal} treat flow networks as ephemeral entities, with each edge available only during certain times, and analyze maximum flow within specific time intervals. The concept of earliest arrival flow, explored in studies~\cite{schmidt2014earliest,gale1959transient,wilkinson1971algorithm,skutella2009introduction}, focuses on determining the earliest time for a flow to travel from $s$ to $t$. $\name$ is the first framework to concentrate on flow density with a temporal flow dependency.


%% file: 9-conclusion.tex
\section{Conclusions}\label{sec:conclusions}

\revise{Grab practices advanced fraud detection techniques to combat sophisticated fraudulent activities in its transaction networks.} However, existing methods face challenges in scalability and effectiveness when dealing with large-scale data and complex fraud patterns. To address these issues, we propose a novel query called \SDMF{}, specifically designed to detect fraud in transaction networks. We introduce an efficient solution, $\name$, to solve \SDMF{}. $\name$ incorporates a network transformation technique that ensures the correct maximum temporal flow is returned, allowing seamless integration with existing maximum flow algorithms. Our experiments demonstrate that $\name$ is efficient and effective, with query evaluations up to three orders of magnitude faster than baseline algorithms. \revise{Deployment of $\name$ in operational environments has significantly improved fraud detection accuracy and processing speed, leading to reduced financial losses for Grab.}



%% file: bio.tex
\begin{IEEEbiography}[{\includegraphics[width=1in,height=1.2in,clip,keepaspectratio]{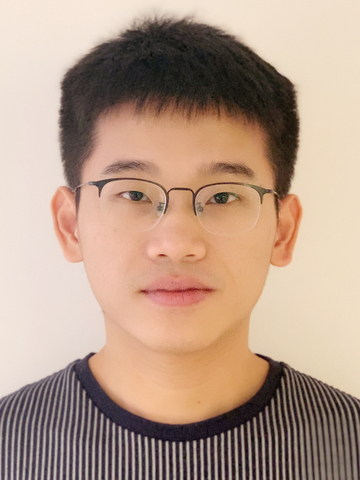}}]{Jiaxin~Jiang} is a senior research fellow in the School of Computing, National University of Singapore. He received his BEng degree in computer science and engineering from Shandong University in 2015 and PhD degree in computer science from Hong Kong Baptist University (HKBU) in 2020. His research interests include graph-structured databases, distributed graph computation and fraud detection.
\end{IEEEbiography}

\begin{IEEEbiography}[{\includegraphics[width=1in,height=1.2in,clip,keepaspectratio]{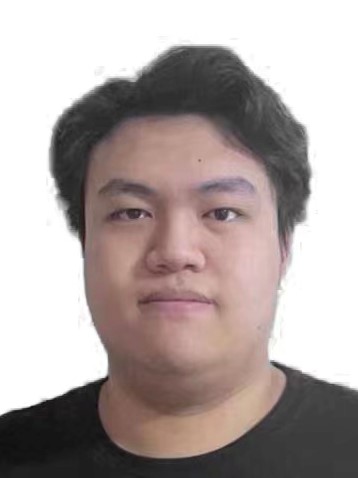}}]{Yunxiang~Zhao} recieved the BEng degree in computer science and engineering from East China Normal University in 2022. He is currently a PhD student in the Department of Computer Science, Hong Kong Baptist University. His research interests include graph data namagements and time series analysis.
\end{IEEEbiography}

\begin{IEEEbiography}[{\includegraphics[width=1in,height=1.2in,clip,keepaspectratio]{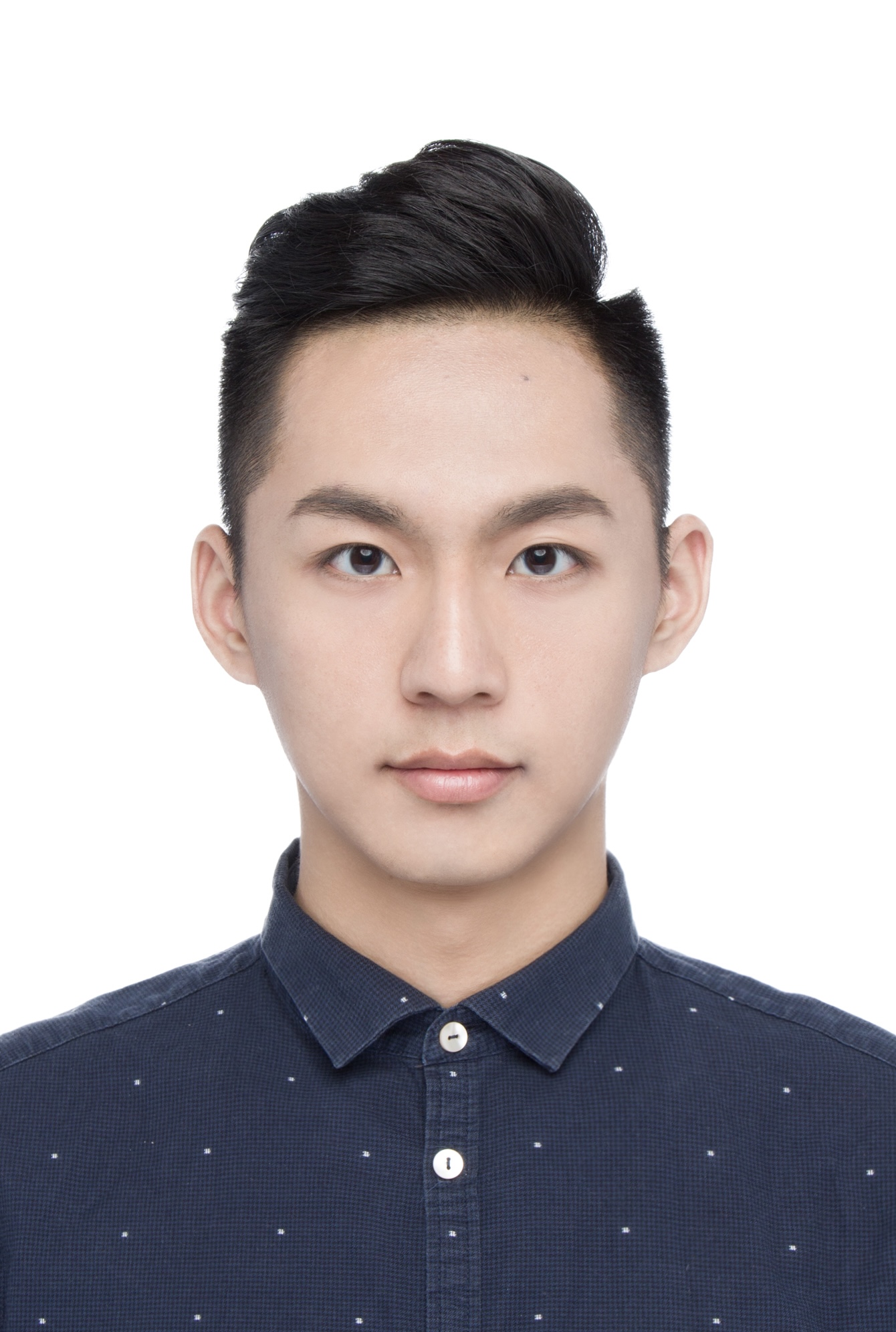}}]{Lyu~Xu} is currently a postdoctoral research fellow in the College of Computing and Data Science, Nanyang Technological University. He received his BEng degree in computer science and technology from Southeast University, MEng degree in computer science and technology from Sun Yat-sen University, and PhD degree in computer science from Hong Kong Baptist University (HKBU). His research interests include graph-structured databases and privacy preserving computation.
\end{IEEEbiography}

\begin{IEEEbiography}[{\includegraphics[width=1in,height=1.2in,clip,keepaspectratio]{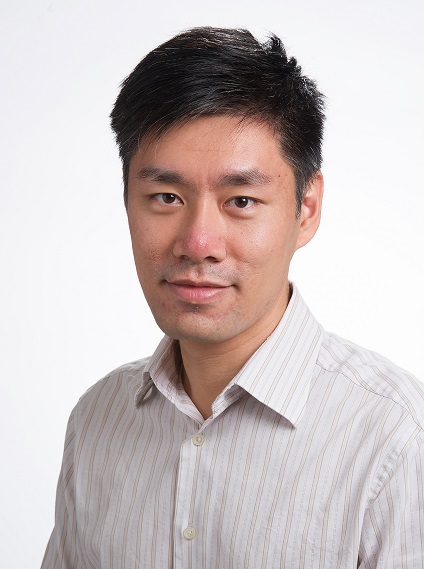}}]{Byron~Choi}
is a Professor in the Department of Computer Science at
the Hong Kong Baptist University. He received the bachelor of engineering degree
in computer engineering from the Hong Kong University of Science and Technology
(HKUST) in 1999 and the MSE and PhD degrees in computer and information science
from the University of Pennsylvania in 2002 and 2006, respectively. His research interests include graph data management and time series analysis.
\end{IEEEbiography}

\begin{IEEEbiography}[{\includegraphics[width=1in,height=1.2in,clip,keepaspectratio]{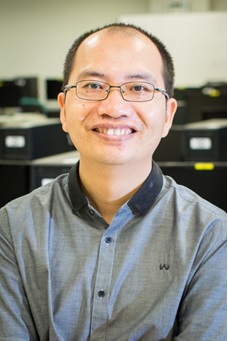}}]{Bingsheng~He} received the bachelor degree in computer science from Shanghai Jiao Tong University (1999-2003), and the PhD degree in computer science in Hong Kong University of Science and Technology (2003-2008). He is a professor in School of Computing, National University of Singapore. His research interests are high performance computing, distributed and parallel systems, and database systems. He is an ACM Distinguished member (class of 2020), and a fellow of IEEE (class of 2025). 
\end{IEEEbiography}

\begin{IEEEbiography}[{\includegraphics[width=1in,height=1.2in,clip,keepaspectratio]{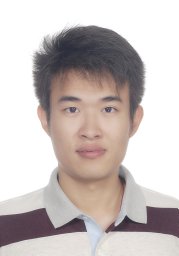}}]{Shixuan~Sun} is a Tenure-Track Associate Professor at the Department of Computer Science and Engineering, Shanghai Jiao Tong University. He received his Ph.D. in Computer Sciences from the Department of Computer Science and Engineering, Hong Kong University of Science and Technology (HKUST) in 2020. Prior to that, he got his M.S and B.S. in Computer Sciences from the School of Software Engineering, Tongji University in 2014 and 2011 respectively. His research interests are in database systems, graph algorithms, and parallel computing. Current focus is on building high-performance graph data management systems.
\end{IEEEbiography}

\begin{IEEEbiography}[{\includegraphics[width=1in,height=1.2in,clip,keepaspectratio]{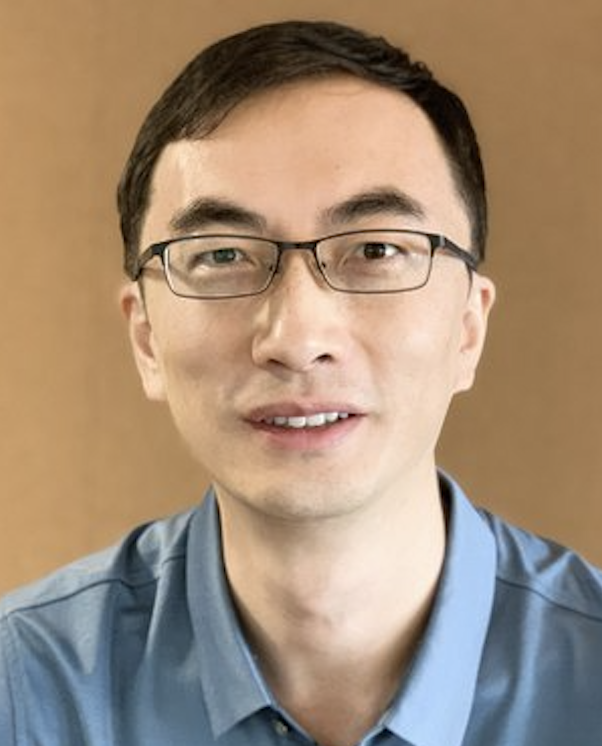}}]{Jia~Chen} received his Bachelor degree in Automation from Tsinghua University (1999-2003) and the PhD degree in Computer Science from Hong Kong University of Science and Technology (2005-2009). He is currently Head of Data Science, Integrity at Grab. His research interests include efficient machine learning for large scale, high dimensional and real time data, and generative models for image and text.
\end{IEEEbiography}

%% file: appendix.tex
\newpage
\clearpage
\appendix

\section{Proof}\label{sec:proof}

\subsection{NP-hardness}

\begin{figure}[h]
    \begin{center}
    \includegraphics[width=0.75\linewidth]{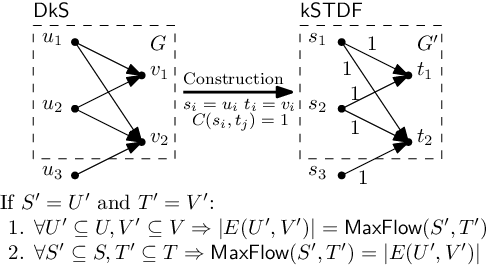}
    \end{center}
    \caption{Construction from $\DKS$ to \SDMF{}}
    \label{fig:construction}
\end{figure}

\begin{manuallemma}{\ref{lemma:nphard}}
	The decision problem of \SDMF{} is NP-complete.
\end{manuallemma}
\begin{proof}

The {\SDMF} problem is provably NP-complete, demonstrated through a reduction from the decision problem of the densest subgraph with at least $k$ vertices ($\DKS$ problem), as referred to in \cite{feige1997densest}. It's a known fact that the $\DKS$ problem is NP-complete even on bipartite graphs.

Given an instance of the $\DKS$ problem, denoted as $G=(U\cup V, E)$, we construct a corresponding instance of {\SDMF} as follows: We create a graph $G'=(S\cup T, E', C)$ to serve as the input for {\SDMF}, wherein $\Src = U$, $\Dst = V$, the edge set $(u,v)\in E'$ if and only if $(u,v)\in E$, and the capacity function $C(u,v) = 1$ for all $(u,v)\in E'$. This construction, which is graphically represented in Figure~\ref{fig:construction}, can be completed in polynomial time.

Let's define the induced subgraph of $U'\cup V'$ in $G$ as $E(U',V')$. It follows that $|E(U',V')| = \MFlow(\Src',\Dst')$, where $\Src' = U$ and $\Dst' = V$. Given that $|U'| + |V'| = |\Src'| + |\Dst'|$, the density of the induced subgraph by $U'\cup V'$ can be expressed as $\frac{|E(U',V')|}{|U'| + |V'|} = \frac{\MFlow(\Src', \Dst')}{|\Src'| + |\Dst'|} = g(\Src',\Dst')$. Therefore, if $|\Src'|+|\Dst'| \geq k$ and $g(\Src',\Dst')$ is maximized, then the density of the subgraph induced by $U'=\Src'$ and $V'=\Dst'$ is also maximized.

Assuming that $\Src'$ and $\Dst'$ can be determined in polynomial time, it then follows that $U'$ and $V'$ can also be found in polynomial time. However, this contradicts the NP-completeness of the $\DKS$ problem.

\end{proof}

\subsection{3-Approximation}\label{sec:proof:3appr}

\begin{manuallemma}{\ref{lemma:existence}}
    $\forall F \in [0, g(\Src, \Dst)]$, $\exists i \in [0,n]$, $(\Src_i, \Dst_i) = F\Core(\Src,\Dst)$.
\end{manuallemma}

\begin{proof}
We prove the lemma in contradiction by assuming that $F\Core(\Src,\Dst)$ does not exist, \ie $\delta_i < F$. We have the following contradiction.
\begin{align*}
\MFlow(\Src, \Dst) &= \MFlow(\Src_n, \Dst_n) = \MFlow(\Src_{n-1}, \Dst_{n-1}) + \delta_{n}  \\
&= \MFlow(\Src_{n-2}, \Dst_{n-2}) + \delta_{n} + \delta_{n-1} = \ldots  \\
&= \MFlow(\Src_{0}, \Dst_{0}) + \sum_{i=1}^{n} \delta_j =  \sum_{i=1}^{n} \delta_j \\
        &< nF \leq \frac{n\MFlow(\Src, \Dst)}{n} = \MFlow(\Src, \Dst) \\
\end{align*}
We can conclude that there exists $F\Core(\Src,\Dst)$ for $0 \leq F\leq g(\Src, \Dst)$.
\end{proof}

Given $F \in [0,g(\Src,\Dst)]$, there might exist several $F\Core$. In our following discussion, we only consider the $F\Core$ with the largest index $i$ for $\Src_i$ and $\Dst_i$, \ie $\delta_j < F$ for $j\in (i,n]$.

\begin{manuallemma}{\ref{lemma:apprcore}}
$\forall \alpha\in [0,1]$ and $F = \alpha g(\Src, \Dst)$, $\MFlow(\Src^F,\Dst^F) \ge (1-\alpha)\MFlow(\Src, \Dst)$, where $(\Src^F,\Dst^F)=F\Core(\Src,\Dst)$.
\end{manuallemma}

\begin{proof}
We consider the maximum flow between $\Src$ and $\Dst$.
\begin{align*}
        \MFlow(\Src, \Dst) &= \sum\limits_{j=1}^{n} \delta_j = \sum\limits_{j=1}^{i} \delta_j + \sum\limits_{j=i+1}^{n} \delta_j\\
        &\le \MFlow(\Src_i, \Dst_i) + (n-i)F  \\ 
        &\leq \MFlow(\Src^{F}, \Dst^{F}) + nF
\end{align*}
Therefore, we have the following conclusion. 
    \begin{align*}
        \MFlow(\Src^{F}, \Dst^{F}) & \ge \MFlow(\Src, \Dst) -  nF = \MFlow(\Src, \Dst) - n\alpha g(\Src,\Dst)   \\
        & = \MFlow(\Src, \Dst) - \frac{n\alpha\MFlow(\Src, \Dst)}{n} \\ 
        &  = (1-\alpha)\MFlow(\Src, \Dst) 
    \end{align*}
\end{proof}

\begin{manualtheorem}{\ref{theorem:appr}}
    Algorithm~\ref{algo:greedy} is a $3$-approximation algorithm.
\end{manualtheorem}
\begin{proof}
Consider the peeling sequence $\{(\Src_n,\Dst_n), \ldots, (\Src_0,\Dst_0)\}$. Let $(\Src',\Dst')$ represent the exact solution to the \SDMF{} problem. For any integer $k$, there exists $i \in [k, n]$ such that $g(\Src_i, \Dst_i) \geq \frac{g(S', T')}{3}$.

Define $\alpha = \frac{2}{3}$ and let $F = \frac{2g(\Src', \Dst')}{3}$. According to Lemma~\ref{lemma:apprcore}, we have:
\begin{equation}
\MFlow({\Src'}^F, {\Dst'}^F) \ge \frac{\MFlow(\Src', \Dst')}{3}    
\end{equation}

Since $\Src'$ and $\Dst'$ are subsets of $\Src$ and $\Dst$ respectively, we deduce that 
\begin{equation}\label{eq:fcore3appr}
    \MFlow(\Src^F, \Dst^F) \ge \frac{\MFlow(\Src', \Dst')}{3}    
\end{equation}

By definition, the peeling flow of any vertex $u$ is no less than $F$. Consequently, the total incoming flow to $\Dst^F$ equals the outgoing flow from $\Src^F$, leading to:
\begin{align*}
g(\Src^F,\Dst^F) &\ge \frac{(|\Src^F| + |\Dst^F|) \times F}{2\times(|\Src^F| + |\Dst^F|)} \\
&\ge \frac{g(\Src',\Dst')}{3}.
\end{align*}

From Lemma~\ref{lemma:existence}, there exists some $i$ such that $(\Src_i, \Dst_i) = F\Core(\Src, \Dst)$, \ie $\Src_i=\Src^F$ and $\Dst_i=\Dst^F$.

\stitle{Case 1:} If $i \geq k$, then $(\Src_i, \Dst_i)$ satisfies all requirements.

\stitle{Case 2:} If $i < k$, then $(\Src_k, \Dst_k)$ is a 3-approximation of the answer. Since $(\Src',\Dst')$ is the exact answer and due to the definition, we have
\begin{equation}\label{eq:ksize}
 |\Src'|+|\Dst'| \geq k   
\end{equation}
Then we have:
\begin{align*}
g(\Src_k, \Dst_k) &= \frac{\MFlow(\Src_k,\Dst_k)}{k}   \\
& \geq \frac{\MFlow(\Src_i,\Dst_i)}{k} & \because \Src_i\subseteq \Src_k, \Dst_i\subseteq \Dst_k\\
&\geq \frac{\MFlow(\Src',\Dst')/3}{k} & \because \textnormal{Equation}~\ref{eq:fcore3appr} \\
& \geq \frac{g(\Src',\Dst')}{3} & \because  \textnormal{Equation}~\ref{eq:ksize}
\end{align*}

In either case, we establish that Algorithm~\ref{algo:greedy} is a 3-approximation algorithm, concluding the proof.
\end{proof}

\tr{
\begin{property}\label{property:sameside1}
     $\forall s', s\in \Src$, $\mathsf{PF}(s',\Src\setminus\{s\},\Dst)\ge \mathsf{PF}(s',\Src,\Dst)$.
\end{property}
\begin{property}\label{property:sameside2}
     $\forall t', t\in \Dst$, $\mathsf{PF}(t',\Src,\Dst\setminus\{t\})\ge \mathsf{PF}(t',\Src,\Dst)$.
\end{property}
}

\begin{manualproperty}{\ref{property:otherside}}
    $\forall s\in \Src, t\in \Dst$, 1) $\mathsf{PF}(t,\Src\setminus\{s\},\Dst)\ge \mathsf{PF}(t,\Src,\Dst)-\mathsf{PF}(s,\Src,\Dst)$; and 2) $\mathsf{PF}(s,\Src,\Dst\setminus\{t\})\ge \mathsf{PF}(s,\Src,\Dst) - \mathsf{PF}(t,\Src,\Dst)$.
\end{manualproperty}

\begin{proof}
With the definition of peeling flow, we have
\begin{equation}\label{eq:tst}
\begin{split}
\PF(t,\Src,\Dst) & =  \MFlow(\Src,\Dst)-\MFlow(\Src,\Dst\setminus\{t\})\\
\end{split}
\end{equation}
\begin{equation}\label{eq:sst}
\begin{split}
\PF(s,\Src,\Dst) & =  \MFlow(\Src,\Dst)-\MFlow(\Src\setminus\{s\},\Dst)\\
\end{split}
\end{equation}
Combining Equation~\ref{eq:tst} and Equation~\ref{eq:sst}, we have the following.
\begin{equation}\label{eq:combine}
    \PF(t,\Src,\Dst) - \PF(s,\Src,\Dst) =\MFlow(\Src\setminus\{s\},\Dst) -  \MFlow(\Src,\Dst\setminus\{t\})
\end{equation}
Due to property~\ref{property:peelnotneg}, we have
\begin{equation}
    \PF(s,\Src,\Dst\setminus\{t\}) = \MFlow(\Src,\Dst\setminus\{t\}) - \MFlow(\Src\setminus\{s\},\Dst\setminus\{t\}) \ge 0
\end{equation}
Therefore, $\MFlow(\Src,\Dst\setminus\{t\}) \geq \MFlow(\Src\setminus\{s\},\Dst\setminus\{t\})$. Moreover, with Equation~\ref{eq:combine}, we have the following.
\begin{equation}
\begin{split}
\PF(t,\Src\setminus\{s\},\Dst) & =  \MFlow(\Src\setminus\{s\},\Dst)-\MFlow(\Src\setminus\{s\},\Dst\setminus\{t\})\\
& \geq \MFlow(\Src\setminus\{s\},\Dst)-\MFlow(\Src,\Dst\setminus\{t\})\\
& = \PF(t,\Src,\Dst) - \PF(s,\Src,\Dst)
\end{split}
\end{equation}
Similarly, we have $\mathsf{PF}(s,\Src,\Dst\setminus\{t\})\ge \mathsf{PF}(s,\Src,\Dst) - \mathsf{PF}(t,\Src,\Dst)$.
\end{proof}

\subsection{\jiaxin{Case Study: Densest Commute Flows in Grab-Posisi}}
\label{sec:posisi}

\jiaxin{\stitle{Dataset and graph construction.}
To demonstrate that \SDMF{} is not restricted to financial transaction graphs, we conduct a case study on Grab-Posisi, a GPS trajectory dataset collected from Grab’s ride-hailing operations in Southeast Asia~\cite{huang2019grab}. We focus on the morning peak period (7--10\,am) and discretize the city into coarse-grained urban zones corresponding to well-known residential and business areas. Each zone is represented as a vertex, and a directed edge $(u,v)$ aggregates commute trips from origin zone $u$ to destination zone $v$. The capacity $C(u,v)$ equals the number of trips during the time window, and the timestamp $\T(u,v)$ is derived from the departure times of the trips, yielding a temporal flow network.}

\noindent \jiaxin{\stitle{Query formulation.}
Residential zones with high outbound commute volume (e.g., Jurong East, Clementi, Holland Village, Commonwealth, Redhill) are selected as the source set $\Src$, while major business districts with high inbound volume (e.g., Outram Park, City Center, Raffles) form the sink set $\Dst$. We issue an \SDMF{} query $(\Src,\Dst,k)$ with a minimum group size constraint $k$ to identify subsets $\Src'\subseteq\Src$ and $\Dst'\subseteq\Dst$ that maximize commute flow density under temporal feasibility.}

\noindent \jiaxin{\stitle{Findings.}
Figure~\ref{fig:posisi} visualizes the densest commute flow returned by \SDMF{}. Rather than selecting all high-volume zones, \SDMF{} identifies a compact subset of residential sources (e.g., Redhill and Holland Village) and business sinks (e.g., City Center and Outram Park) connected by a small number of dominant commute paths. These paths form a coherent west-to-central commute structure, reflecting the primary morning travel direction into Singapore’s CBD. Compared with a volume-based baseline that ranks zones independently, the \SDMF{} result (i) concentrates substantially more flow per selected zone and (ii) yields a clearer, corridor-like spatial pattern instead of scattered origin--destination pairs. This case study illustrates that \SDMF{} captures structurally meaningful dense flows in urban mobility networks, demonstrating its applicability beyond financial transaction graphs.}

\begin{figure}[tb]
  \centering
  \includegraphics[width=0.85\linewidth]{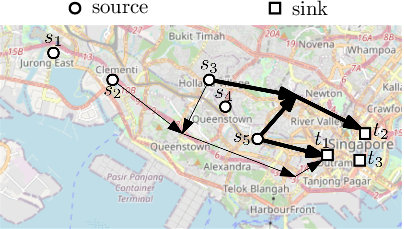}
  \caption{\jiaxin{Densest commute flow between residential and business zones on Grab-Posisi (morning peak) ($s_1$: Jurong East; $s_2$: Clementi; $s_3$: Holland Village; $s_4$: Commonwealth; $s_5$: Redhill; $t_1$: Outram Park; $t_2$: City Center; $t_3$: Raffles)}}
  \label{fig:posisi}
\end{figure}

\input{6.5-optimization}

%% file: 6.5-optimization.tex
\stab
\stab
\stab
\stab

\input{3.1-networdreduction}

\jiaxin{\subsection{Network Compression}}\label{sec:compression}

In this subsection, we propose a flow-preserving compression technique to reduce the size of the network. For ease of analysis, the vertices in \RTFNet{} are classified into three categories based on their in-degree and out-degree. We denote in-degree by \(\degi\) and out-degree by \(\dego\).

\begin{itemize}[leftmargin=*]
    \item \stitle{Flow-out Vertices}: $u$ is a flow-out vertex if 1) $\degi(u)$ $=0$, or 2) $\degi(u)$$=1$ and $C(v,u) = +\infty$, for $(v,u)\in E$.
    \item \stitle{Flow-in Vertices}: $u$ is a flow-in vertex if 1)~$\dego(u) = 0$, or 2) $\dego(u)=1$ and $C(u,v) = +\infty$, for $(u,v)\in E$.
    \item \stitle{Flow-crossing Vertices}: $u$ is a flow-crossing vertex if \red{$\exists (u,v_1), (v_2,u)\in E$, $C(u,v_1)\not=+\infty$ and $C(v_2,u)\not=+\infty$}.
\end{itemize}

\stitle{Vertex Compression ($\CR(G, u_1,u_2)$).} Given an \RTFNet{}, $G=(V,E,C)$, the compression of two vertices $u_1$ and $u_2$ has the following steps: 
\begin{enumerate}
    \item Add a super vertex $u$ to $V$.
    \item If $(v,u_1)\in E$ or $(v,u_2)\in E$, add $(v,u)$ to $E$; If $(u_1,v)\in E$ or $(u_2,v)\in E$, add $(u,v)$ to $E$.
    \item $u_1$, $u_2$, and the edges adjacent to them are removed from $E$. The compression process is denoted by $(G',u)=\CR(G,u_1,u_2)$, where $G'$ is the compressed graph and $u$ is the super vertex.
\end{enumerate}
\eat{
1) add a super vertex $u$ to $V$. 2) If $(v,u_1)\in E$ or $(v,u_2)\in E$, add $(v,u)$ to $E$; If $(u_1,v)\in E$ or $(u_2,v)\in E$, add $(u,v)$ to $E$. 3) $u_1$, $u_2$, and the edges adjacent to them are removed from $E$. The compression process is denoted by $(G',u)=\CR(G,u_1,u_2)$, where $G'$ is the compressed graph and $u$ is the super vertex.}

Given a vertex compression $(G',u)=\CR(G,u_1,u_2)$, if \red{$u_1,u_2\not \in \Src\cup \Dst$} and $\MFlow(s,t)$ on $G$ is equal to that on $G'$ for any source $s$ and sink $t$, then we say $\CR(G,u_1,u_2)$ is {\em flow-preserving}.

\eat{(where $s,t\neq u_1,u_2$)}

\stitle{Network Compression (Algorithm~\ref{algo:compression}, Figure~\ref{fig:compress}).} We have discovered that there are five types of flow-preserving vertex compression. To illustrate these types in the compression, we color the flow-out vertices \textit{white} and flow-in vertices \textit{black} in Figure~\ref{fig:compress}. The network compression of $\name$ scans each vertex $u_1\in V$ and checks if it can be compressed with any of its neighboring vertex $u_2$. If the compression of $u_1$ and $u_2$ fits any of the five flow-preserving cases, $\name$ compresses them using the function $\CR(G,u_1,u_2)$. The network compression process ends when no more vertices can be compressed. The time complexity of Algorithm~\ref{algo:compression} is $O(|V| + |E|)$.

\begin{lemma}\label{lemma:5cases}
    Given a graph $G$ and an edge $(u_1,u_2)\in E$, let $(G',u)=\CR(G,u_1,u_2)$. Then, $(G',u)$ is flow-preserving in the following cases:
    \begin{itemize}[leftmargin=*]
        \item \textbf{Case a.} $u_1$ and $u_2$ are both flow-in vertices; and $u$ is  a flow-in vertex.
        \item \textbf{Case b.} $u_1$ and $u_2$ are both flow-out vertices; and $u$ is a flow-out vertex.
        \item \textbf{Case c.} $u_1$ is a flow-in vertex; and $u_2$ is a flow-out vertex.
        \item \textbf{Case d.} $u_1$ is a flow-in vertex; and $u_2$ is a flow-crossing vertex.
        \item \textbf{Case e.} $u_1$ is a flow-crossing vertex; and $u_2$ is a flow-out vertex.
    \end{itemize}
\end{lemma}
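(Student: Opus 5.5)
The plan is to compare the two maximum flows through their minimum cuts. Fix any source $s$ and sink $t$ (with $u_1,u_2\notin\Src\cup\Dst$, as required for flow-preservation). I will also assume that when $\CR$ creates parallel edges, their capacities are summed.

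\emph{Step 1: $\MFlow$ on $G'$ is at least $\MFlow$ on $G$.} Map any flow $f$ on $G$ to $G'$ by redirecting every edge incident to $u_1$ or $u_2$ to $u$. Edges between $u_1$ and $u_2$ become self-loops and are dropped. Capacities still hold because parallel capacities add. Conservation at $u$ is the sum of the conservation equations at $u_1$ and $u_2$, with the internal $u_1$--$u_2$ flow cancelling. So $\MFlow$ on $G'$ is at least $\MFlow$ on $G$.

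\emph{Step 2: reduction to a cut statement.} For the reverse inequality I use max-flow/min-cut. An $s$-$t$ cut $(A,B)$ of $G$ with $u_1,u_2$ on the same side maps to an $s$-$t$ cut of $G'$ of the same capacity. It therefore suffices to show that $G$ has a minimum $s$-$t$ cut that does not separate $u_1$ and $u_2$.

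\emph{Step 3: the two exchange facts.} I will use two exchange facts, both read off the definitions of flow-in and flow-out vertices.
\begin{enumerate}
\item If $x$ is a flow-in vertex whose unique out-edge $(x,y)$ has capacity $+\infty$, and $x\in B$, $y\in A$, then moving $x$ to $A$ does not increase the cut capacity. The edges from $A$ into $x$ stop being cut. The only out-edge of $x$ goes to $y\in A$, so no new edge is cut.
\item Symmetrically, if $y$ is a flow-out vertex whose unique in-edge $(x,y)$ has capacity $+\infty$, and $y\in A$, $x\in B$, then moving $y$ to $B$ does not increase the cut capacity.
\end{enumerate}
Both moves are legal because $u_1,u_2$ are neither $s$ nor $t$.

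\emph{Step 4: the five cases.} Take a finite minimum cut; one exists since $s,t$ are distinct original vertices and the horizontal edges have finite capacities. Because $(u_1,u_2)\in E$ and the relevant degree is then forced to be $1$ with capacity $+\infty$, the five cases reduce to the two facts.
\begin{itemize}
\item In Cases a, c and d, $u_1$ is flow-in and has the out-edge $(u_1,u_2)$. So its unique out-edge is $(u_1,u_2)$ with capacity $+\infty$.
\item In Cases b and e, $u_2$ is flow-out and has the in-edge $(u_1,u_2)$. So its unique in-edge is $(u_1,u_2)$ with capacity $+\infty$.
\end{itemize}
In every case, $u_1\in A$, $u_2\in B$ is impossible for a finite cut, since the infinite edge $(u_1,u_2)$ would be cut. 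In the remaining configuration $u_1\in B$, $u_2\in A$, I apply fact (1) to $u_1$ in Cases a, c, d, and fact (2) to $u_2$ in Cases b, e. This yields a minimum cut with $u_1,u_2$ on the same side, and flow-preservation follows.

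\emph{Expected obstacle.} The main subtlety is confirming that the type of $u_2$ in Cases c and d, and of $u_1$ in Case e, is irrelevant. The argument only uses the vertex on the \emph{infinite} side of the edge, which should make those assumptions unnecessary for the value equality. The extra hypotheses in Cases a and b (that $u$ stays flow-in, resp. flow-out) are likewise not needed for equality of flow values. I expect they are there to keep the vertex classification valid for repeated compression in Algorithm~\ref{algo:compression}. I would remark on this rather than use them.
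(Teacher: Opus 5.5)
Your flow-preservation argument is correct and is essentially the paper's proof (Lemma~\ref{lemma:gray}). Both arguments reduce the five cases to ``$u_1$ is flow-in or $u_2$ is flow-out'' and use $(u_1,u_2)\in E$ to force that edge to have capacity $+\infty$. Both then compare minimum cuts: a cut with $u_1$ on the source side and $u_2$ on the sink side is ruled out, and the opposite configuration is repaired by moving the degree-one endpoint.

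You differ only in details:
\begin{itemize}
\item You obtain the easy inequality by merging a flow, where the paper lifts a cut of $G'$ back to $G$.
\item You write out the symmetric move for a flow-out $u_2$. The paper's four-situation argument spells out only the flow-in move, yet Cases b and e need the other one.
\item You make explicit the summed-capacity convention that the cut transfer relies on.
\item Your claim that a finite minimum cut exists is unnecessary. If every cut is infinite, the maximum flow on $G$ is $+\infty$ and the inequality is trivial; the paper handles that situation this way.
\end{itemize}

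What you should not set aside is the clause about $u$ in Cases a and b. The paper treats it as a conclusion, not a hypothesis. The second ingredient of its proof sketch, Lemmas~\ref{lemma:colorblack} and~\ref{lemma:colorwhite}, shows that $u$ is flow-in (resp.\ flow-out). This matters because Algorithm~\ref{algo:compression} colors the super vertex on that basis for later compressions.

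The proof is one line. In Case a, the only out-edge of $u_1$ is $(u_1,u_2)$, which becomes a discarded self-loop. So the out-edges of $u$ are those of $u_2$, minus a possible edge back to $u_1$. That leaves either no out-edge or a single edge of capacity $+\infty$, so $u$ is flow-in. Case b is the mirror image with in-edges. Adding this makes your proof cover the statement as the paper intends it.
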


\eat{
\begin{lemma}\label{lemma:5cases}
    Given a $G$ and an edge $(u_1,u_2)\in E$, the following hold.
    \begin{itemize}
        \item \textbf{Case a.} If $u_1,u_2$ are both flow-in vertices, $(G',u)=\CR(G,u_1,u_2)$ is flow-preserving and $u'$ is also a flow-in vertex.
        \item \textbf{Case b.} If $u_1,u_2$ are both flow-out vertices, $(G',u)=\CR(G,u_1,u_2)$ is flow-preserving and $u'$ is a flow-out vertex.
        \item \textbf{Case c.} If $u_1$ is a flow-in vertex and $u_2$ is a flow-out vertex, $(G',u)=\CR(G,u_1,u_2)$ is flow-preserving.
        \item \textbf{Case d.} If $u_1$ is a flow-in vertex and $u_2$ is a flow-crossing vertex, $(G',u)=\CR(G,u_1,u_2)$ is flow-preserving.
        \item \textbf{Case e.} If $u_1$ is a flow-crossing vertex or $u_2$ is a flow-out vertex, $(G',u)=\CR(G,u_1,u_2)$ is flow-preserving.
    \end{itemize}
\end{lemma}
}

\stitle{Proof Sketch.} The correctness of Lemma~\ref{lemma:5cases} can be established by the following two lemmas:
\begin{enumerate}
    \item If $u_1$ is flow-in vertex or $u_2$ is flow-out vertex, $\CR(G,u_1,u_2)$ is flow-preserving. It is because either $u_1$ has no other out-edge or $u_2$ has no other in-edge, $f(u_2,u_1)=0$ regardless of $C(u_2,u_1)$, and thus, $\CR(G,u_1,u_2)$ is flow-preserving. 
    \item If $u_1$ and $u_2$ are both flow-in (resp. flow-out) vertices, then for $(G',u)=\CR(G,u_1,u_2)$, $u$ is a flow-in (resp. flow-out) vertex. By the definition of vertex compression, $u$ has at most one outgoing edge and its capacity is $+\infty$, which follows the definition of flow-in (resp. flow-out) vertices.
\end{enumerate}

The detailed proof is shown in Appendix~\ref{sec:theorem1} of \cite{techreport}.

\begin{algorithm}[tb]
    \caption{Network Compression}\label{algo:compression}
    \footnotesize
    \SetKwProg{Fn}{Function}{}{}
    \KwIn{An \RTFNet{} $G=(V, E, C)$}
    \KwOut{A compressed network $G'$}
    \ForEach(){$u_1 \in V$}{
        \ForEach(){$u_2 \in N(u_1)$}{
            \If{compression is flow-preserving for $(u_1, u_2)$}{
                $(G',u) = \CR(G,u_1,u_2)$ \\
                \tcp{Update color based on conditions}
                \uIf{$u_1$ \textnormal{and} $u_2$ \textnormal{are both black}}{
                    color $u$ black \tcp*[h]{Case a}
                }
                \uElseIf{$u_1$ \textnormal{and} $u_2$ \textnormal{are both white}}{
                    color $u$ white \tcp*[h]{Case b}
                }
                \ElseIf{$u_1$ \textnormal{is black and} $u_2$ \textnormal{is white} \textnormal{or} $u_1$ \textnormal{or} $u_2$ \textnormal{is gray}}{
                    color $u$ gray \tcp*[h]{Cases c, d, e}
                }
            }
        }
    }
    \Return $G$
\end{algorithm}

\begin{figure}[tb]
	\begin{center}
	\includegraphics[width=0.8\linewidth]{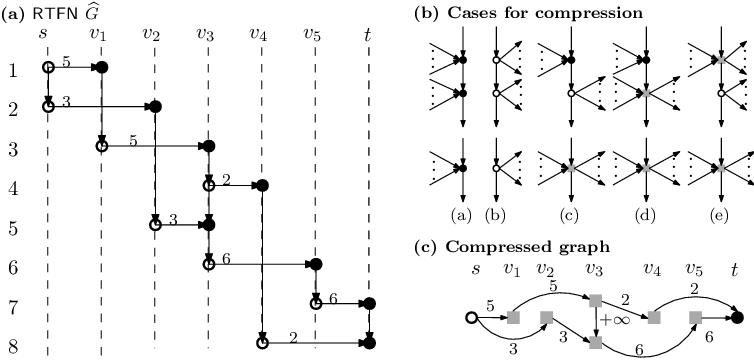}
		\end{center}
  \vspace{-3mm}
	\caption{The compressed graph of Figure~\ref{fig:transformed}(a)}
	\label{fig:compress}
\end{figure}

\begin{example}
    Consider the \RTFNet{} shown in Figure~\ref{fig:transformed}(a), $s^1$, $s^2$, $v_1^3$, $v_2^5$, $v_3^4$, $v_3^6$, $v_4^8$ and $v_5^7$ are flow-out vertices whereas $v_1^1$, $v_2^2$, $v_3^3$, $v_4^4$, $v_3^5$, $v_1^3$, $v_5^6$, $t^7$ and $t^8$ are flow-in vertices. We highlight the flow-out vertices white and flow-in vertices black as shown in Figure~\ref{fig:compress}(a). Five flow-preserving vertex compression cases are presented in Figure~\ref{fig:compress}(b). As the compression of $s^1$ and $s^2$ is an instance of \textbf{Case (b)}, they are compressed as illustrated in Figure~\ref{fig:compress}(c). Similarly, $(t^7,t^8)$ is an instance of \textbf{Case (a)} and $(v_1^1,v_1^3)$ is an instance of \textbf{Case (c)}, they are compressed accordingly. The compressed graph is shown in Figure~\ref{fig:compress}(c).
\end{example}
\eat{
\stitle{Time complexity.} The time complexity of Algorithm~\ref{algo:compression} is $O(|V| + |E|)$, as each edge is traversed at most once and the checking of the five cases at each edge takes constant time.
}

\subsection{The proof of network processing}\label{sec:theorem1}

\begin{manualtheorem}{\ref{theorem:identical}} 
    Given a \TFNet{} $G$ with a source $s$ and a sink $t$, the maximum flow value \MFlow$(s,t)$$=$\MFlow$(s^{\ts_1},t^{\ts_{\max}})$ in the \RTFNet{} $\Transform$.    
\end{manualtheorem}
\begin{proof}
    We prove this theorem by contradiction. Assume that the maximum flow from $s^{\ts_1}$ to $t^{\ts_{\max}}$ on \RTFNet{} is $\widehat{f}$ and the maximum flow from $s$ to $t$ on \TFNet{} is $f$. 1) If $|\widehat{f}|~<|f|$, there exists a flow $\widehat{f'}$ on \RTFNet{}, such that $|\widehat{f'}|~=|f|$ by Lemma~\ref{lemma:tfnet}. $|\widehat{f'}|~>|\widehat{f}|$ contradicts the assumption that $\widehat{f}$ is the maximum flow on \RTFNet{}. 2) If $|\widehat{f}|~>|f|$, $f$ is not the maximum flow on \TFNet{} using Lemma~\ref{lemma:rtfnet} which contradicts the assumption. Hence, we conclude that $|\widehat{f}|~=|f|$.
\end{proof}

\begin{lemma}\label{lemma:tfnet}
Given a \TFNet{} $G$, two vertex $s$ and $t$, and any temporal flow $f$ between $s$ and $t$, there exists a flow $\hat{f}$ with the same value between $s^{\tau_1}$ and $t^{\tau_{max}}$ in \RTFNet{} $\Transform$, following the transformation function $\Tran$, where $s^{\tau_1}$ is the first copy of $s$ and $t^{\tau_{max}}$ is the last copy of $t$.
\end{lemma}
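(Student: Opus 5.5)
We construct $\hat f$ on $\Transform$ directly from $f$ by routing each unit of temporal flow through the copies of the vertices at the times it actually traverses edges. For every original edge $e=(u,v)$ with $\ts=\T(e)$, set the flow on the horizontal edge by $\hat f(u^{\ts},v^{\ts}) = f(u,v)$. Since $\C(u^{\ts},v^{\ts})=C(u,v)$, the capacity constraint holds on horizontal edges automatically. The remaining task is to define flow on the vertical edges of each vertex chain $u^{\ts_1}\to u^{\ts_2}\to\cdots\to u^{\ts_k}$ so that conservation holds at every copy. Vertical edges have capacity $+\infty$, so only nonnegativity has to be checked.

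For an intermediate vertex $u\notin\{s,t\}$ with copies at times $\ts_1<\cdots<\ts_k$, define the vertical flow
\[
\hat f(u^{\ts_j},u^{\ts_{j+1}}) \;=\; \sum_{(w,u)\in E,\ \T(w,u)\le \ts_j} f(w,u) \;-\; \sum_{(u,w)\in E,\ \T(u,w)\le \ts_j} f(u,w),
\]
that is, the cumulative amount entered minus the amount left up to time $\ts_j$ (the money "held" by $u$ after time $\ts_j$).

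\begin{enumerate}
\item Nonnegativity. This is exactly the temporal-flow constraint that the cumulative inflow is at least the cumulative outflow at every time point.
\item Conservation at $u^{\ts_j}$. The incoming vertical flow plus the horizontal inflow at time $\ts_j$ equals the outgoing horizontal flow at $\ts_j$ plus the outgoing vertical flow. This follows by telescoping the cumulative sums. Several edges may share a timestamp, but they all attach to the same copy $u^{\ts_j}$, so the sums group correctly.
\item The last copy $u^{\ts_k}$ has no outgoing vertical edge. Here we need the held amount to be $0$, which is the static flow conservation of $f$ at $u$.
\end{enumerate}

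For the source $s$, all copies are chained upward from $s^{\ts_1}$. Define the vertical flow as the total amount that $s$ still has to send after time $\ts_j$, i.e.\ $\sum_{\T(s,w)>\ts_j} f(s,w)$ minus any returning inflow handled analogously. Inject $|f|$ at $s^{\ts_1}$. Symmetrically, the copies of $t$ accumulate arriving flow and pass it up to $t^{\ts_{\max}}$. The value of $\hat f$ leaving $s^{\ts_1}$ is then $\sum_v f(s,v)=|f|$.

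The main obstacle I expect is the bookkeeping at the terminals. If flow can re-enter $s$ or leave $t$, the cumulative quantities at the terminal chains could become negative. In that case the terminal chain definitions must mirror the intermediate one, offset by the net amount already sent or received. Nonnegativity then has to be argued, possibly after first cancelling flow that re-enters $s$ or leaves $t$. That cancellation does not change $|f|$ and preserves the temporal constraint. I would handle the terminals with this normalization first, and then apply the uniform cumulative-sum construction to all vertices.
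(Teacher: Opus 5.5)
Your construction is the same as the paper's. Horizontal edges inherit $f$, and the vertical edge $(u^{\ts_j},u^{\ts_{j+1}})$ carries cumulative inflow minus cumulative outflow up to $\ts_j$. Nonnegativity is exactly the temporal flow constraint, and conservation at each copy follows by telescoping. You are more careful than the paper, which never treats the last copy or the terminal chains; its formula, read literally for $u=s$, even gives negative vertical flow. So your ``amount still to send'' chain for $s$ and your accumulating chain for $t$ are the right completion. One caveat: cancelling flow that re-enters $s$ preserves the value only under the net-outflow convention, not under the paper's gross definition $|f|=\sum_v f(s,v)$.
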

\begin{proof}
    We prove this lemma using the construction method. 

The transformation function (resp. the transformation reverse function) denoted as $\Tran$ (resp. $\Tran^{-1}$), is used to transform edges from \TFNet{} to \RTFNet{} (resp. from \RTFNet{} to \TFNet{}).

    \stitle{Flow construction.} 1) Consider the edge $e=(u,v)$ and $\widehat{E} = \Tran(e)$. We let $\hat{f}(\widehat{E}) = f(e)$. 2) Consider a vertex $u\in V$ and its copies $u^{\ts_1},u^{\ts_2},\ldots ,u^{\ts_k}$ in $\Transform$. We let $\yunxiang{\hat{f}} (u^{\ts_i},u^{\ts_{i+1}})=$ $\sum\limits_{v\in V} \hat{f}(v,u)$ - $\sum\limits_{v\in V} \hat{f}(u,v)$, where $T(v,u) \leq \ts_i$ and $T(u,v) \leq \ts_i$. We show that the construction preserves capacity constraint and flow conservation. 
    
    \stitle{Capacity constraint.} Consider edge $e=(u,v)$ and its transformed edge $\widehat{E} = \Tran(e)$. \textbf{Case 1:} $\widehat{C}(\widehat{E}) = C(e)$ implies that $\yunxiang{\hat{f}}(\widehat{E}) = f(e) \leq C(e) = \widehat{C}(\widehat{E})$. \textbf{Case 2:} Consider each vertical edge $\widehat{E} = (u^{\ts_i},u^{\ts_{i+1}})$. Since the temporal flow constraint in Definition~\ref{def:tflow}, $\yunxiang{\hat{f}}(\widehat{E})\yunxiang{\geq}0$. Since the $\widehat{C}(\widehat{E}) = +\infty$,  $0 \yunxiang{\leq\hat{f}} (\widehat{E}) < \widehat{C}(\widehat{E})$. Hence, the capacity constraint is preserved.
    
    \stitle{Flow conservation.} Consider a vertex $u\in V\setminus\{s,t\}$ and its copies $u^{\ts_1},u^{\ts_2},\ldots ,u^{\ts_k}$ in $\Transform$. For a copy $u^{\ts_{i+1}}$, the incoming flows are either from 1) $u^{\ts_i}$ or 2) the copies of other nodes. The outgoing flows are either to 1) $u^{\ts_{i+2}}$ or 2) the copies to other nodes.
    
    With the flow construction, the flow from $u^{\ts_i}$ is 
    \begin{equation}\label{eq:in1}
        \hat{f}(u^{\ts_i},u^{\ts_{i+1}})=\sum\limits_{v\in V, \ts \in (0, \ts_{i}]} f(v,u) - \sum\limits_{v\in V, \ts \in (0, t_{i}]} f(u,v)
    \end{equation}

    The flow from the copies of other nodes is 
    \begin{equation}
        \hat{f}(v^{\ts_{i+1}}, u^{\ts_{i+1}}) = \sum\limits_{v\in V, \ts =\ts_{i+1}} f(v,u)
    \end{equation}
        
    The flow to $u^{\ts_{i+2}}$ is 
    \begin{equation}
        \hat{f}(u^{\ts_{i+1}},u^{\ts_{i+2}})=\sum\limits_{v\in V, \ts \in (0, \ts_{i+1}]} f(v,u) - \sum\limits_{v\in V, \ts \in (0, \ts_{i+1}]} f(u,v)
    \end{equation}

    The flow to the copies of other nodes is 
    \begin{equation}\label{eq:out2}
        \hat{f}(u^{\ts_{i+1}},v^{\ts_{i+1}}) = \sum\limits_{v\in V, \ts =\ts_{i+1}} f(u,v)
    \end{equation}

    By the Equation~\ref{eq:in1}-\ref{eq:out2}, we have 
    \begin{equation}
        \hat{f}(u^{\ts_i},u^{\ts_{i+1}}) + \hat{f}(v^{\ts_{i+1}}, u^{\ts_{i+1}}) = \hat{f}(u^{\ts_{i+1}},u^{\ts_{i+2}}) + \hat{f}(u^{\ts_{i+1}},v^{\ts_{i+1}})
    \end{equation}
      The sum of the incoming flow is equal to the outgoing flow. Hence, the flow conservation is preserved. 
\end{proof}

\begin{lemma}\label{lemma:rtfnet}
    Given a \RTFNet{} $\hat{G}$, two vertex $s$ and $t$, and a flow $\hat{f}$ between $s$ and $t$. We can construct a flow $f$ with the same value in the corresponding \TFNet.
\end{lemma}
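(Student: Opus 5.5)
The plan is to build a temporal flow $f$ on the \TFNet{} $G$ from the static flow $\hat f$ on the \RTFNet{} $\Transform$, namely the flow from $s^{\ts_1}$ to $t^{\ts_{\max}}$. This is the inverse of the construction in Lemma~\ref{lemma:tfnet}, using the reverse map $\Tran^{-1}$. Every horizontal edge $\hat e=(u^{\ts},v^{\ts})$ of $\Transform$ is the image of exactly one original edge $e=(u,v)$ with $\T(e)=\ts$, because multi-edges were excluded. I therefore set $f(e)=\hat f(\hat e)$ and discard the flow on vertical edges. The capacity constraint follows at once from $\C(\hat e)=C(e)$.

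Next I would verify ordinary flow conservation at every $u\in V\setminus\{s,t\}$. Its copies $u^{\ts_1},\dots,u^{\ts_k}$ form a directed vertical chain. Summing conservation over all of them makes the internal vertical flows telescope. The first copy has no incoming vertical edge and the last has no outgoing one, so the total horizontal inflow into the copies equals the total horizontal outflow. In terms of $f$, this says inflow equals outflow at $u$.

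The key step, which I expect to be the main obstacle, is the cumulative temporal constraint. For each $\ts$, the money that has entered $u$ by time $\ts$ must be at least the money that has left $u$ by time $\ts$. Summing conservation only over the copies $u^{\ts_1},\dots,u^{\ts_j}$ with $\ts_j\le\ts$, the vertical terms again telescope. What remains is
\[
\sum_{\T(v,u)\le \ts} f(v,u)-\sum_{\T(u,v)\le\ts} f(u,v)=\hat f(u^{\ts_j},u^{\ts_{j+1}})\ge 0 ,
\]
where the last term is $0$ if $j=k$. Nonnegativity holds because flows are nonnegative, i.e.\ $\hat f\ge 0$ on the infinite-capacity vertical edge. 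One subtlety needs care: the vertical edges go only forward in time. The \RTFNet{} regret mechanism creates reverse residual edges, but the final flow on a vertical edge is still nonnegative in the forward direction. This is exactly what encodes ``cannot spend before receiving.''

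The endpoints need separate treatment. All flow leaves through the copies of $s$; the same telescoping, applied to the chain starting at $s^{\ts_1}$, shows the net horizontal outflow of $s$ equals $|\hat f|$. Symmetrically, the net horizontal inflow of $t$ equals the flow entering $t^{\ts_{\max}}$. Hence $|f|=|\hat f|$, which proves the lemma. Together with Lemma~\ref{lemma:tfnet}, this gives Theorem~\ref{theorem:identical}.
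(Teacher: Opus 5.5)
Your proposal is correct and follows the same route as the paper: copy each horizontal-edge flow back to its original edge via $\Tran^{-1}$, and get the capacity constraint from $\C(\hat e)=C(e)$. The temporal constraint comes from nonnegativity of $\hat f(u^{\ts_j},u^{\ts_{j+1}})$, read as cumulative inflow minus outflow, and conservation from the last copy having no outgoing vertical edge. Your version is slightly more complete: you derive that cumulative identity by telescoping (the paper only asserts it), you handle arbitrary $\ts$ rather than only copy timestamps, and you explicitly check $|f|=|\hat f|$, which the paper's proof never addresses.
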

\begin{proof}
    We prove this lemma by construction method.

    \stitle{Construction.} Consider edge $\widehat{E} = (u^{\ts_i}, v^{\ts_i}) \in \widehat{E}$ and the original temporal edge $e=\Tran^{-1}(\widehat{E})$. If $\hat{f}(\widehat{E})\not=0$, we let $f(e)=\hat{f}(\widehat{E})$. We show that the construction preserves capacity constraint, temporal flow constraint, and flow conservation.

    \stitle{Capacity constraint.} $\widehat{C}(\widehat{E}) = C(e)$ implies that $f(e) = \hat{f}(\widehat{E}) \leq \widehat{C}(\widehat{E}) = C(e)$. Therefore, the capacity constraint is preserved.

    \stitle{Temporal flow constraint.} Given a vertex $u$, we collapse all the copies $u^{\ts_1},u^{\ts_2},\ldots, u^{\ts_k}$. Consider the timestamp $\ts_i$: 
    
    \begin{equation}
    \hat{f}(u^{\ts_i}, u^{\ts_{i+1}}) =\sum\limits_{v\in V,\ts \in (0,\ts_i]} f(v,u) - \sum\limits_{v\in V, \ts \in (0,\ts_i]} f(u,v)    
    \end{equation}
    
    $\hat{f}(u^{\ts_i}, u^{\ts_{i+1}})$ is non-negative, we have 
    
    \begin{equation}
    \sum\limits_{v\in V,\ts \in (0,\ts_i]} f(v,u) \geq \sum\limits_{v\in V, \ts \in (0,\ts_i]} f(u,v)    
    \end{equation}
    Therefore, the temporal flow constraint is preserved.

    \stitle{Flow conservation.} The flow conservation is preserved that there is not outgoing from of $u$ at timestamp $\ts_k$:
    
    \begin{equation}
    \sum\limits_{v\in V,\ts \in (0,\ts_i]} f(v,u) = \sum\limits_{v\in V, \ts \in (0,\ts_i]} f(u,v)    
    \end{equation}
\end{proof}

\begin{lemma}
    Given a \TFNet{} $G=(V,E,C,\T)$ and its \RTFNet{} $\Transform=(\widehat{V}, \widehat{E}, \widehat{C})$, the size of $\widehat{V}$ is bounded by $2|E|$ and the size of $\widehat{E}$ is bounded by $3|E| - |V|$.
\end{lemma}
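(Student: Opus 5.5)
We prove the bound by counting the copies and edges that the transformation of Section~\ref{sec:preprocess} creates. Every vertex of $\Transform$ is a copy $u^{\ts}$, and a copy is created only as an endpoint of a horizontal edge $\hat{e}=(u^{\ts},v^{\ts})$ obtained from some $e=(u,v)\in E$ with $\T(e)=\ts$. Each edge of $G$ therefore introduces at most two new copies, one for its tail and one for its head. Copies with the same coordinate $\langle u,\ts\rangle$ are shared and only reduce the count. Hence $|\widehat{V}|\le 2|E|$.

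For the edges, we split $\widehat{E}$ into horizontal and vertical edges. Each $e\in E$ yields exactly one horizontal edge, so there are $|E|$ of them. For a vertex $u\in V$ with $k_u$ copies, the construction adds the chain $(u^{\ts_1},u^{\ts_2}),\ldots,(u^{\ts_{k_u-1}},u^{\ts_{k_u}})$. This chain has exactly $k_u-1$ vertical edges. Summing over all vertices gives
\begin{equation*}
|\widehat{E}| = |E| + \sum_{u\in V}(k_u-1) = |E| + |\widehat{V}| - |V| \le |E| + 2|E| - |V| = 3|E|-|V|.
\end{equation*}

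The one point needing care is the term $-|V|$. It requires every $u\in V$ to have $k_u\ge 1$ copies, that is, every vertex of $G$ to be incident to at least one edge. If $G$ had an isolated vertex, it would produce no copy in $\Transform$, and the sum $\sum_{u\in V}(k_u-1)$ would contribute $-1$ for that vertex instead of $0$. The equality $\sum_{u\in V}(k_u-1)=|\widehat{V}|-|V|$ would then fail, and the count would be off.

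I would handle this by noting that isolated vertices of a \TFNet{} carry no flow and can be discarded. Equivalently, I would take $V$ to be the set of vertices appearing in $E$, which matches the paper's view of vertices as accounts involved in transactions. Under that convention $k_u\ge 1$ for all $u$, the sum runs over the vertices that actually have copies, and the bound $3|E|-|V|$ follows.
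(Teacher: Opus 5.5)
Your proposal is correct and uses the same counting as the paper's proof: at most one copy per edge endpoint (the paper phrases this as $\deg(v)$ copies per vertex, summing to $2|E|$), a chain of (number of copies minus one) vertical edges per vertex, and exactly $|E|$ horizontal edges. Your two refinements are justified. The paper states these counts as equalities, although copies shared at equal timestamps make them only upper bounds. Its step $\sum_{v\in V}(\deg(v)-1)=2|E|-|V|$ also tacitly assumes that $G$ has no isolated vertices, which you make explicit; without that assumption, $3|E|-|V|$ can even be negative.
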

\eat{
\begin{manuallemma}{\ref{lemma:space}}
    Given a \TFNet{} $G=(V,E,C,\T)$ and its \RTFNet{} $\Transform=(\widehat{V}, \widehat{E}, \widehat{C})$, the size of $\widehat{V}$ is bounded by $2|E|$ and the size of $\widehat{E}$ is bounded by $3|E| - |V|$.
\end{manuallemma}
}

\begin{proof}
    Consider a vertex $v\in V$, there are $\deg(v)$ copies in $\hat{G}$. Hence, the number of the vertices in $\hat{G}$ is $\sum_{v\in V}\deg(v) = 2|E|$. The number of the vertical edges of a vertex $v\in V$ is $\mathsf{deg}(v)-1$, where $\mathsf{deg}(v)$ is the degree of $v$. Hence, there are $\sum_{v\in V}\deg(v)-1 = 2|E| - |V|$. And the number of horizontal edges is equal to $|E|$. Therefore, the number of $\widehat{E}$ is $3|E| - |V|$.
\end{proof}

\begin{lemma}\label{lemma:dag}
    \RTFNet{} $\Transform=(\widehat{V}, \widehat{E}, \widehat{C})$, $\Transform$ is a DAG.
\end{lemma}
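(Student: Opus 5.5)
The plan is to exhibit a \emph{potential function} on the vertices of $\Transform$ that strictly increases along every edge. A directed cycle would then return to its starting vertex with a strictly larger potential, which is impossible. This is the standard certificate of acyclicity.

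Every vertex of $\Transform$ is a copy $u^{\ts}$ sitting at coordinate $\langle u,\ts\rangle$ of the virtual $xOy$ system, so the natural potential is the timestamp $\ts$. Following the transformation algorithm, there are two kinds of edges.

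\begin{enumerate}[leftmargin=*]
\item \textbf{Vertical edges} $(u^{\ts_i},u^{\ts_{i+1}})$ join consecutive copies of the same vertex $u$. Taking the copies sorted by increasing timestamp, we have $\ts_i<\ts_{i+1}$, so the potential strictly increases.
\item \textbf{Horizontal edges} $(u^{\ts},v^{\ts})$ keep the timestamp unchanged. Here the timestamp alone does not increase, so a tie-breaker is needed.
\end{enumerate}

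Timestamp alone is therefore not enough, and I would use the lexicographic potential $\phi(u^{\ts})=(\ts, r(u^{\ts}))$. The secondary key $r$ is a rank among the vertices that share timestamp $\ts$, chosen so that every horizontal edge goes from lower to higher rank. Such a rank exists exactly when the subgraph $H_\ts$ of horizontal edges at level $\ts$ is acyclic. The whole argument therefore reduces to showing that each $H_\ts$ is a DAG.

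Showing $H_\ts$ acyclic is the main obstacle. If $G$ contained a cycle of transactions all carrying the same timestamp, then $H_\ts$ would inherit that cycle and the claim would be false. I would handle this in one of two ways.
\begin{enumerate}[leftmargin=*]
\item Argue from the modelling that timestamps are distinct, since each transaction occurs at its own moment, and multi-edges have already been split via intermediate nodes as in the footnote of Section~\ref{subsec:prelim}. Then every $H_\ts$ is a single edge with $u\neq v$ (self-loops are excluded), which is trivially acyclic.
\item Otherwise, create distinct endpoint copies per edge. The step ``a copy $u^{\ts}$ is created for each edge'' reads naturally this way, and it is consistent with the count $|\widehat V|=2|E|$ in the preceding lemma. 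Order the copies of each vertex by $(\ts,\text{edge id})$ and give the outgoing copy of an edge a slightly larger potential than its incoming copy. Each horizontal edge $(u^{\ts},v^{\ts})$ then goes from a copy of $u$ to a copy of $v$ at the same level. Any cycle would have to use only horizontal edges at one level together with vertical edges, which are all nondecreasing in $(\ts,\text{id})$. With $r$ taken as edge id, however, each horizontal edge is the unique edge of its own copy pair, so no two horizontal edges chain without passing through a vertical edge, and each vertical edge strictly increases the potential.
\end{enumerate}

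I would commit to the second reading. Formally I would define $\phi(u^{\ts})=(\ts,\mathrm{id}(e))$ for the copy created from edge $e$, and add $\tfrac12$ to the second coordinate for the head copy. I would then check that both edge types strictly increase $\phi$, and conclude that no directed cycle exists, so $\Transform$ is a DAG.
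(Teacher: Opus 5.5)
Your proposal uses the paper's argument and makes it rigorous. The paper also tracks the timestamp ($y$-coordinate) along a putative cycle: it never decreases, and it strictly increases on vertical edges. The paper then simply asserts that ``there is at least one vertical edge'' on the cycle. That assertion is exactly the acyclicity of the level graphs $H_\ts$ that you isolate, and the paper never justifies it.

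Your worry is also real for the construction as written. Copies are named $u^{\ts}$ and placed at $\langle u,\ts\rangle$, so all edges at $u$ with timestamp $\ts$ share one copy. A directed cycle of $G$ whose edges share one timestamp therefore becomes a horizontal cycle in $\Transform$. For instance, take the all-timestamps-$1$ embedding of \SDMF{} into \TEMSDMF{} that the paper itself invokes, and apply it to a cyclic $G$.

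So the lemma needs a hypothesis, and your option (1) is the right repair. You need less than global distinctness, which the model does not impose ($\T$ need not be injective). The exact condition is that $G$ has no directed cycle all of whose edges carry the same timestamp. A convenient sufficient condition is that the edges incident to each vertex have pairwise distinct timestamps: a same-timestamp cycle needs an in-edge and an out-edge of equal timestamp at some vertex. Under that condition the coordinate reading and your per-edge reading coincide, which also reconciles the paper's count of $\deg(v)$ copies with its notation $u^{\ts}$. With either hypothesis, your lexicographic potential (or the paper's argument with the missing step filled in) is complete.

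I would not commit to option (2). Per-edge copies ordered by a global $(\ts,\mathrm{id})$ key with the $\tfrac12$ shift do make $\phi$ strictly increasing on every edge. The global order matters: with independent per-vertex tie-breaking, $a\to b$ and $b\to a$ at one timestamp can already close a $4$-cycle. However, whenever ties occur this is a different network from the paper's, and there it changes flow values.

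Definition~\ref{def:tflow}'s cumulative constraint lets flow enter and leave a vertex at the same timestamp, but your tie-break can forbid this. For instance, $s\to u$ and $u\to t$ both at time $1$ with $\mathrm{id}(u\to t)<\mathrm{id}(s\to u)$ give maximum flow $0$ in your network but positive temporal flow in $G$. Likewise, Lemma~\ref{lemma:tfnet} assigns to the vertical edge out of $u^{\ts_i}$ the cumulative in-minus-out over all edges with timestamp at most $\ts_i$, which presupposes one copy per pair $(u,\ts)$.

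Option (2) thus buys acyclicity at the price of Theorem~\ref{theorem:identical}. The faithful version is the lemma under the hypothesis above.
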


\begin{proof}
    We prove this by contradiction. We assume there is a cycle in \RTFNet{}, denoted by $(v_1,\ldots, v_n)$, where $v_n = v_1$. Based on the graph transformation, consider an edge $e_i= (v_i, v_{i+1})$. The $y$-axis (timestamp) of $v_i$ is smaller than that of $v_{i+1}$ if $e$ is a vertical edge. The $y$-axis (timestamp) of $v_i$ is equal to that of $v_{i+1}$ if $e$ is a horizontal edge. Therefore, the timestamp of $v_1$ is smaller than or equal to that of $v_n (v_1)$ by induction. Since there is at least one vertical edge, the timestamp of $v_1$ is smaller than that of $v_n (v_1)$, which contradicts that the timestamp of each vertex is unique. Hence, \RTFNet{} is a DAG.
\end{proof}



\eat{
\begin{proof}
1) Since $\delta_i \geq f$, minimum flow coming out or going to $u$ is more than $f$, where $u\in \Src_i$ or $u\in \Dst_i$. Hence, $(\Src_i,\Dst_i) \subseteq \Core_f(S,T)$; 2) Consider vertices $\{v_n,\ldots, v_{i+1}\}$. The flow coming out or going to $v_n$ is smaller than $f$ based on the assumption. Hence $v_n$ is not included in $f\Core(S,T)$.  After $v_n$ is removed from $\Src$ and $\Dst$, $v_{n-1}$ will be removed too since $\delta_{n-1} < f$. Specifically, $\{v_n,\ldots, v_{i+1}\}$ are all removed by induction. $f\Core(S,T) \subseteq (\Src_i,\Dst_i)$. Hence, $(\Src_i,\Dst_i) = f\Core(S,T)$.
\end{proof}
}

\begin{lemma}\label{lemma:gray}
  Given a flow network $G$ and an edge $(u_1,u_2)\in E$. If $u_1$ is a flow-in  vertex or $u_2$ is a flow-out vertex, $(G',u')=\CR(G,u_1,u_2)$ is flow-preserving.
\end{lemma}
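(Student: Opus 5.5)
The plan is to prove the two directions of flow preservation separately: $\MFlow_{G'}(s,t)\ge\MFlow_G(s,t)$ by pushing a flow of $G$ forward onto $G'$, and $\MFlow_{G'}(s,t)\le\MFlow_G(s,t)$ by lifting a flow of $G'$ back to $G$. Throughout, $s,t\notin\{u_1,u_2\}$, as the definition of flow-preserving requires. I would treat the case where $u_1$ is a flow-in vertex in detail. The case where $u_2$ is a flow-out vertex is the mirror image: reverse all edges and exchange the roles of in- and out-edges.

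First I pin down the local structure. Because $(u_1,u_2)\in E$, we have $\dego(u_1)\ge 1$, so the flow-in condition forces $\dego(u_1)=1$. Hence $(u_1,u_2)$ is the \emph{only} out-edge of $u_1$, and $C(u_1,u_2)=+\infty$. By Lemma~\ref{lemma:dag} the \RTFNet{} is a DAG, so the reverse edge $(u_2,u_1)$ does not exist. This settles the point made in the proof sketch: no flow can run from $u_2$ back to $u_1$.

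Next comes the convention for parallel edges created by $\CR$. If some $v$ is adjacent to both $u_1$ and $u_2$ in the same direction, the merged edge $(v,u)$ or $(u,v)$ should receive the sum of the two capacities. Note that $u_1$ has no out-neighbour other than $u_2$, so merging can only arise on in-edges. I would state this convention explicitly, since it is what makes the two directions of the argument go through.

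Forward direction: take a flow $f$ in $G$ and define $f'$ on $G'$. Each edge not touching $u_1$ or $u_2$ keeps its value. Each merged in-edge $(v,u)$ gets $f(v,u_1)+f(v,u_2)$. Each out-edge $(u,w)$ gets $f(u_2,w)$. The internal edge $(u_1,u_2)$ disappears. Capacities are respected under the summed-capacity convention. Conservation at $u$ follows by adding the conservation equations at $u_1$ and $u_2$, because the internal term $f(u_1,u_2)$ cancels. Hence $|f'|=|f|$.

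Backward direction: this is the real content of the lemma. Given $f'$ on $G'$, I would split the inflow of $u$ back to the original edges. Flow on a merged in-edge $(v,u)$ is divided between $(v,u_1)$ and $(v,u_2)$ in any way that respects both capacities; this is possible because the merged capacity is their sum. All outflow of $u$ is assigned to out-edges of $u_2$, which is valid since $u_1$ has no other out-edges. Finally, set $f(u_1,u_2)$ to be the total inflow at $u_1$. This is feasible because $C(u_1,u_2)=+\infty$, and it gives conservation at $u_1$. Conservation at $u_2$ then follows from conservation at $u$ in $G'$.

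The main obstacle I expect is exactly this backward step. A merged vertex could in principle let flow enter through $u_2$'s in-edges and leave through edges that were originally $u_1$'s. That would be an infeasible shortcut in $G$. The key observation ruling it out is that $u_1$ contributes no out-edges other than $(u_1,u_2)$, and this edge has infinite capacity. So every path through $u$ in $G'$ corresponds to a path through $u_1\to u_2$ or through $u_2$ alone in $G$.

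Symmetrically, when $u_2$ is a flow-out vertex, its only in-edge is $(u_1,u_2)$ with infinite capacity. All outflow of $u$ is then routed out of $u_1$ or $u_2$, and all inflow of $u$ enters $u_1$ and passes freely along $(u_1,u_2)$. The same two constructions apply, which establishes the lemma.
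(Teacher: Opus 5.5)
Your proof is correct, but it works with flows where the paper works with cuts.

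\textbf{The paper's route.} It compares minimum cuts. First, a cut of $G'$ lifts to a cut of $G$ of equal value by putting $u_1,u_2$ on the side of $u'$. This shows the maximum flow on $G$ is at most that on $G'$, and it does not use the hypothesis. Second, a minimum cut of $G$ may be assumed not to separate $u_1$ from $u_2$. If $u_1$ is on the source side and $u_2$ on the sink side, the cut crosses the infinite edge $(u_1,u_2)$. The opposite split is repaired by moving $u_1$ across, which cannot increase the cut because $u_1$'s only out-edge goes to $u_2$.

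\textbf{Your route.} Your forward and backward flow maps prove exactly the same two inequalities. The hypothesis enters at the same place: it guarantees that the out-edges of the merged vertex are precisely those of $u_2$. Hence no flow in $G'$ can shortcut from $u_2$'s in-edges to $u_1$'s out-edges.

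\textbf{What each buys.} The cut argument is shorter but relies on max-flow/min-cut. Your construction is self-contained and needs no separate treatment of infinite cuts. It also gives an explicit recipe for translating a maximum flow computed on the compressed network back to the uncompressed \RTFNet{}. That matters because the query answer includes the flow itself.

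Your explicit summed-capacity convention is also a real improvement. The paper's definition of \CR{} never assigns capacities to merged edges, yet its equal-value cut correspondences silently require exactly that convention.

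\textbf{A minor remark.} The appeal to Lemma~\ref{lemma:dag} is unnecessary; the statement is for a general flow network, and compression is applied to already-compressed graphs. A reverse edge $(u_2,u_1)$ would just become a discarded self-loop. Its flow cancels when you add the conservation equations at $u_1$ and $u_2$, and you may set it to zero in the lift.
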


\begin{proof}

    \yunxiang{Due to the definition of flow-in vertex and flow-out vertex, we have $C(u,v)=+\infty$.} We denote the maximum-flow $\MFlow(s,t)$ on $G$ (resp.$G'$) by $f$ (resp. $f'$). First, we prove that $|f|~\leq|f'|$.
  
  \yunxiang{We assume that the minimum cut on $G'$ is $(S',T')$, and let $S=S'\setminus\{u'\}$, $T=T'\setminus\{u'\}$, we consider two cases.}

\begin{enumerate}[wide, labelwidth=!, labelindent=0pt]
	\itemsep0em 
    \item If $u'\in S'$, $(S\cup\{u,v\},T)$ is a cut with the same value on $G$.
    \item If $u'\in T'$, $(S,T\cup\{u,v\})$ is a cut with the same value on $G'$.
\end{enumerate}
  
  Then, we prove that $|f|\ge |f'|$.
  
  We assume that the minimum cut on $G$ is $(S,T)$, and let $S'=S\setminus\{u,v\}$, $T'=T\setminus\{u,v\}$, we consider $4$ situations.

\begin{enumerate}[wide, labelwidth=!, labelindent=0pt]
	\itemsep0em 
    \item If $u,v\in S$, $(S'\cup u',T')$ is a cut with the same value on $G'$.
    \item If $u,v\in T$, $(S',T'\cup u')$ is a cut with the same value on $G'$.
    \item If $u\in S, v\in T$, the minimum cut contains the edge between $u$ and $v$, the capacity of which is $+\infty$, therefore $\MFlow(G,s,t)=+\infty$, and thus $\MFlow(G,s,t)\ge \MFlow(G',s,t)$.
    \item If $u\in T, v\in S$, because $u$ only have one outgoing edge and it points to $v$, moving $u$ from $T$ to $S$ will not increase the cut, therefore $(S'\cup u',T')$ is a cut with the same value on $G'$.
\end{enumerate}
\end{proof}


\begin{lemma}\label{lemma:colorblack}
  Given a flow network $G$ and an edge $(u_1,u_2)\in E$. If $u_1,u_2$ are\eat{both} flow-in vertices, then for $(G',u')=\CR(G,u_1,u_2)$, $u'$ is also an flow-in vertex.
\end{lemma}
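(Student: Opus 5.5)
The claim is purely structural: if $u_1$ and $u_2$ are flow-in vertices, the super vertex $u'$ produced by $\CR(G,u_1,u_2)$ again satisfies the definition of a flow-in vertex. So I would check that definition directly for $u'$, by examining which out-edges $u'$ inherits. By construction of $\CR$, the out-edges of $u'$ are exactly the edges $(u',v)$ with $(u_1,v)\in E$ or $(u_2,v)\in E$, and $v\notin\{u_1,u_2\}$, since those two vertices are deleted. In other words, $N^{\mathsf{out}}(u') = (N^{\mathsf{out}}(u_1)\cup N^{\mathsf{out}}(u_2))\setminus\{u_1,u_2\}$.

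The plan is a case split on the out-degrees of $u_1$ and $u_2$. By definition, a flow-in vertex has either out-degree $0$, or out-degree $1$ with that single out-edge of capacity $+\infty$.

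First, suppose both have out-degree $0$. Then $u'$ has out-degree $0$, so it is flow-in.

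Second, suppose exactly one has out-degree $1$. The hypothesis gives an edge $(u_1,u_2)\in E$, so $u_1$ has out-degree at least $1$. Hence $u_1$'s unique out-edge is $(u_1,u_2)$ with capacity $+\infty$. This edge is removed by the compression, so $u'$ keeps only the out-edges of $u_2$. There is at most one such edge, and if it exists it has capacity $+\infty$, so $u'$ is flow-in.

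Third, suppose both have out-degree $1$. Again $u_1$'s only out-edge goes to $u_2$ and disappears. What remains is $u_2$'s single $+\infty$ out-edge, unless it points back to $u_1$, in which case it also disappears. Either way $\dego(u')\le 1$, and any remaining edge has capacity $+\infty$.

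The main obstacle is bookkeeping about capacities, because $\CR$ as written does not specify the capacity on a merged edge $(u',v)$. I would fix the convention that this capacity is the sum of the capacities of the parallel edges it replaces, so $+\infty$ if either contributor is $+\infty$. I would then note that in every case above at most one original out-edge survives to any neighbor $v$. So no summation of a finite and an infinite capacity ever occurs, and the surviving capacity is exactly $+\infty$. With that observation the case analysis closes, and the flow-out analogue used in Case b follows symmetrically by reversing edge directions.
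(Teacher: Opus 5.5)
Your proposal is correct and follows the same argument as the paper: $u_1$'s only out-edge is $(u_1,u_2)$, which the compression deletes (as it does $u_2$'s out-edge if that edge points back to $u_1$), so $u'$ inherits at most $u_2$'s single $+\infty$ out-edge. Your explicit capacity convention for merged edges, together with the observation that no finite and infinite capacities are ever summed, makes precise a point the paper leaves implicit. Your first case is vacuous, since $(u_1,u_2)\in E$ forces $\dego(u_1)=1$; this is harmless.
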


\begin{proof}
  \yunxiang{All edges going out from $u'$ is either from $u$ or $v$ from step 3 of combination. However, the only outgoing edge from $u$ points to $v$ is removed from step 1 (and if the outgoing edge of $v$ points to $u$, it's also removed), therefore $u'$ can either have no outgoing edge or have one outgoing edge with a capacity of $+\infty$ which originally points from $v$, satisfying the condition of a flow-in vertex.} 
\end{proof}

\begin{lemma}\label{lemma:colorwhite}
  \yunxiang{Given a flow network $G$ and an edge $(u_1,u_2)\in E$. If $u_1,u_2$ are\eat{both} flow-out vertices, then for $(G',u')=\CR(G,u_1,u_2)$, $u'$ is also a flow-out vertex.}
\end{lemma}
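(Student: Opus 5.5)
We prove Lemma~\ref{lemma:colorwhite} by mirroring the argument of Lemma~\ref{lemma:colorblack}, with the roles of incoming and outgoing edges exchanged. The target is to show that the super vertex $u'$ produced by $\CR(G,u_1,u_2)$ satisfies the definition of a flow-out vertex. That is, either $\degi(u')=0$, or $\degi(u')=1$ and the unique incoming edge has capacity $+\infty$.

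\textbf{Step 1: list the possible in-edges of $u'$.} By step 2 of the vertex compression, every in-edge $(v,u')$ of $G'$ arises from an in-edge $(v,u_1)$ or $(v,u_2)$ of $G$. Since $u_1$ is a flow-out vertex, it has at most one in-edge, and if it exists its capacity is $+\infty$. The same holds for $u_2$. Hence, before removing internal edges, the candidate in-edges of $u'$ are at most two, say $(w_1,u_1)$ and $(w_2,u_2)$, and both have capacity $+\infty$.

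\textbf{Step 2: eliminate the internal edge.} The hypothesis gives $(u_1,u_2)\in E$, which is an in-edge of $u_2$. Because $u_2$ has at most one in-edge, $(u_1,u_2)$ is exactly that edge, so $w_2=u_1$. Step 3 of the compression deletes all edges incident to $u_1$ and $u_2$, so this edge does not survive as an edge into $u'$; it would otherwise become a self-loop, which is excluded. Symmetrically, if $w_1=u_2$, the edge $(u_2,u_1)$ is also internal and is discarded. Consequently $u'$ has either no in-edge, or exactly one in-edge inherited from $(w_1,u_1)$ with $w_1\neq u_2$.

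\textbf{Step 3: check the capacity.} The surviving in-edge inherits capacity $C(w_1,u_1)=+\infty$. If the merge in step 2 of the compression combines parallel edges into $u'$, no parallel edge can arise here, because at most one external in-edge remains. Therefore $u'$ meets the flow-out definition.

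The only delicate point is Step 2. The argument has to use the hypothesis that $(u_1,u_2)$ is an edge, so that it exhausts $u_2$'s single allowed in-edge. It also has to confirm that the compression drops edges between $u_1$ and $u_2$ rather than turning them into a self-loop at $u'$. I would settle the latter by appealing to the paper's convention that self-loops are not considered.
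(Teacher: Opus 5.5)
Your proof is correct and is essentially the paper's argument. The paper reverses every edge, which swaps flow-in and flow-out vertices and turns $(u_1,u_2)$ into $(u_2,u_1)$, and then applies Lemma~\ref{lemma:colorblack} to the reversed graph; you instead carry out the mirrored in-edge argument directly, which spares you from checking that vertex compression commutes with edge reversal, and your handling of the internal edge and the inherited $+\infty$ capacity matches what the proof of Lemma~\ref{lemma:colorblack} does for out-edges.
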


\begin{proof}
  \yunxiang{This lemma can also be proven by the reverse graph. In the reverse graph, all flow-in vertices change into flow-out vertices, and vice versa. We can prove that $u'$ is a flow-in vertex on the reverse graph ${G'}^T$ by lemma \ref{lemma:colorblack}, therefore $u'$ on $G'$ is a flow-out vertex.}
\end{proof}

\begin{theorem}
    \yunxiang{Given a \TFNet{} $G^T=(V,E,C,T)$ and its \RTFNet{} after compression $\Transform=(\widehat{V}, \widehat{E}, \widehat{C})$, $|\widehat{V}|\le |E| + |V|$, $|\widehat{E}|\le 2|E|$.}

\end{theorem}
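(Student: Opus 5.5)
The plan is to bound the number of surviving vertices and edges by charging them to the \RTFNet{} structure and then to what network compression removes. Before compression, the size lemma above gives $|\widehat{V}|\le 2|E|$ and $|\widehat{E}|\le 3|E|-|V|$. Each vertex $u\in V$ with degree $\deg(u)$ contributes $\deg(u)$ copies $u^{\ts_1},\ldots,u^{\ts_{\deg(u)}}$, which form a vertical chain of $\deg(u)-1$ infinite-capacity edges. Every copy $u^{\ts_i}$ is incident to exactly one horizontal edge (in the simple-graph setting, each copy arises from one transaction), plus at most two vertical edges. So every copy is either the head of one horizontal edge, and hence a flow-out vertex whose only finite in-edge it has, or the tail of one horizontal edge, and hence a flow-in vertex. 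Apart from the chain endpoints, no copy is flow-crossing.

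The core step is a per-vertex chain argument. I would scan the chain of copies of $u$ in timestamp order and group maximal runs of consecutive copies of the same color, white (receiving, i.e.\ flow-out) or black (sending, i.e.\ flow-in). Two consecutive copies joined by a vertical edge are adjacent in $\Transform$. Two blacks merge by Case~a of Lemma~\ref{lemma:5cases}, two whites by Case~b, and Lemmas~\ref{lemma:colorblack} and \ref{lemma:colorwhite} show that the merged vertex keeps its color, so each run collapses to a single vertex.

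Next I would handle color changes along the chain. An adjacent black$\to$white pair, i.e.\ a flow-in copy followed by a flow-out copy, compresses by Case~c of Lemma~\ref{lemma:5cases}, which is flow-preserving by Lemma~\ref{lemma:gray}. So runs of the form \emph{black then white} also fuse. What can survive per original vertex $u$ is therefore an alternating pattern in which every white$\to$black transition remains. Each surviving white$\to$black boundary is witnessed by at least one receiving transaction followed later by a sending transaction. I would charge each surviving super-vertex to a distinct in-edge of $u$, namely the first horizontal in-edge of its white run, plus possibly one extra vertex per $u$ for a leading black run. This yields at most $\degi(u)+1$ vertices per $u$. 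Summing over $u$ gives $|\widehat{V}|\le\sum_u(\degi(u)+1)=|E|+|V|$.

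For edges, the horizontal edges are never created or duplicated by $\CR$; merging only relabels endpoints, and parallel edges would be coalesced. So at most $|E|$ horizontal edges remain. The remaining vertical edges form, per $u$, a path on the surviving copies of $u$, so there are at most (number of surviving copies of $u$) $-1\le\degi(u)$ of them. Summing gives at most $|E|$ vertical edges, and hence $|\widehat{E}|\le 2|E|$.

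The main obstacle I expect is making the charging argument rigorous. Algorithm~\ref{algo:compression} is greedy and order-dependent, so I would need to argue that the fixpoint does achieve the run-collapsing described above, namely that no valid merge is blocked by an earlier merge changing colors to gray. I would first try to show that merges within a single chain only ever produce black, white, or a gray vertex that still satisfies Cases~d or~e with its chain neighbors. That would mean the termination condition, ``no more vertices can be compressed'', forces every black-run, white-run and black$\to$white pair to be fused. If that fails, I would fall back to a coarser bound by counting only Case~a and Case~b merges, which already give $|\widehat{V}|\le 2|E|$, and then tighten the count using Case~c.
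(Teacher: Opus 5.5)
Your argument is essentially the paper's. Both proofs color each vertex's chain of copies, collapse monochromatic runs by Cases a and b, fuse each black run with the white run after it by Case c, and count survivors per vertex. The paper uses the per-vertex bound $\lfloor \deg(u)/2\rfloor+1$ rather than your one-sided degree plus one, and gets $|\widehat{E}|\le 2|E|$ by noting that each merge deletes a vertical edge, which is equivalent to your horizontal-plus-chain count.

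One slip needs fixing. A receiving copy (the head of its horizontal edge) has a finite in-edge and only an infinite vertical out-edge, so it is flow-in (black); sending copies are flow-out (white). A white run therefore contains no horizontal in-edge of $u$, so you must charge to out-edges instead. Also, the unpaired run is a trailing black run, not a leading one. With these corrections you get $\sum_u(\dego(u)+1)=|E|+|V|$, the same total.

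Your concern about the greedy order is easily settled; the paper simply exhibits a merge order and does not address it. Every merge in Cases a--e is along an infinite, hence vertical, edge, so merges stay within a chain. At termination no chain-adjacent pair can be black--black, white--white, black--white, black--gray or gray--white. Hence every chain consists of an optional leading white vertex, then gray vertices each containing at least one sending and one receiving copy, then an optional trailing black vertex. This gives the per-vertex bound directly.
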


\begin{proof}
    \yunxiang{By construction of \RTFNet{}, for a vertex $u\in V$, its copies form a sequence $\Seq=\{u^{\tau_1},u^{\tau_2},\dots,u^{\tau_{\deg(u)}}\}$ and each vertex is painted either black and white. The size of the sequence can be reduced from $k$ to $\lfloor\frac{k}{2}\rfloor+1$.}
    
    \yunxiang{We prove this by construction. First, combine all continuous vertices with the same color. Let $\Seq'$ denotes the sequence after this step, all adjacent vertices in $\Seq'$ then have different colors, and we have $|\Seq'|\le |\Seq|$. Next, we combine all black vertices with their succeeding white vertices in $\Seq'$. All vertices in $\Seq'$ except the possible leading white vertex and trailing black vertex combine with another vertex in this step. Let $\Seq''$ denote the sequence after this step. If $|\Seq'|$ is odd, there will be either a leading white vertex or a trailing black vertex, hence $|\Seq''|=\lfloor\frac{|\Seq'|}{2}\rfloor+1$. If $|\Seq'|$ is even, there will be either no nodes cannot be combined or a leading white vertex and a trailing black vertex at the same time, hence $|\Seq''|=\lfloor\frac{|\Seq'|}{2}\rfloor$ or $|\Seq''|=\lfloor\frac{|\Seq'|}{2}\rfloor+1$. Therefore, we have $|\Seq''|\le \lfloor\frac{|\Seq'|}{2}\rfloor+1\le \lfloor\frac{k}{2}\rfloor+1$.}
    
    \yunxiang{With this proven, there will be $\sum_{u\in V}\left(\lfloor\frac{deg(u)}{2}\rfloor+1\right)\le |E|+|V|$ vertices after compression. Also, for each pair of nodes combined, an edge with an infinite flow between the two combined nodes will also be removed, hence there will be at least $|E|-|V|$ edges removed, and $|\widehat{E}|\le 2|E|$.}
\end{proof}

\begin{manualproperty}{\ref{property:reachability}}
    Given two pairs of sources and sinks $(s_1,t_1)$ and $(s_2,t_2)$, and the corresponding maximum flows $f_1=$\textnormal{\MFlow}$(s_1, t_1)$ and $f_2=$\textnormal{\MFlow}$(s_2,t_2)$. If $f_1$ and $f_2$ are overlap-free, then $f_1+f_2=$\textnormal{\MFlow}$(\{s_1,s_2\}, \{t_1, t_2\})$.
\end{manualproperty}

\begin{proof}
\yunxiang{If $f_1+f_2  >  \MFlow(\{s_1,s_2\}, \{t_1, t_2\})$, we construct a flow $f'$ that $f'(u,v)=f_1(u,v)+f_2(u,v)$ with $|f'|=|f_1|+|f_2|>\MFlow(\{s_1,s_2\}, \{t_1, t_2\})$, and $f'$ follows all conditions in \ref{def:flow} because $\reach(s_1,t_2) = \mathsf{False}$ and $\reach(s_2,t_1) = \mathsf{False}$.} 
 \yunxiang{And if $\MFlow(s_1, t_1) + \MFlow(s_2,t_2)  <  \MFlow(\{s_1,s_2\}, \{t_1, t_2\})$, because $\reach(s_1,t_2) = \mathsf{False}$ and $\reach(s_2,t_1) = \mathsf{False}$, flow goes out from $s_1$ is equal to flow goes in to $t_1$, resp. $s_2$, $t_2$. Therefore either $f_{out}(s_1)>\MFlow(s_1,t_1)$ or $f_{out}(s_2)>\MFlow(s_2,t_2)$}
\end{proof}

%% file: 3.1-networdreduction.tex
\subsection{Network Reduction}\label{sec:reduction}
\jiaxin{
\begin{definition}[Temporal Flow]\label{def:tflow}
In a \TFNet{} $G = (V,E,C,\T)$, a temporal flow from a source $s \in V$ to a sink $t \in V$ is a pair of mapping functions $\left\langle f, \T \right\rangle$ satisfying the following conditions:
\begin{enumerate}[leftmargin=*]
    \item \stitle{Capacity Constraint:} \textnormal{Identical to Property~(1) in Def.~\ref{def:flow}.}
    \item \stitle{Flow Conservation:} \textnormal{At the final time $\ts_{\max}$, which marks the end of the period under consideration, the flow conservation property mirrors Property~(2) in Def.~\ref{def:flow}}.
    \item \stitle{\em Temporal Flow Constraint:} \textnormal{$\forall u \in V \setminus \{s,t\}$ and $\forall \ts \in [1,\ts_{\max}]$, the cumulative flow entering $u$ up to time $\ts$ is at least as much as the flow leaving $u$: \newline $\sum_{e=(v,u)\in E, \T(e) \leq \ts} f(v,u)$$\geq$$\sum_{e=(u,v)\in E, \T(e) \leq \ts} f(u,v)$.}
\end{enumerate}
\end{definition}
}

\jiaxin{
We propose a practical method to reduce the size of a \TFNet{}, $G$, to a smaller \TFNet{}, $G'$. We remark that the network reduction processing is \eat{assumed to have been }done before evaluating the \SDMF{} queries.
}

\jiaxin{
\stitle{Edge Reduction.} We note that no temporal flow can occur through an outdated outgoing (resp incoming) edge, as this would violate the temporal flow constraint (resp. flow conservation) when $\ts=\T(u,v)$ (resp. $\ts=\T(v,u)$). There are two kinds of outdated edges: $\forall u\in V$, a) its outgoing edge $e=(u,v)$ ($u\not \in \Src$ and $v\not \in \Dst$) is outdated if $\forall (v',u)\in E$, $\T(u,v) < \T(v',u)$; and b) its incoming edge $e=(v,u)$ ($v\not \in \Src$ and $u\not \in \Dst$) is outdated if $\forall (u,v')\in E$, $\T(v,u) > \T(u,v')$. All outdated edges are removed without affecting the answers to the \SDMF{} queries.
}

\jiaxin{
\stitle{Vertex Reduction.} A vertex $u$ is a \textit{flow inlet vertex} if its in-degree is $0$ and $u\not\in \Src$. $u$ is a \textit{flow outlet vertex} if its out-degree is $0$ and $u\not\in \Dst$. Such vertices do not participate in any temporal flow and are hence removed.
}

\begin{figure}[tb]
	\begin{center}
	\includegraphics[width=\linewidth]{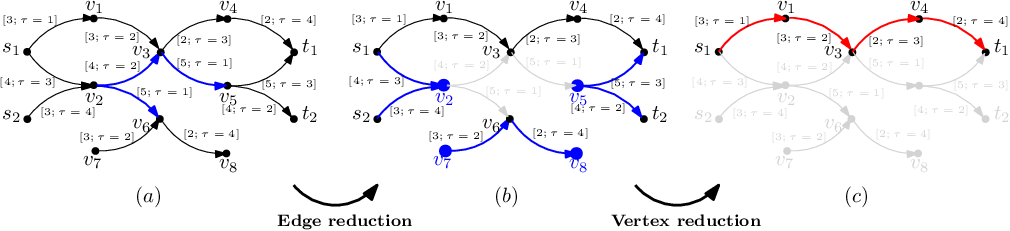}
		\end{center}
	\caption{A running example of network reduction}
	\label{fig:reduction}
\end{figure}

\jiaxin{
\stitle{Reduction Process.} The network reduction process in a \TFNet{}, $G$, can be summarized by two key steps. The first step identifies all outdated edges and flow inlet and outlet vertices with a time complexity of $O(|V| + |E|)$. Then, the second step removes these outdated edges and vertices and, subsequently, any isolated vertices. This process effectively reduces the network's size, making subsequent analysis more efficient.
}

\jiaxin{
\begin{example}
Consider a \TFNet{} with two sources, $\Src=\{s_1, s_2\}$, and two sinks, $\Dst=\{t_1,t_2\}$, as depicted in Figure~\ref{fig:reduction}(a). To reduce the network, we first identify outdated edges by examining the timestamps of each edge. Noting that $\T(v_2,v_6) = 1$ is less than both $\T(s_1,v_2) = 3$ and $\T(s_2,v_2) = 4$, edge $(v_2, v_6)$ is classified as outdated and is thus removed. Similarly, edges $(v_2,v_3)$ and $(v_3,v_5)$ are removed based on their timestamp comparisons as illustrated in Figure~\ref{fig:reduction}(b). In this process, $v_5$ and $v_7$ are identified as flow inlet vertices, and $v_2$ and $v_8$ as flow outlet vertices, leading to their removal. The reduced network is shown in Figure~\ref{fig:reduction}(c), on which the temporal flow \textnormal{\MFlow}$(\Src,\Dst) = 2$ is efficiently computed.
\end{example}
}